\documentclass[11pt,a4paper,reqno]{amsart}
\usepackage{amsthm,amsmath,amsfonts,amssymb,amsxtra,appendix,bookmark,euscript,delarray,dsfont,mathrsfs}
\usepackage{enumerate}
\usepackage{amssymb}
\usepackage[utf8]{inputenc}
\usepackage{xcolor}
\usepackage{graphicx}
\usepackage{upgreek}
\usepackage[normalem]{ulem}
\usepackage{ stmaryrd }

\allowdisplaybreaks

\usepackage{tikz}
\usetikzlibrary{decorations.pathreplacing, calc}

\theoremstyle{plain}
\newtheorem{theorem}{Theorem}
\newtheorem{lemma}{Lemma}
\newtheorem*{lemma*}{Lemma}
\newtheorem{corollary}{Corollary}
\newtheorem{proposition}{Proposition}[section]

\newtheorem*{conjecture*}{Conjecture}
\newtheorem{assumption}{Assumption}
\theoremstyle{definition}

\theoremstyle{remark}
\newtheorem{remark}{Remark}

\numberwithin{equation}{section}

\renewcommand{\d}{\mathrm{d}}                       

\newcommand{\dd}{\,\mathrm{d}}    
\newcommand{\ii}{\mathrm{i}}
\newcommand{\comm}[2]{\left[#1,#2\right]}

\DeclareMathOperator{\Ker}{Ker}

\DeclareMathOperator{\tr}{Tr}

\renewcommand{\leq}{\leqslant}
\renewcommand{\geq}{\geqslant}
\renewcommand{\ge}{\geqslant}
\renewcommand{\le}{\leqslant}

\def\eps {\varepsilon}

\renewcommand{\to}{\rightarrow}
\renewcommand{\phi}{\varphi}

\newcommand{\norm}[1]{\left\lVert #1 \right\rVert}

\renewcommand{\phi}{\varphi}
\newcommand{\ket}[1]{|#1 \rangle}       
\newcommand{\bra}[1]{\langle #1 |}      
\newcommand		{\lt}			{\left}				
\newcommand		{\rt}			{\right}

\newcommand		{\bangle}[1]	{\lt\langle #1\rt\rangle}
	
\newcommand		{\inprod}[2]	{\bangle{#1, #2}}

\newcommand\R{{\ensuremath {\mathbb R} }}
\newcommand{\CC}{{\ensuremath {\mathbb C} }}
\newcommand\N{{\ensuremath {\mathbb N} }}
\newcommand\Z{{\ensuremath {\mathbb Z} }}

\newcommand\1{{\ensuremath {\mathds 1} }}
\newcommand{\gH}{\mathfrak{H}}
\newcommand{\h}{\hbar}

\newcommand{\gS}{\mathfrak{S}}

\newcommand{\cD}{\mathcal{D}}

\newcommand{\cL}{\mathcal{L}}

\newcommand{\cI}{\mathcal{I}}

\newcommand{\bT}{\mathbb{T}}

\usepackage{tcolorbox}
\newtcolorbox{mybox}{colback=white,colframe=orange,boxrule=1.5pt,arc=3mm}

\title[Quantum Relative Entropy]{Quantum Relative Entropy and the Mean-Field Limit}

\author[G. Guo]{Gaoyue Guo}
\address{Université Paris-Saclay CentraleSupélec, Laboratoire MICS and CNRS FR-3487}
\email{gaoyue.guo@centralesupelec.fr}

\author[H. Liang]{Hao Liang}
\address{School of Mathematical Sciences, Peking University, Beijing, 100871, China}
\email{leunghao@stu.pku.edu.cn}

\author[Z. Wang]{Zhenfu Wang}
\address{Beijing International Center for Mathematical Research, Peking University, Beijing, 100871, China}
\email{zwang@bicmr.pku.edu.cn}

\begin{document}

\begin{abstract} 
We develop a quantum relative entropy method for the mean-field limit of quantum many-body systems. For closed systems governed by the von Neumann equation, we prove a quantitative stability estimate between the $N$-body density matrix and the tensorized solution of the Hartree equation. The argument is based on an entropy production identity, a cancellation mechanism for the centered two-body fluctuation, and a combinatorial estimate controlling the remaining mixed moments. As a consequence, we obtain propagation of chaos in trace norm for fixed marginals. We further combine the entropy estimate with known semiclassical Wasserstein bounds to derive a convergence estimate that is uniform in the Planck constant in an appropriate joint mean-field and semiclassical regime. Finally, we extend the method to finite-dimensional open quantum systems governed by Lindblad dynamics. In this setting, we establish an analogous relative entropy estimate for general bounded two-body interactions, where the mean-field potential is defined through partial trace. This shows that the entropy method does not rely on any special tensor-product decomposition of the interaction.

\end{abstract}

\subjclass[2020]{Primary 81V70; Secondary 35Q55, 35Q40, 81P16.}

\keywords{Many-body quantum dynamics, von Neumann equation, Hartree equation, quantum relative entropy, mean-field limit.}

\maketitle

\tableofcontents

\section{Introduction}

\subsection{Continuous $N$-Body Systems and Mean-Field Limits}

Consider a system of $N$ identical particles of mass $m$ confined within a $d$-dimensional physical domain $\Omega$. In particular, we consider two primary geometric settings: the flat torus of side length $L$, denoted by $\mathbb{T}_L^d\cong (\R/L\mathbb{Z})^d$, or the entire Euclidean space $\R^d$. In the case of the torus, $L$ naturally prescribes the periodicity and the volume of the configuration space, whereas in the Euclidean case, $L$ serves as a characteristic observation scale.

The evolution of the $N$-particle wave function $\Psi(t,x_1,\ldots,x_N)$ is governed by the linear Schrödinger equation:
\begin{equation}\label{eq:Schrodinger}
    \ii \hbar\partial_t \Psi=-\frac{\hbar^2}{2m}\sum_{j=1}^N\Delta_{x_j}\Psi+\sum_{1\leq j<k\leq N}V(x_j-x_k)\Psi,
\end{equation}
where $V$ is a real-valued, even potential function describing the interaction between particles and $\hbar$ is the reduced Planck constant. For $\Omega=\mathbb{T}_L^d$, the potential $V$ is assumed to be $L$-periodic.

To analyze the collective dynamics at macroscopic scales, we introduce a rescaling of the space-time coordinates. Let $\hat{x}_j=x_j/L$ be the dimensionless coordinates such that the new spatial domain is the unit torus $\mathbb{T}^d$ or $\R^d$, respectively. We also rescale time with a characteristic scale $T$ such that $t=T\hat{t}$.

We define a dimensionless parameter $\hat{\hbar}$, which we still call the Planck constant below, by $\hat{\hbar}:=\frac{\hbar T}{2mL^2}$, and rescale the interaction potential as follows:
\begin{equation*}
    \hat{V}(\hat{x}):=\frac{(N-1)T\hat{\hbar}}{\hbar}V(L\hat{x}).
\end{equation*}
This choice produces the usual \textit{mean-field scaling}, in which the interaction energy per particle remains of order one as $N\to\infty$.

Upon defining the rescaled wave function 
\begin{equation*}
    \hat{\Psi}(\hat{t},\hat{x}_1,\ldots,\hat{x}_N):=L^{\frac{dN}{2}}\Psi(t,x_1,\ldots,x_N),
\end{equation*}
the Schrödinger equation \eqref{eq:Schrodinger} takes the following dimensionless form:
\begin{equation*}
    \ii \hat{\hbar}\partial_{\hat{t}}\hat{\Psi}=-\hat{\hbar}^2\sum_{j=1}^N\Delta_{\hat{x}_j}\hat{\Psi}+\frac{1}{N-1}\sum_{1\leq j<k\leq N}\hat{V}(\hat{x}_j-\hat{x}_k)\hat{\Psi}.
\end{equation*}
With a slight abuse of notation, we will drop the hat. We study the Hamiltonian $H_{N,\hbar}$ defined by 
\begin{equation*}
    H_{N,\hbar}=\sum_{j=1}^N -\hbar^2\Delta_{x_j}+\frac{1}{N-1}\sum_{1\leq j<k\leq N}V(x_j-x_k),
\end{equation*}
and the time evolution of $N$-body mixed states, which are self-adjoint, positive trace-class operators on $L^2(\Omega^N)$ with trace one. The dynamics of such states is governed by the von Neumann equation (the generalization of the Schrödinger equation to density operators):
\begin{equation}\label{eq:introduction von Neumann}
    \ii \hbar\partial_t\Gamma_t^N=[H_{N,\hbar},\Gamma_t^N],
\end{equation}
which is the quantum analogue of the classical Liouville equation. 

We are mainly interested in the large-$N$ limit (also called the mean-field limit in our scaling) $N\to \infty$, and we also discuss a joint regime for the von Neumann equation in which $N\to \infty$ and $\hbar \to 0$. 

\subsubsection{Quantum mean-field limit: fixed $\hbar$ and large $N$.}

When $N$ is large, the full density matrix $\Gamma_t^N$ contains much more information than is typically observable. We are instead interested in the behavior of a few-body observables, which can be extracted via the \textit{reduced density matrices} (or marginals). For any $1\leq k\leq N$, we define the $k$-th marginal $\Gamma_t^{N:k}$ of $\Gamma_t^N$ by taking the partial trace over the remaining $N-k$ particles:
\begin{equation*}
    \Gamma_t^{N:k}:=\tr_{k+1,\ldots,N}(\Gamma_t^N).
\end{equation*}
In terms of its integral kernel, the first marginal $(k=1)$ is given by 
\begin{equation*}
    \Gamma_t^{N:1}(x,y):=\int_{\Omega^{N-1}}\Gamma_t^N(x,z_2,\cdots,z_N;y,z_2,\cdots,z_N)\dd z_2\cdots\dd z_N.
\end{equation*}

The central objective of mean-field theory is to show that as $N \to \infty$ (while $\hbar>0$ is held fixed), the $N$-particle dynamics decouple. Suppose that the initial data $\Gamma_0^N$ is permutation symmetric and approximately factorized, i.e. $\Gamma_0^N\approx \gamma_0^{\otimes N}$ for some one-particle density operator $\gamma_0$. The goal is to prove that this factorized structure is approximately preserved by the evolution \eqref{eq:introduction von Neumann}, such that 
\begin{equation}\label{eq:introduction convergence marginals}
    \Gamma_t^{N:k}\longrightarrow \ \gamma_t^{\otimes k}\quad {\rm as}\ N\rightarrow \infty,
\end{equation}
where $\gamma_t$ solves the (nonlinear) Hartree equation:
\begin{equation*}
    \ii \hbar\partial_t\gamma_t=[-\hbar^2\Delta+V\ast\rho_{\gamma_t},\gamma_t],
\end{equation*}
where $\rho_{\gamma_t}(x)=\gamma_t(x,x)$ denotes the spatial density of $\gamma_t$. 

To quantify the convergence of the marginals $\Gamma_t^{N:k}$ to the mean-field limit, one first needs to specify an appropriate metric. A natural choice is the \textit{trace norm} (or $\mathcal{L}^1$-norm), which is the quantum analogue of the \textit{total variation norm} for classical probability densities.

In the  mean-field limit  for classical interacting particle systems, Jabin and the third-named author introduced a relative entropy method for systems with bounded interaction kernels in \cite{jabin2016mean}. The method relies on the evolution of the relative entropy together with uniform-in-$N$ bounds for an associated partition function, and was later extended in \cite{jabin2018quantitative} to a wider class of singular interactions. These works provide a natural motivation for the present paper, where we investigate a quantum analogue of this entropic approach.

The quantum relative entropy, defined for two density operators $\Gamma$ and $\Gamma'$ by
\begin{equation}
    S(\Gamma,\Gamma'):=\tr(\Gamma\log\Gamma-\Gamma\log\Gamma'),
\end{equation}
is a natural quantum analogue of the classical relative entropy. It plays a central role in many-body analysis, providing a powerful tool for proving convergence and stability results. In \cite{lewin2021classical}, quantum relative entropy is used to control the deviation of reduced density matrices, weighted by a one-body operator, from those associated with a Gaussian state. This estimate is a key ingredient in the derivation of the quantum-to-classical limit.

We employ the quantum relative entropy to compare the solution of the von Neumann equation, $\Gamma_t^N$, with the $N$-fold tensor product $\gamma_t^{\otimes N}$ of the solution to the Hartree equation. Under suitable regularity assumptions on the initial data $\gamma_0$ and the finiteness of the initial relative entropy, we prove that the relative entropy remains controlled on every finite time interval. More precisely, we have 
\begin{equation*}
 S(\Gamma_t^N,\gamma_t^{\otimes N})\leq C_t \left(S(\Gamma_0^N,\gamma_0^{\otimes N}) +\log 2\right).
\end{equation*}
See Theorem \ref{tm: relative entropy bound}. As a corollary, we quantify the convergence \eqref{eq:introduction convergence marginals} in the trace norm for any fixed $k$ as $N\to \infty$.

\begin{remark}[Symmetry and Particle Statistics]
While the permutation symmetry of the initial state $\Gamma_0^N$ reflects the indistinguishable nature of the particles, we emphasize that the scaling $(N-1)^{-1}$ with fixed $\hbar$ and the resulting Hartree limit are specifically tailored for \textit{bosonic} mean-field theory. For fermionic systems, the Pauli exclusion principle leads to a different scaling of the kinetic energy relative to the particle number (typically $\hbar \sim N^{-1/d}$), necessitating a distinct mathematical treatment and leading to different effective models, such as the Hartree-Fock theory. The mean-field limit for fermions was first discussed by Elgart, Erd{\H{o}}s, Schlein, and Yau \cite{elgart2004nonlinear}. For general mixed-state initial data, see \cite{benedikter2016mean}. 
\end{remark}

For a fixed $\hbar>0$, the mean-field limit is by now well understood, including for singular interactions such as the Coulomb potential. The literature contains results on weak convergence \cite{bardos2002derivation, bardos2000weak, erdos2001derivation} and explicit rates of convergence in suitable norms \cite{chen2018rate, chen2011rate, grillakis2010second, kuz2015rate, mitrouskas2019bogoliubov,paul2019size,pickl2011simple,rodnianski2009quantum,deuchert2023dynamics}. See \cite{elgart2007mean} for a relativistic dispersion result of the mean-field limit.

\bigskip

\subsubsection{Uniformity in $\hbar$}

The joint limit $(N, \hbar) \to (\infty, 0)$ corresponds to the transition from the $N$-body quantum dynamics to the classical Vlasov equation. Since the von Neumann and Hartree equations formally converge to their classical counterparts as $\hbar \to 0$ (see \cite{lions1993mesures} for example), it is expected that $\Gamma_t^{N:1}$ converges to a probability density $f_t(x, v)$ on the phase space satisfying the Vlasov equation:
\begin{equation}\label{eq:introduction Vlasov}
    \partial_t f_t +  v \cdot \nabla_x f_t - \nabla_x (V \ast \rho_{f_t}) \cdot \nabla_v f_t = 0.
\end{equation}
The central challenge in this context is to obtain convergence estimates that are uniform in $\hbar$, or at least valid in a  regime where $\hbar$ depends on $N$.

The first rigorous derivation of the Vlasov equation \eqref{eq:introduction Vlasov} from quantum many-body systems was obtained by Narnhofer and Sewell \cite{narnhofer1981vlasov} in the case of smooth potentials and with $\hbar=N^{-1/3}$. Then, Spohn and Neunzert \cite{spohn1981vlasov} extended the result to the case of twice differentiable potentials. For the same kind of potentials, a more explicit rate of convergence without assuming $\hbar=N^{-1/3}$ was later obtained by Graffi, Martinez, and Pulvirenti \cite{graffi2003mean}, and by Golse and Paul \cite{golse2017schrodinger} in the quantum Wasserstein metrics (see also \cite{golse2016mean} for more about the quantum Wasserstein metrics). More recently, Chong, Lafleche, and Saffirio \cite{chong2024many} derived the Vlasov equation from many-body fermionic systems with mild singular interactions. 

As for uniformity in $\hbar$, Fr{\"o}hlich, Graffi, and Schwarz \cite{frohlich2007mean} established an estimate, uniform in $\hbar$, for an appropriate distance between the $N$-body quantum dynamics and the Hartree dynamics; this applies only to velocity-dependent interactions. Later, Golse, Paul, and Pulvirenti established a mean-field convergence rate of $1/N$ that is uniform in the Planck constant over short time intervals for analytic interaction potentials and initial data, utilizing a self-contained treatment of the BBGKY hierarchy; see Theorem 3.2 in \cite{golse2018derivation}. Alternatively, for more general initial data and interactions with Lipschitz forces, the authors employ the BBGKY hierarchy in conjunction with the quantum Wasserstein distance in \cite{golse2016mean} to establish an $\hbar$-uniform convergence rate of $1/\sqrt{\log\log N}$, see Theorem 3.1 in \cite{golse2018derivation}, which can be compared with our second main result, Theorem \ref{tm: uniform in h}, where we use the entropy method instead of the BBGKY hierarchy and obtain a slightly better convergence rate.

\bigskip

\subsection{The Lattice Framework: Open System Lindbladians.}
Markovian open quantum systems are commonly described by \textit{Lindblad equations}, which generalize the von Neumann equation to include dissipative effects. In his seminal 1976 work \cite{lindblad1976generators}, G. Lindblad established the definitive structure of the generators for completely positive, trace-preserving semigroups on a $*$-algebra (typically the space of bounded operators on a Hilbert space).

The canonical form of the generator---the so-called Lindbladian---is expressed as
\begin{equation*}
    \mathfrak{L}(\rho):=-\ii [H,\rho]+\sum_{j} \left( L_j\rho L^*_j-\frac{1}{2} (L^{*}_j L_j\rho+\rho L_j^* L_j)\right). 
\end{equation*}
A cornerstone of his derivation is the assumption of \textit{norm continuity} for the semigroup, which is equivalent to the requirement that the generator $\mathfrak{L}$ be a bounded linear map. See Theorems 1 and 2 in \cite{lindblad1976generators}. Under this analytical constraint, the Hamiltonian $H$ and the Lindblad operators $L_j$ must be bounded operators. While this framework provides the definitive algebraic structure for Markovian evolution, it poses a significant challenge when considering particles in continuous space, where the Hamiltonian typically involves unbounded operators such as the kinetic energy $-\Delta$.

To investigate the mean-field limit of such systems within a mathematically consistent framework that adheres to Lindblad's original setting, we use a lattice model, such as the \textit{Bose-Hubbard model}, as a guiding example. It is defined on a finite lattice $\Lambda\subset \Z^d$ with periodic boundary conditions. The single-particle Hilbert space is $L^2(\Lambda)\cong \CC^{|\Lambda|}$. The many-body Hamiltonian is given by
\begin{equation*}
    H_N=\sum_{j=1}^N -\Delta^{\mathrm{L}}_j+\frac{1}{N-1}\sum_{1\leq j<k\leq N}\delta_{x_j,x_k},
\end{equation*}
which is an operator defined on the $N$-particle Hilbert space $L^2(\Lambda)^{\otimes N}$. Here $-\Delta_j^{\mathrm{L}}$ denotes the lattice (weighted graph) Laplacian acting on the $j$-th tensor factor. For $u\in L^2(\Lambda)$,
\begin{equation*}
    -\Delta^{\mathrm{L}}u(x):=\sum_{y\in\Lambda,|y-x|=1}(u(x)-u(y)).
\end{equation*}
Let $\{\ket{k}\}_{k\in \Lambda}$ be the standard basis of $L^2(\Lambda)$. Then the on-site interaction can be written as the two-body operator
\[
    W=\sum_{k\in\Lambda}\ket{k}\bra{k}\otimes\ket{k}\bra{k},
\]
and the Hamiltonian becomes
\begin{equation*}
    H_N=\sum_{j=1}^N -\Delta^{\mathrm{L}}_j+\frac{1}{N-1}\sum_{1\leq j<k\leq N}W_{jk}.
\end{equation*}
In the main theorem for open systems, however, we do not need to restrict the two-body interaction to this particular on-site form. We work on a general finite-dimensional Hilbert space $\gH$ and allow an arbitrary bounded self-adjoint two-body interaction $W=W^*\in\cL(\gH\otimes\gH)$, assumed symmetric under exchange of the two tensor factors. The corresponding mean-field potential is then defined intrinsically by partial trace,
\[
    V^\gamma:=\tr_2\big((\1\otimes\gamma)W\big).
\]
This formulation includes the Bose-Hubbard interaction as a special case, but it avoids imposing an unnecessary tensor-product decomposition of $W$.

For a bounded operator $L$, we define the one-body dissipator
\begin{equation*}
    \cL_{L}(A):= LAL^*-\frac{1}{2}\left( L^* LA+AL^* L \right). 
\end{equation*}
We consider the $N$-body Lindblad equation
\begin{equation*}
    \partial_t\Gamma_t^N=-\ii [H_N,\Gamma_t^N]+\sum_{j=1}^N \cL_{L_j}(\Gamma_t^N),
\end{equation*}
where $L_j$ is the operator obtained by letting $L$ act on the $j$-th variable. The corresponding mean-field equation is the Hartree--Lindblad equation given by
\begin{equation*}
    \partial_t\gamma_t
    =
    -\ii[h+V^{\gamma_t},\gamma_t]+\cL_L(\gamma_t).
\end{equation*}

We again use quantum relative entropy to compare $\Gamma_t^N$ with $\gamma_t^{\otimes N}$, and prove an entropy stability estimate; see Theorem \ref{thm: finite main}.

\medskip

The main novelty of this paper is threefold. First, we identify a noncommutative centered two-body fluctuation whose mixed moments enjoy cancellations analogous to those in the classical relative entropy method of Jabin and the third-named author \cite{jabin2016mean,jabin2018quantitative}. Second, this cancellation allows us to obtain a full $N$-body quantum relative entropy stability estimate for general mixed states, which in turn yields propagation of chaos in trace norm for fixed marginals. Third, we show that the same entropy mechanism extends to finite-dimensional open quantum systems governed by Lindblad dynamics, with general bounded two-body interactions.

\subsection{Organization of the Paper.}

The remainder of this paper is organized as follows. In Section \ref{sec: main}, we formulate our three main results: Theorems \ref{tm: relative entropy bound}, \ref{tm: uniform in h}, and \ref{thm: finite main}. We note that Theorem \ref{tm: uniform in h} is presented as a consequence of Theorem \ref{tm: relative entropy bound}, while the proof of Theorem \ref{thm: finite main} follows a similar strategy to that of Theorem \ref{tm: relative entropy bound}, with the necessary modifications to handle the dissipative part. 

For the sake of structural clarity, Section \ref{sec: relative entropy proof} provides an outline of the proof framework for Theorem \ref{tm: relative entropy bound} and identifies the essential intermediate steps. These intermediate results are rigorously established in Sections \ref{sec: evolution entropy} through \ref{sec: propagation}. Additionally, Section \ref{sec: QRE} recalls various fundamental properties of the quantum relative entropy and the preliminary lemmas required for our subsequent computations. Finally, Sections \ref{sec: uniformity} and \ref{sec: Lindblad} are dedicated to the proofs of Theorems \ref{tm: uniform in h} and \ref{thm: finite main}, respectively.

\subsection*{Acknowledgments} H. Liang thanks Jacky J. Chong for helpful discussions.  This work was partially supported by the National Key R\&D Program of China (Project No.~2024YFA1015500).  G. Guo is supported by the ANR project MATH-SPA. H. Liang and Z. Wang were partially supported by the NSFC (Grant Nos.~12595282 and 12171009). 

\bigskip
\section{Main Results}\label{sec: main}

We use a common ambient framework that includes both finite- and infinite-dimensional single-particle spaces. Let 
\begin{equation*}
    \Omega\in\{\Lambda,\mathbb{T}^d,\R^d\},
\end{equation*}
where $\Lambda$ is a finite lattice with periodic boundary conditions, equipped with the counting measure, while $\mathbb{T}^d$ and $\R^d$ are equipped with Lebesgue measure. We write 
\begin{equation*}
    \gH:= L^2(\Omega),
\end{equation*}
for the corresponding single-particle Hilbert space. In the case $\Omega=\Lambda$, this is canonically identified with the finite-dimensional space $\CC^{|\Lambda|}$, whereas for $\Omega=\mathbb{T}^d$ or $\R^d$ it is infinite-dimensional. For each $N\geq 1$, let $\gH_N:=\gH^{\otimes N}\simeq L^{2}(\Omega^N)$ be the $N$-particle Hilbert space.

Let $\cL(\gH)$ (resp. $\cL^{1}(\gH)$) be the space of bounded (resp. trace-class) operators on $\gH$. We denote by $\cD(\gH)$ the set of density operators on $\gH$, i.e. operators $\gamma$ satisfying 
\begin{equation*}
    \gamma=\gamma ^{*}\geq 0,\quad \tr_{\gH}(\gamma)=1.
\end{equation*}
Since we only consider indistinguishable particles, it is natural to consider only symmetric density operators on $\gH_N$. To be precise, for each permutation $\pi\in\gS_N$, let $U_{\pi}$ be the unitary operator on $\gH_N$ defined by
\begin{equation*}
    (U_{\pi}\Psi_N)(x_1,\cdots,x_N):=\Psi_N(x_{\pi^{-1}(1)},\cdots,x_{\pi^{-1}(N)}).
\end{equation*}
We say that an $N$-particle density operator $\Gamma$ on $\gH_N$ is symmetric if
\begin{equation*}
    U_{\pi}\Gamma U_{\pi}^{*}=\Gamma, \quad {\rm for \ all}\ \pi\in\gS_N.
\end{equation*}
We denote by $\cD_s(\gH_N)\subset \cD(\gH_N)$ the set of symmetric $N$-particle density operators on $\gH_N$. Let $\Gamma^N\in\cD_s(\gH_N)$ and $k\leq N$. We define the $k$-th marginal $\Gamma^{N:k}$ of $\Gamma^N$ by taking the partial trace over the remaining $N-k$ particles:
\begin{equation*}
    \Gamma^{N:k}:=\tr_{k+1,\ldots,N}(\Gamma^N).
\end{equation*}

\subsection{Quantum Mean-Field Limit: Fixed $\hbar$ and Large $N$.}
In this section, we consider the continuous model, and we choose $\Omega=\mathbb{T}^d$ or $\R^d$. Our quantum system is described by the following mean-field type Hamiltonian:
\begin{equation*}
    H_{N}:=\sum_{j=1}^{N}h_{j}+\frac{1}{N-1}\sum_{1\leq j<k\leq N}V(x_{j}-x_{k}),
\end{equation*}
where $h$ is a self-adjoint operator on $\gH$ bounded from below, and the interaction $V$ is a radially symmetric real-valued function satisfying $\norm{V}_{L^{\infty}(\Omega)}<\infty$. By the Friedrichs extension theorem, $H_{N}$ is a well-defined self-adjoint operator on $\gH_N$ with the same domain as $\sum_{j=1}^{N}h_j$. 

Given initial data $\Gamma_0^N\in \cD_s(\gH_N)$, the normalized state $\Gamma^N_t$ evolves according to the von Neumann equation:
\begin{equation*}
    \mathrm{i}\partial_t\Gamma_t^N=[H_N,\Gamma_t^N].
\end{equation*}
Since the Hamiltonian $H_{N}$ is symmetric and the evolution is unitary, one can check that $\Gamma_t^N\in\cD_s(\gH_N)$ for all $t\geq 0$. 

We will work under the following assumptions on the initial data $\gamma_0$:

\begin{assumption}\label{asump of gamma}
Assume that the initial data $\gamma_0$ satisfies:
\begin{enumerate}
    \item $\gamma_0$ is a faithful density operator, i.e. ${\rm Ker}\gamma_0=0$;
    \item $\gamma_0$ has finite entropy, i.e. $\gamma_0\log\gamma_0=(\log\gamma_0) \gamma_0$ is trace class;
    \item There exists a constant $C_1>0$ such that for any $f\in W^{1,\infty}(\Omega)$, $$\norm{[\log\gamma_0,f]}_{\rm op}\leq C_1\norm{|\nabla f|}_{L^{\infty}(\Omega)};$$
    \item There exists $C_2>0$ such that $$\norm{[\log\gamma_0,h]}_{\rm op}\leq C_2.$$
\end{enumerate}
\end{assumption}

\begin{remark}
    We present a typical example of $\gamma_0$ satisfying Assumption \ref{asump of gamma}. Consider a Toeplitz operator (see \eqref{eq:Toeplitz quantization}) $\gamma_0=\mathrm{Op}_{\hbar}^{T}(\mu)$ on $\mathbb{T}^d$ generated by a heavy-tailed Borel measure $\mu(\d q, \dd p)=\rho(p)\dd p\dd q$. For instance, choosing the momentum density as $\rho(p)=\mathfrak{C}(1+|p|^2)^{-M}$ with $M>\frac{d}{2}$ ensures that $\rho\in L^{1}(T^*(\mathbb{T}^d))$. One can check from the definition of Toeplitz quantization that $[\log\gamma_0,f]$ is an operator of order zero, satisfying the Lipschitz estimate in (3) of Assumption \ref{asump of gamma}.
\end{remark}

\begin{remark}
    By a standard commutator estimate, one can check that $\gamma_0=\frac{1}{Z}\exp(-(1-\Delta)^{s/2})$ with $s\in (0,1]$ also satisfies these assumptions.
\end{remark}

\bigskip

We will use the relative entropy method. Given two density operators $\Gamma$ and $\Gamma'$, we denote by $S(\Gamma,\Gamma')$ their quantum relative entropy. The definition and the properties of quantum relative entropy will be recalled in Section \ref{sec: QRE}. We now state our first main result.

\begin{theorem}\label{tm: relative entropy bound}
    Assume that $V\in W^{1,\infty}(\Omega)$ is a radially symmetric real-valued function and that the initial data $\gamma_0$ satisfies Assumption \ref{asump of gamma}. Let $\Gamma_t^N$ and $\gamma_t$ be the solutions to the global von Neumann equation and the Hartree equation, respectively:
\begin{equation*}
    \mathrm{i} \partial_t \Gamma_t^N = [H_{N}, \Gamma_t^N], \quad\quad
    \mathrm{i} \partial_t \gamma_t = [h + V \ast \rho_{\gamma_t}, \gamma_t],
\end{equation*}
where $\rho_{\gamma_t}(x) = \gamma_t(x, x)$ is the spatial density of $\gamma_t$. Then, for any fixed $T\geq 0$, there exists a constant $C$, depending only on $C_1, C_2, T$ and $\norm{V}_{W^{1,\infty}(\Omega)}$, such that 
\begin{equation}\label{eq:relative entropy bound}
    \sup_{t\in[0,T]}S(\Gamma_t^N,\gamma_t^{\otimes N})\leq C \left(S(\Gamma_0^N,\gamma_0^{\otimes N}) +\log 2\right). 
\end{equation}
The constant $C$ can be explicitly computed, and it is given by 
$$C= \exp\left(8C_0 C_1\norm{\nabla V}_{L^{\infty}}T+16C_0\left( C_1\norm{\nabla V}_{L^{\infty}}+C_2\right)\norm{V}_{L^{\infty}}T^2 \right),$$
where $C_0$ is the constant in Proposition~\ref{prop: combinatorics}.
\end{theorem}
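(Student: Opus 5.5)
The plan is to differentiate $S_N(t):=S(\Gamma_t^N,\gamma_t^{\otimes N})$ in time and close a Grönwall-type inequality, following the classical scheme of \cite{jabin2016mean}. Since $\tr(\Gamma_t^N\log\Gamma_t^N)$ is conserved by the unitary von Neumann flow, only the cross term matters. Writing $G_t:=\log\gamma_t$, we have $\log\gamma_t^{\otimes N}=\sum_j (G_t)_j$, and $G_t$ itself solves $\ii\partial_t G_t=[h+V\ast\rho_{\gamma_t},G_t]$. Using cyclicity of the trace, I expect the entropy production identity
\begin{equation*}
\frac{\d}{\d t}S_N(t)=\ii\,\tr\Big(\Gamma_t^N\Big[\frac{1}{N-1}\sum_{j<k}V(x_j-x_k)-\sum_j (V\ast\rho_{\gamma_t})(x_j),\ \sum_j (G_t)_j\Big]\Big).
\end{equation*}
After symmetrization, the operator inside the commutator is $\frac{1}{N-1}\sum_{j\neq k}\tfrac12\,\phi_t(x_j,x_k)$ plus harmless terms, where
\[
\phi_t(x,y):=V(x-y)-V\ast\rho_{\gamma_t}(x)-V\ast\rho_{\gamma_t}(y)+\iint V\,\rho_{\gamma_t}\rho_{\gamma_t}
\]
is the centered two-body fluctuation. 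The commutator $[\phi_t(x_j,x_k),(G_t)_j]$ is controlled in operator norm by Assumption~\ref{asump of gamma}(3), propagated in time. So the first step is to prove $\norm{[G_t,f]}_{\rm op}\le C_1(t)\norm{\nabla f}_\infty$ and $\norm{[G_t,h]}_{\rm op}\le C_2(t)$, with $C_i(t)$ growing at most linearly in $t$ through $\norm{V}_{W^{1,\infty}}$. I would get these by writing $G_t=U_t^*G_0U_t$ for the Hartree propagator and using a Duhamel formula.

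Next I would use the Gibbs variational principle (Donsker–Varadhan duality for quantum relative entropy). For any self-adjoint $A$ and $\eta>0$,
\[
\tr(\Gamma A)\le \frac1\eta\Big(S(\Gamma,\gamma^{\otimes N})+\log\tr\exp\big(\log\gamma^{\otimes N}+\eta A\big)\Big),
\]
and the Golden–Thompson inequality bounds the last term by $\log\tr(\gamma^{\otimes N}e^{\eta A})$. I would take $A$ to be the self-adjoint operator $\frac{\ii}{N-1}\sum_{j,k}[\phi_t(x_j,x_k),(G_t)_j]$ and $\eta$ a fixed constant of order $1/\big(C_1(t)\norm{\nabla V}_{\infty}\big)$. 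It then suffices to show $\log\tr(\gamma_t^{\otimes N}e^{\eta A})\le C$, uniformly in $N$.

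This is the main obstacle, the noncommutative analogue of the uniform partition function bound in \cite{jabin2016mean}. The plan is to expand $e^{\eta A}$ in series, so that everything reduces to the mixed moments $\tr\big(\gamma^{\otimes N}A^p\big)$. The cancellation mechanism is that the partial trace of the two-body fluctuation against $\gamma$ in either variable vanishes: $\tr_2\big((\1\otimes\gamma)\,\cdot\,\big)=0$ for $\phi$, and likewise for $[\phi,G]$, using $[\gamma,G]=0$. Consequently, any word in the expansion in which some index appears exactly once has zero trace against $\gamma^{\otimes N}$. Counting the surviving index configurations, each of which carries a factor $(N-1)^{-p}$, should give $\tr(\gamma^{\otimes N}A^p)\le (C_0\,p\,\norm{A\text{-kernel}}/N)^{p/2}\,N^{p/2}\,p!$-type bounds. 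This is the content I expect from Proposition~\ref{prop: combinatorics}, which supplies the constant $C_0$. I would invoke it to obtain $\tr(\gamma^{\otimes N}e^{\eta A})\le 2$ for $\eta\le 1/(8C_0\,\cdots)$, which is what produces the $\log 2$ in the statement.

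The commutator with $h$ enters through the term $\ii\partial_t G_t$ and contributes the $C_2$ part. Combining the pieces gives
\[
S_N'(t)\le \big(8C_0C_1(t)\norm{\nabla V}_\infty\big)\big(S_N(t)+\log 2\big).
\]
Grönwall's lemma then yields the estimate of the theorem, and the linear-in-$t$ growth of $C_1(t)$ accounts for the $T^2$ term in the exponent.
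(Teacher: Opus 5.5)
Apart from one step, your architecture is the paper's. This covers the production identity, Gibbs plus Golden--Thompson, partial-trace cancellation, counting, and Gr\"onwall. The production identity should carry $-\ii$ rather than $+\ii$ in front of the commutator; this is harmless, since only $|\tr(\gamma^{\otimes N}A^p)|$ is used. Your doubly centered $\phi_t$ is cosmetic, because $[\phi_t(x_j,x_k),(G_t)_j]=[V(x_j-x_k)-(V\ast\rho_{\gamma_t})(x_j),(G_t)_j]=\ii X_{jk}$ in the paper's notation.

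\textbf{The gap: propagating the hypotheses themselves.} You plan to prove $\|[G_t,f]\|_{\rm op}\le C_1(t)\|\nabla f\|_{L^\infty}$ and $\|[G_t,h]\|_{\rm op}\le C_2(t)$ ``by Duhamel'', but neither estimate closes. Conjugating back, $[G_t,f]=U_t[G_0,U_t^*fU_t]U_t^*$, does not help: $U_t^*fU_t$ is no longer a multiplication operator, so Assumption~\ref{asump of gamma}(3) cannot be reapplied. Differentiating gives, with $H(t)=h+V\ast\rho_{\gamma_t}$,
\[
\partial_t[G_t,f]=-\ii[H(t),[G_t,f]]+\ii[G_t,[h,f]],\qquad \partial_t[G_t,h]=-\ii[H(t),[G_t,h]]+\ii[G_t,[V\ast\rho_{\gamma_t},h]].
\]
For $h=-\Delta$ the source terms contain first-order differential operators, and nothing controls their commutators with $G_t$. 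For an abstract $h$ they are not controlled at all. So there is no mechanism giving linear-in-$t$ constants in the Lipschitz form.

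\textbf{What the paper does instead.} It never propagates the Lipschitz form. It uses (3) only at $t=0$ and proves growth measured in $\|f\|_{L^\infty}$:
\[
\|[\log\gamma_t,f]\|_{\rm op}\le\|[\log\gamma_0,f]\|_{\rm op}+2\big(C_1\|\nabla V\|_{L^\infty}+C_2\big)\|f\|_{L^\infty}\,t .
\]
Taking $f=V(\cdot-y)-V\ast\rho_{\gamma_t}$ gives $\|X(t)\|_{\rm op}\le 2C_1\|\nabla V\|_{L^\infty}+4(C_1\|\nabla V\|_{L^\infty}+C_2)\|V\|_{L^\infty}t$. This is where $C_2$ and the $\|V\|_{L^\infty}T^2$ term of the stated constant come from. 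They do not come from $\ii\partial_tG_t$: the $h$-terms cancel exactly in $\delta H_N$.

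\textbf{A caution if you adopt the paper's route.} The natural identity is $\partial_t[G_t,f]=-\ii[[H(t),G_t],f]$, and it requires a bound on $\sup_{s\le t}\|[H(s),\log\gamma_s]\|_{\rm op}$. The paper's Duhamel formula bounds $\|[H(s),\log\gamma_0]\|_{\rm op}$ instead, because it takes $H(t)$ as the Heisenberg generator of $U_{t,0}^\dagger fU_{t,0}$. The actual generator is $U_{t,0}^\dagger H(t)U_{t,0}$. Bounding $\|[H(s),\log\gamma_s]\|_{\rm op}=\|\partial_s\log\gamma_s\|_{\rm op}$ for $s>0$ involves $\partial_s(V\ast\rho_{\gamma_s})$. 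That bound is not supplied by Assumption~\ref{asump of gamma} at time zero, so this is the estimate that really has to be proved.

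\textbf{Moment bound.} Your displayed bound carries a spurious $p^{p/2}$, which would make $\sum_p\eta^p\tr(\gamma^{\otimes N}A^p)/p!$ diverge for every $\eta>0$. The counting actually gives $|\tr(\gamma^{\otimes N}A^p)|\le (N-1)^{-p}|\mathcal{I}_{p,N}|\,\|X\|_{\rm op}^p\le (2C_0\|X\|_{\rm op})^p\,p!$. Its single $p!$ cancels the Taylor coefficient and leaves a geometric series, which equals $2$ at $\eta=(4C_0\|X\|_{\rm op})^{-1}$.
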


\begin{remark}
    It is instructive to compare our entropic approach with the well-known projection method developed by Pickl \cite{pickl2011simple}. Pickl's method is based on the analysis of the time evolution of a counting operator for particles in excited states, utilizing a \textit{Gronwall argument} to derive trace-norm convergence via a functional inequality analogous to the Pinsker inequality. While the functional chosen by Pickl serves a mathematical purpose similar to that of the quantum relative entropy, the fundamental distinction lies in the class of admissible states. The projection method is essentially restricted to \textit{pure states}, whereas the quantum relative entropy framework is inherently suited for treating general \textit{mixed states}.
\end{remark}

\bigskip

By the block subadditivity of quantum relative entropy and Pinsker's inequality (see Lemma \ref{lem:block} and Lemma \ref{lem:pinsker}), we have 
\begin{equation*}
    S(\Gamma_t^{N:k},\gamma_t^{\otimes k})\leq\frac{k}{N}S(\Gamma^N_t,\gamma_t^{\otimes N})
\end{equation*}
and 
\begin{equation*}
    \norm{\Gamma_t^{N:k}-\gamma_t^{\otimes k}}_{1}\leq \sqrt{2S(\Gamma_t^{N:k},\gamma_t^{\otimes k})},
\end{equation*}
where $\norm{\cdot}_1$ is the trace-class norm. We thus arrive at the following result:

\begin{corollary}
    Assume the same hypotheses as in Theorem \ref{tm: relative entropy bound}. Let $\{\Gamma_0^N\}_{N\geq1} \subset \cD_s(\gH_N)$ be a sequence of initial data whose evolution is governed by the von Neumann equation. If there exists a constant $C_3$ such that
    \begin{equation*}
        \sup_{N}S(\Gamma_0^N,\gamma_0^{\otimes N})\leq C_3
    \end{equation*}
    then, for fixed $T\geq 0$, we have 
    \begin{equation*}
        \sup_{t\in[0,T]}\norm{\Gamma_t^{N:k}-\gamma_t^{\otimes k}}_1\leq \sqrt{\frac{2kC(C_3+\log2)}{N}}
    \end{equation*}
    for all $1\le k\le N$. Here $C$ is the same constant as in Theorem \ref{tm: relative entropy bound}, which depends only on $C_1, C_2, T$ and $\norm{V}_{W^{1,\infty}(\Omega)}$.
\end{corollary}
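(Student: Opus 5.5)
The corollary follows by chaining three facts already stated: the entropy bound \eqref{eq:relative entropy bound}, block subadditivity (Lemma \ref{lem:block}), and Pinsker's inequality (Lemma \ref{lem:pinsker}).

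Fix $T\geq 0$, $N$, $1\le k\le N$ and $t\in[0,T]$. The hypotheses of Theorem \ref{tm: relative entropy bound} hold, and $C$ depends only on $C_1,C_2,T$ and $\norm{V}_{W^{1,\infty}(\Omega)}$, not on $N$ or $\Gamma_0^N$. Combining \eqref{eq:relative entropy bound} with the assumption $S(\Gamma_0^N,\gamma_0^{\otimes N})\le C_3$ gives
\begin{equation*}
    S(\Gamma_t^N,\gamma_t^{\otimes N})\le C\,(C_3+\log 2).
\end{equation*}

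Next we pass to the $k$-th marginal. Since $\Gamma_t^N\in\cD_s(\gH_N)$, because the unitary evolution preserves symmetry, and $\gamma_t^{\otimes N}$ is a product state, Lemma \ref{lem:block} applies and yields
\begin{equation*}
    S(\Gamma_t^{N:k},\gamma_t^{\otimes k})\le \frac{k}{N}\,S(\Gamma_t^N,\gamma_t^{\otimes N})\le \frac{kC(C_3+\log2)}{N}.
\end{equation*}
Applying Pinsker's inequality (Lemma \ref{lem:pinsker}) to the two density operators $\Gamma_t^{N:k}$ and $\gamma_t^{\otimes k}$ on $\gH_k$ gives
\begin{equation*}
    \norm{\Gamma_t^{N:k}-\gamma_t^{\otimes k}}_1\le\sqrt{2S(\Gamma_t^{N:k},\gamma_t^{\otimes k})}\le\sqrt{\frac{2kC(C_3+\log2)}{N}}.
\end{equation*}
The right-hand side is independent of $t\in[0,T]$, so taking the supremum over $t$ gives the claim.

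The only points to check are that $\gamma_t$ remains a density operator and that the marginals are well defined. Both follow from the well-posedness of the Hartree flow and the trace preservation of the von Neumann flow, which are implicit in the statement of Theorem \ref{tm: relative entropy bound}.
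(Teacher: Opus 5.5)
Your proposal is correct and follows exactly the paper's argument: the entropy bound \eqref{eq:relative entropy bound} combined with $S(\Gamma_0^N,\gamma_0^{\otimes N})\le C_3$, then block subadditivity (Lemma~\ref{lem:block}) applied to the symmetric state $\Gamma_t^N$, then quantum Pinsker (Lemma~\ref{lem:pinsker}). The constant you obtain, $\sqrt{2kC(C_3+\log 2)/N}$, is the one stated, and it is uniform in $t\in[0,T]$ as you note.
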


\bigskip

\subsection{Uniformity in $\hbar$.}

Let $\Omega=\mathbb{T}^d$ or $\R^d$. We consider the physically relevant case where the one-body operator is the quantum kinetic energy $-\hbar^2\Delta$, where $\h$ denotes the Planck constant, and $-\Delta$ is the Laplace operator on $\Omega$. 

We consider the evolution of a system of $N$ quantum particles interacting through an even two-body potential $\Phi$, described for any value of $\h>0$ by the $\h$-dependent von Neumann equation
\begin{equation}\label{eq:von-Neumann eqn-2}
    \mathrm{i}\hbar\partial_t\Gamma^N_t=[H_{N,\hbar},\Gamma^N_t],
\end{equation}
where the Hamiltonian $H_{N,\hbar}$ is given by
\begin{equation*}
    H_{N,\hbar}=\sum_{j=1}^{N}-\hbar^2\Delta_j+\frac{1}{N-1}\sum_{1\leq j<k\leq N}\Phi(x_j-x_k).
\end{equation*}
The corresponding mean-field equation is the $\h$-dependent Hartree equation
\begin{equation}\label{eq:Hartree eqn-2}
    \mathrm{i}\hbar\partial_t\gamma_t=[-\hbar^2\Delta+ \Phi\ast\rho_{\gamma_t},\gamma_t].
\end{equation}
Before stating our second main result, we recall the Husimi transform of a density operator and the corresponding Toeplitz quantization. 

We associate each point $z=(q,p)$ in the phase space $T^*\Omega\cong  \Omega\times \R^d$ with a vector $\ket{z}$ (coherent state) in the corresponding Hilbert space $L^2(\Omega)$ by
\begin{align*}
    \ket{z}(x):=
    \begin{cases}
    (\pi\hbar)^{-\frac{d}{4}}e^{-\frac{|x-q|^2}{2\hbar}}e^{\frac{\mathrm{i}p\cdot x}{\hbar}}, & \Omega=\R^d,\\
    \mathfrak{C}_z\sum_{k\in\Z^d}e^{-\frac{|x-k-q|^2}{2\hbar}}e^{\frac{\mathrm{i}p\cdot (x-k)}{\hbar}}, & \Omega=\bT^d.
    \end{cases}
\end{align*}
Here $\mathfrak{C}_z$ is a normalization constant such that $\norm{\ket{z}}_{L^2(\Omega)}=1$. We have the following resolution of identity:
\begin{equation}\label{eq:resolution of identity}
    \int_{T^*\Omega}\ket{z}\bra{z}\dd z=(2\pi\hbar)^d\1_{L^2(\Omega)}.
\end{equation}
Here and in the sequel, we use the Dirac bra-ket notation:
$$\bra{z}\Gamma\ket{z}:=\inprod{\ket{z}}{\Gamma\ket{z}}_{L^2(\Omega)},$$
and $\ket{z}\bra{z}$ is the orthogonal projector on $\ket{z}$. 

The Husimi transform of a density operator $\Gamma\in\cD(L^2(\Omega))$ is defined by
\begin{equation}\label{eq:Husimitrans}
    \mathscr{H}_{\hbar}[\Gamma](z):=\frac{1}{(2\pi\hbar)^d}\bra{z}\Gamma\ket{z},
\end{equation}
which associates to $\Gamma$ a probability density on the phase space $T^*\Omega$. Conversely, given a positive Borel measure $\mu$ on $T^*\Omega$, we can define the corresponding Toeplitz quantization by
\begin{equation}\label{eq:Toeplitz quantization}
    \mathrm{Op}_{\hbar}^T[\mu]:=\frac{1}{(2\pi\hbar)^d}\int_{T^*\Omega}\ket{z}\bra{z}\mu(\d z).
\end{equation}
We call an operator $\Gamma$ on $L^2(\Omega)$ a Toeplitz operator if there exists a positive Borel measure $\mu$ on $T^*\Omega$ and a positive constant $\eps_0$, such that $\Gamma=\mathrm{Op}_{\eps_0}^T[\mu]$.

One can check that $\tr(\mathrm{Op}_{\hbar}^T[\mu])=(2\pi\hbar)^{-d}\mu(T^*\Omega)$, which means the Toeplitz quantization maps Borel probability measures on the phase space to density operators on $L^2(\Omega)$ up to a constant $(2\pi\hbar)^d$. Elementary computations show that the Husimi transform and the Toeplitz quantization are dual to each other in the following sense:
\begin{equation*}
    \mathscr{H}_{\hbar}[\mathrm{Op}_{\hbar}^T[\mu]](z)=(2\pi\hbar)^{-d}\int_{T^*\Omega}e^{-\frac{|z-z'|^2}{2\hbar}}\mu(\d z').
\end{equation*}
Formally, the right-hand side converges to $\mu$ as $\hbar\rightarrow 0$. 

We will use Theorem \ref{tm: relative entropy bound}, together with the results in \cite{golse2016mean} and \cite{golse2018derivation}, to provide an $\hbar$-uniform result for the mean-field limit. We adopt the notations in \cite{golse2018derivation}. To characterize the convergence of $\Gamma_t$ to $\gamma_t^{\otimes N}$ uniformly in $\hbar$, we use the Wasserstein distance associated with the cost function $(z,z')\mapsto \min(1,|z-z'|)$ on the $j$-particle phase space $T^*(\Omega^j)\cong\Omega^j\times \R^{dj}$. 

To be precise, for Borel probability measures $\mu,\nu$ on $T^*(\Omega^j)$, the Wasserstein distance between $\mu$ and $\nu$ is defined by
\begin{equation*}
    \mathrm{dist}_1(\mu,\nu):=\inf_{\pi\in\Pi(\mu,\nu)}\int_{(T^*(\Omega^j))^2}\min(1,|z-z'|)\pi(\d z,\dd z'),
\end{equation*}
where $\Pi(\mu,\nu)$ is the set of all Borel probability measures on $T^*(\Omega^j)\times T^*(\Omega^j)$ with marginals $\mu$ and $\nu$.

\bigskip

Now we can state our second main result.
\begin{theorem}\label{tm: uniform in h}
    Assume that the interaction potential $\Phi$ is a real-valued, spherically symmetric function defined on $\Omega$, with $\Phi \in W^{1, \infty}(\Omega)$ and Lipschitz gradient $\nabla \Phi$.  
    
    Assume that $\gamma_0$ is a Toeplitz operator satisfying Assumption \ref{asump of gamma} with $h$ replaced by $-\Delta$, the Laplace operator on $\Omega$. Let $\gamma_t$ be the solution of the $\hbar$-dependent Hartree equation \eqref{eq:Hartree eqn-2} with initial data $\gamma_0$. For each $N\geq 1$, let $\Gamma^N_t$ be the solution of the $\hbar$-dependent von Neumann equation \eqref{eq:von-Neumann eqn-2} with initial data $\Gamma^N_0=\gamma_0^{\otimes N}$, and let $\Gamma_t^{N:k}$ be its $k$-th marginal for each $k=1,\dots,N$. Then, for any fixed $T\geq 0$ and $k\in\N$, we have
    \begin{equation*}
        \sup_{\hbar>0}\sup_{t\in[0,T]}\mathrm{dist}_1\left(\mathscr{H}_{\hbar}[\Gamma_t^{N:k}],\mathscr{H}_{\hbar}[\gamma_t^{\otimes k}]\right)^2\lesssim \frac{kdT\norm{\Phi}_{W^{1,\infty}}\left( e^{\Lambda T} +1\right) }{\left( \log N \right)^{\frac{1}{2}} }
    \end{equation*}
    for $N$ large enough. Here $\Lambda:=3+4\mathrm{Lip}(\nabla\Phi)^2$ is a constant depending only on $\Phi$.
\end{theorem}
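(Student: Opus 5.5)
The plan is to interpolate between two estimates with opposite behaviour in $\hbar$. The entropy bound of Theorem \ref{tm: relative entropy bound} is good for fixed $\hbar$ but degenerates as $\hbar\to0$. The semiclassical Wasserstein bounds of \cite{golse2016mean,golse2018derivation} are uniform in $\hbar$ but carry an additive error that is small only when $\hbar$ is small. We treat the two regimes separately and optimize the threshold $\hbar_*=\hbar_*(N)$.

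\emph{Large $\hbar$ regime.} We first rescale time, $t=\hbar s$, so that \eqref{eq:von-Neumann eqn-2}--\eqref{eq:Hartree eqn-2} become the equations of Theorem \ref{tm: relative entropy bound}. The new data are $h=-\hbar^2\Delta$, $V=\Phi/\hbar$ and final time $T/\hbar$. Equivalently, we keep time and divide the Hamiltonian by $\hbar$, which gives the one-body operator $-\hbar\Delta$ and interaction $\Phi/\hbar$.

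Assumption \ref{asump of gamma}(4) with $h=-\Delta$ gives $\norm{[\log\gamma_0,-\hbar\Delta]}\le \hbar C_2$. Item (3) is unchanged. Since $\Gamma_0^N=\gamma_0^{\otimes N}$, the initial entropy vanishes. The constant of Theorem \ref{tm: relative entropy bound} then becomes
\[
C_\hbar=\exp\bigl(8C_0C_1\hbar^{-1}\norm{\nabla\Phi}_{L^\infty}T+16C_0(C_1\hbar^{-1}\norm{\nabla\Phi}_{L^\infty}+\hbar C_2)\hbar^{-1}\norm{\Phi}_{L^\infty}T^2\bigr)\le e^{K/\hbar^2}
\]
for $\hbar\le1$, with $K$ depending on $T$ and $\norm{\Phi}_{W^{1,\infty}}$. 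By the Corollary,
\[
\norm{\Gamma_t^{N:k}-\gamma_t^{\otimes k}}_1\le \sqrt{2k\log2\,e^{K/\hbar^2}/N}.
\]

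To pass to the Wasserstein distance we use two facts. The Husimi transform is trace-norm contractive into $L^1$, i.e.\ it is a positive, trace-preserving map. Also $\mathrm{dist}_1(\mu,\nu)\le\norm{\mu-\nu}_{TV}$, because the cost is at most $1$. Together these give
\[
\mathrm{dist}_1^2\lesssim k\,e^{K/\hbar^2}/N .
\]

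\emph{Small $\hbar$ regime.} Here we invoke Theorem 3.1 of \cite{golse2018derivation}, which rests on the quantum Wasserstein machinery of \cite{golse2016mean}. For Toeplitz initial data and Lipschitz $\nabla\Phi$, it bounds $\mathrm{dist}_1(\mathscr H_\hbar[\Gamma_t^{N:k}],\mathscr H_\hbar[\gamma_t^{\otimes k}])^2$ by a quantity of the form
\[
C k\bigl(\hbar+N^{-1}\bigr)e^{\Lambda t}\ +\ C k\hbar
\]
(or similar), with $\Lambda=3+4\mathrm{Lip}(\nabla\Phi)^2$. The first term comes from the propagated quantum Wasserstein distance. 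The $O(\hbar)$ term comes from the comparison between the quantum Wasserstein distance and the $\mathrm{dist}_1$ distance of Husimi transforms; the factor $d$ and the additive $1$ in $e^{\Lambda T}+1$ also come from this comparison. I would re-read that proof to extract exactly the $\hbar$-dependent additive term before the $\log\log N$ optimization is performed there.

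\emph{Optimization.} Choose $\hbar_*$ with $K/\hbar_*^2=\tfrac12\log N$, i.e.\ $\hbar_*\sim(\log N)^{-1/2}$. For $\hbar\ge\hbar_*$ (and trivially for $\hbar\ge1$ after adjusting $K$), the entropy bound gives $\mathrm{dist}_1^2\lesssim kN^{-1/2}$. For $\hbar\le\hbar_*$, the semiclassical bound gives $\lesssim k(e^{\Lambda T}+1)(\log N)^{-1/2}$. Taking the maximum yields the claimed rate for $N$ large.

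The main obstacle is keeping track of the constants in the Golse--Paul--Pulvirenti bound. We must make sure its error is genuinely linear (or at least polynomial) in $\hbar$ with prefactor $kdT\norm{\Phi}_{W^{1,\infty}}(e^{\Lambda T}+1)$. The other thing to check is that the exponent in $C_\hbar$ is indeed $O(\hbar^{-2})$ uniformly for $\hbar\le1$. For $\hbar>1$ we must confirm that $C_\hbar$ stays bounded, using $\hbar C_2\cdot\hbar^{-1}=C_2$.
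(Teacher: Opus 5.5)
Your proposal follows the paper's proof. You divide by $\hbar$ and apply Theorem~\ref{tm: relative entropy bound} with $h=-\hbar\Delta$, $V=\Phi/\hbar$, pass to $\mathrm{dist}_1$ via block subadditivity, Pinsker and Husimi contractivity, interpolate with the $\hbar$-uniform bound $\mathrm{dist}_{\mathrm{MK},2}^2\le k(2d\hbar+C/N)e^{\Lambda t}+2kd\hbar$ (which the paper takes from formula (17) of Theorem 2.4 and Theorem 2.3(2) of \cite{golse2016mean}, not Theorem 3.1 of \cite{golse2018derivation}), and optimize at $\hbar\sim(\log N)^{-1/2}$; your threshold split is equivalent to the paper's crossing point of the two bounds.

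Two small fixes. First, the time rescaling $t=\hbar s$ gives $h=-\hbar^2\Delta$ with $V=\Phi$, not $\Phi/\hbar$. Second, to get the stated prefactor $T\norm{\Phi}_{W^{1,\infty}}$ rather than $\sqrt{K}$, fix $\hbar_*$ from the $\hbar^{-2}$ term alone, $16C_0C_1\norm{\nabla\Phi}_{L^\infty}\norm{\Phi}_{L^\infty}T^2/\hbar_*^2=\tfrac12\log N$, and absorb the lower-order terms of the exponent, which are $O(\sqrt{\log N})$ for $\hbar\ge\hbar_*$, by taking $N$ large, as the paper does.
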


It is useful to compare Theorem \ref{tm: uniform in h} with Theorem 3.1 of Golse--Paul--Pulvirenti \cite{golse2018derivation}, which builds on the quantum Monge--Kantorovich framework of Golse--Mouhot--Paul \cite{golse2016mean}. Under Toeplitz-type initial data and Lipschitz-force assumptions, Theorem 3.1 in \cite{golse2018derivation} gives an $\hbar$-uniform propagation-of-chaos estimate for the Husimi transforms in the bounded-cost distance $\mathrm{dist}_1$, with convergence rate of order $1/\sqrt{\log\log N}$ for $\mathrm{dist}_1$ itself, or equivalently order $1/\log\log N$ for $\mathrm{dist}_1^2$. Its proof interpolates an $\hbar$-dependent BBGKY estimate with the semiclassical quantum Wasserstein estimate from \cite{golse2016mean}. Our Theorem \ref{tm: uniform in h} has the same $\hbar$-uniform objective and is formulated in the same bounded-cost topology, but the fixed-$\hbar$ input is instead provided by the entropy estimate of Theorem \ref{tm: relative entropy bound}. Combining this entropy control with the quantum Wasserstein/Husimi estimate of \cite{golse2016mean} and optimizing in $\hbar$ yields the bound of order $(\log N)^{-1/2}$ for $\mathrm{dist}_1^2$ stated above. Thus, compared with Theorem 3.1 of \cite{golse2018derivation}, our argument replaces the BBGKY part of the interpolation by the relative entropy method and gives an improved logarithmic rate, under the assumptions of Theorem \ref{tm: uniform in h}.

The proof of Theorem \ref{tm: uniform in h} is given in Section \ref{sec: uniformity}.

\bigskip

\subsection{Open Lindbladian Systems.}

Let $\gH$ be a finite-dimensional complex Hilbert space. Let $h\in\cL(\gH)$ be a one-body self-adjoint operator, and let
$W=W^*\in\cL(\gH\otimes\gH)$ be a two-body self-adjoint interaction. We assume that $W$ is symmetric under exchange of the two tensor factors, namely
\[
    \mathsf{S}W\mathsf{S}=W,
\]
whenever $\mathsf{S}(u\otimes v)=v\otimes u$ denotes the flip operator on $\gH\otimes\gH$. We define the $N$-body Hamiltonian by

\begin{equation*}
    H_N:=\sum_{j=1}^N h_j+\frac{1}{N-1}\sum_{1\leq i<j\leq N}W_{ij}.
\end{equation*}
Here $W_{ij}$ denotes the operator $W$ acting on the $i$-th and $j$-th tensor factors.

For $L\in\cL(\gH)$, we define the one-body dissipator
\begin{equation*}
    \cL_{L}(A):= LAL^*-\frac{1}{2}\left( L^* LA+AL^* L \right). 
\end{equation*}
Given $\Gamma_0^N\in\cD_s(\gH_N)$, we consider the $N$-body Lindblad equation
\begin{equation}\label{eq:finite-Nbody-Lindblad}
    \partial_t\Gamma_t^N=-\ii [H_N,\Gamma_t^N]+\sum_{j=1}^N \cL_{L_j}(\Gamma_t^N),
\end{equation}
where $L_j$ is the operator obtained by letting $L$ act on the $j$-th variable. On the other hand, given $\gamma_0\in \cD(\gH)$, we consider the one-body Hartree--Lindblad equation
\begin{equation}\label{eq:finite-Hartree-Lindblad}
    \partial_t\gamma_t
    =
    -\ii[h+V^{\gamma_t},\gamma_t]+\cL_L(\gamma_t),
\end{equation}
where the mean-field potential is defined by
\begin{equation*}
    V^\gamma:=\tr_2\big((\1\otimes \gamma)W\big).
\end{equation*}
By the exchange symmetry of $W$, the same one-body operator is obtained by tracing over the first tensor factor:
\[
    V^\gamma=\tr_1\big((\gamma\otimes\1)W\big),
\]
after the natural identification of the two copies of $\gH$.

\bigskip

\begin{theorem}\label{thm: finite main}
Assume $LL^*=L^*L$ and there exists $m_0>0$ such that $\gamma_0\geq m_0 \1$. Let $\Gamma_t^N$ be a sufficiently regular solution of \eqref{eq:finite-Nbody-Lindblad} with initial data $\Gamma_0^N\in\cD_s(\gH_N)$, and let $\gamma_t$ be a sufficiently regular solution of \eqref{eq:finite-Hartree-Lindblad}. Define
\begin{equation}\label{eq:finite-def-CW-general}
    C_W^{(m_0)}
    :=
    \sup_{\substack{\gamma\in\cD(\gH)\\ \gamma\geq m_0\1}}
    \left\|
    -\ii\left[
    W-\tr_2((\1\otimes\gamma)W)\otimes\1,\,
    (\log\gamma)\otimes\1
    \right]
    \right\|_{\rm op}.
\end{equation}
Then $C_W^{(m_0)}<\infty$, and for any $t\geq0$ we have
\begin{equation*}
S(\Gamma_t^N,\gamma_t^{\otimes N})
    \leq
    e^{32 C_W^{(m_0)}t}
    \Big(S(\Gamma_0^N,\gamma_0^{\otimes N})+\log 2\Big).
\end{equation*}
Moreover, one has the explicit estimate
\begin{equation*}
    C_W^{(m_0)}\leq \frac{4\norm{W}_{\rm op}}{m_0}.
\end{equation*}
Consequently,
\begin{equation*}
S(\Gamma_t^N,\gamma_t^{\otimes N})
    \leq
    e^{\frac{128\norm{W}_{\rm op}t}{m_0}}
    \Big(S(\Gamma_0^N,\gamma_0^{\otimes N})+\log 2\Big).
\end{equation*}
\end{theorem}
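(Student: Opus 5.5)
We follow the strategy of Theorem \ref{tm: relative entropy bound}, adapted to the Lindblad setting. First we compute the entropy production. Write $G_t:=\gamma_t^{\otimes N}$ and $S_t:=S(\Gamma_t^N,G_t)=\tr(\Gamma_t^N\log\Gamma_t^N)-\tr(\Gamma_t^N\log G_t)$.

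Since $\gamma_0\geq m_0\1$ and $\gH$ is finite-dimensional, $\gamma_t$ stays faithful, and we may differentiate. We have $\log G_t=\sum_j(\log\gamma_t)_j$. By Duhamel's formula, $\tr(\Gamma\,\partial_t\log G)=\tr(\partial_t G\, G^{-1}\cdots)$ only in an integrated sense. We avoid computing $\partial_t\log G$ explicitly by pairing: the term $\tr(\Gamma_t^N\partial_t\log G_t)$ is handled via the identity
\[
\tfrac{d}{dt}S_t=\tr(\dot\Gamma\log\Gamma)-\tr(\dot\Gamma\log G)-\tr(\Gamma\,\partial_t\log G).
\]
The three terms are treated as follows.

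The Hamiltonian part of $\tr(\dot\Gamma\log\Gamma)$ vanishes. By Spohn's inequality (monotonicity of relative entropy under the CPTP semigroup generated by the dissipator $\sum_j\cL_{L_j}$), the dissipative contributions satisfy
\[
\tr(\cD\Gamma\log\Gamma)-\tr(\cD\Gamma\log G)-\tr(\Gamma\,\partial_t^{\rm diss}\log G)\leq 0,
\]
where $\partial_t^{\rm diss}\log G$ denotes the part of $\partial_t\log G$ generated by the one-body dissipator. This is the step where $LL^*=L^*L$ should enter. We would check it by noting that it is exactly the derivative of $S(e^{s\cD}\Gamma,e^{s\cD}G)$ when $\Gamma$, $G$ are propagated by the same dissipative flow. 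Since $e^{s\cD}G=(e^{s\cL_L}\gamma)^{\otimes N}$ remains a product, the split of the generator into Hamiltonian plus dissipative parts is legitimate at the level of first derivatives.

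The remaining Hamiltonian contribution is
\[
\ii\tr\Big(\Gamma\,\Big[H_N-\sum_j(h+V^{\gamma_t})_j,\;\log G\Big]\Big).
\]
Here $\log G$ is a sum of one-body terms, so $h$ cancels, and we obtain
\[
\frac{1}{N-1}\sum_{i\neq j}\tr\big(\Gamma\, K_{ij}\big),\qquad K:=-\ii\,[\,W-V^{\gamma}\otimes\1,\,(\log\gamma)\otimes\1\,].
\]
Thus the derivative is bounded by $\frac{1}{N-1}\sum_{i\ne j}\tr(\Gamma K_{ij})$ with $\|K\|_{\rm op}\leq C_W^{(m_0)}$, provided $\gamma_t\geq m_0\1$ is propagated. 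If it is not, we replace $m_0$ by a time-dependent lower bound; we expect that normality of $L$ makes $\cL_L$ unital and hence keeps the spectrum bounded away from zero.

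The key cancellation is $\tr_2\big((\1\otimes\gamma)K\big)=0$, since $\tr_2((\1\otimes\gamma)(W-V^\gamma\otimes\1))=0$ and $\log\gamma$ acts only on factor $1$. Then we apply the Gibbs variational (Donsker–Varadhan) inequality:
\[
\tr(\Gamma X)\leq \tfrac{1}{\eta}S(\Gamma,G)+\tfrac{1}{\eta}\log\tr\exp(\log G+\eta X),
\]
with $X=\frac{1}{N-1}\sum K_{ij}$ and $\eta$ of order one (e.g.\ $\eta=1/(16C_W)$). The main obstacle is bounding $\log\tr\exp(\log G+\eta X)\leq\log 2$ uniformly in $N$. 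In the noncommutative setting we would use Golden–Thompson only where valid, or rather the Peierls–Bogoliubov/Lieb triple-matrix inequality, to reduce to moments $\tr(G X^{m})$ expanded over products of $K_{i_lj_l}$. We would then invoke the combinatorial estimate (Proposition~\ref{prop: combinatorics} in its general-operator form): the centered condition kills every term in which some index $j$ appears exactly once as a second index, giving $\tr(GX^{2m})\leq (C_0 m\, C_W^2\eta^2)^m$-type bounds and a geometric sum $\leq 2$.

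Gronwall's lemma with $\frac{d}{dt}S\leq 32C_W(S+\log2)$ then gives the estimate. Finally, $\|K\|\leq 2\|W-V^\gamma\otimes\1\|\,\|\log\gamma\|$ is too crude, so instead we bound the commutator with $\log\gamma$ via the integral representation
\[
\log\gamma=\int_0^\infty\big[(1+s)^{-1}-(\gamma+s)^{-1}\big]ds,
\]
so that $[A,\log\gamma]=\int_0^\infty(\gamma+s)^{-1}[A,\gamma](\gamma+s)^{-1}ds$. Using $\gamma\geq m_0$ and $\gamma\leq 1$, this yields $\|[A,\log\gamma]\|\leq 2\|A\|/m_0$. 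With $\|W-V^\gamma\otimes\1\|\leq2\|W\|$ we get $C_W^{(m_0)}\leq 4\|W\|_{\rm op}/m_0$.
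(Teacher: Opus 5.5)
Your overall route is the paper's: an entropy-production inequality obtained from monotonicity of $S$ under a common CPTP flow, then the Gibbs variational inequality, Golden--Thompson, a moment expansion with cancellation and counting, Gronwall, and the resolvent bound for $C_W:=C_W^{(m_0)}$. Your variant of the first step is fine. You apply Spohn's inequality only to the time-independent semigroup $e^{s\sum_j\cL_{L_j}}$ and handle $H^{\rm mf}_N(t)$ by the commutator--log identity, whereas the paper puts $H^{\rm mf}_N(t)$ into the common propagator $\widetilde{\cL}_{N,t}$.

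The genuine gap is in the cancellation step. You only prove $\tr_2((\1\otimes\gamma)K)=0$. This kills a mixed moment only when some second-slot index $j_\nu$ occurs in no other position. Proposition~\ref{prop: combinatorics}, however, counts $\cI_{m,N}$, whose condition (2) also requires every first-slot index $i_\nu$ to be repeated. With your rule alone the counting breaks.

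Already for $m=2$, the configurations $(i_1,j_1,i_2,j_2)=(a,c,b,c)$ with $a,b,c$ distinct survive your rule. There are $\sim N^3$ of them against the prefactor $(N-1)^{-2}$, so they contribute $O(N)$. Centering in the second slot alone does not make them vanish: $K=a\otimes b$ with $\tr(\gamma b)=0\neq\tr(\gamma a)$ gives $(\tr\gamma a)^2\tr(\gamma b^2)\neq0$.

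The missing ingredient is the first-slot cancellation, which does not use the centering at all. For every $Y$ on $\gH\otimes\gH$, cyclicity of $\tr_1$ with respect to operators of the form $B\otimes\1$ gives
\[
\tr_1\big((\gamma\otimes\1)[Y,(\log\gamma)\otimes\1]\big)=\tr_1\big(([\log\gamma,\gamma]\otimes\1)\,Y\big)=0 .
\]
With both rules, the non-vanishing index pairs lie in $\cI_{m,N}$ and the counting applies.

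Several further points need fixing.

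\emph{Where normality enters.} Normality of $L$ plays no role in the Spohn step, since data processing needs only a CPTP map. It is used exactly once, through $\cL_L(\1)=LL^*-L^*L=0$. Then $\gamma_t-m_0\1$ solves the same linear equation with $V^{\gamma_t}$ frozen as a given potential, whose propagator is CP, so $\gamma_t\ge m_0\1$ for all $t$ (Remark~\ref{rmk:finite-uniform-lower-bound-gammaT}). You should prove this rather than expect it. Your fallback of a time-dependent lower bound would not give the time-uniform constant in the exponent.

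\emph{Golden--Thompson.} The bound $\tr e^{\log G+\eta A}\le\tr(Ge^{\eta A})$ holds for any two self-adjoint matrices, so there is nothing to hedge. Peierls--Bogoliubov gives a lower bound and is useless here.

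\emph{Moment bounds and the choice of $\eta$.} The counting gives $|\tr(GA^m)|\le(2C_0C_W)^m\,m!$ for every $m$. Odd moments do not vanish (e.g.\ $K_{12}K_{23}K_{31}$), and the growth is factorial, not sub-Gaussian. So the exponential series is geometric only for small $\eta$. With $C_0=8$, your $\eta=(16C_W)^{-1}$ gives ratio $1$ and a divergent series. The choice $\eta=(32C_W)^{-1}$ gives $\log 2$ and exactly the rate $32C_W$.

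\emph{The bound you dismissed.} The estimate you call too crude is not: $\|K\|\le 2\cdot2\|W\|\,\|\log\gamma\|\le 4\|W\|\log(1/m_0)$. This already implies $C_W\le 4\|W\|/m_0$, since $\log(1/m_0)<1/m_0$.
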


\begin{remark}
The tensor-product assumption
\[
    W=\sum_{\alpha=1}^M F_\alpha\otimes G_\alpha
\]
is not needed for the entropy estimate itself. It is only a convenient sufficient condition for obtaining an explicit bound on the fluctuation operator. Indeed, if such a decomposition is chosen, then
\[
    C_W^{(m_0)}
    \leq
    \frac{4}{m_0}
    \sum_{\alpha=1}^M
    \norm{F_\alpha}_{\rm op}\norm{G_\alpha}_{\rm op}.
\]
Thus the previous form of the constant is recovered as a special case. In finite dimensions, every two-body operator admits a finite tensor-product expansion, but the entropy argument does not require the first tensor factors to belong to a fixed commutative algebra.
\end{remark}

\begin{remark}\label{rmk:finite-uniform-lower-bound-gammaT}
    Under the assumption that $\gamma_0\geq m_0 \1$, we have 
    \begin{equation*}
      \gamma_t\geq m_0\1,\qquad \forall t\geq 0.
    \end{equation*}
    Indeed, let
\[
\widetilde\gamma_t:=\gamma_t-m_0\1.
\]
Since $\1$ commutes with all operators and $\cL_L(\1)=0$, we see that $\widetilde\gamma_t$ satisfies the same evolution equation as $\gamma_t$: 
\[
\partial_t\widetilde\gamma_t
=
-\ii[h+V^{\gamma_t},\widetilde\gamma_t]+\cL_L(\widetilde\gamma_t).
\]
Because $\widetilde\gamma_0\geq 0$ and the evolution is positivity preserving, we conclude that $\widetilde\gamma_t\geq 0$, proving the claim.
\end{remark}

\begin{remark}\label{rmk:finite-propagation-of-chaos}
   As in the continuous case, we can also obtain convergence of the $k$-th marginal $\Gamma_t^{N:k}$ to $\gamma_t^{\otimes k}$ in trace norm, with an explicit rate depending on $N$ and $k$. We omit the details here. 
\end{remark}

The proof of Theorem \ref{thm: finite main} is given in Section \ref{sec: Lindblad}.

\bigskip
\section{Quantum Relative Entropy Method: Proof of Theorem \ref{tm: relative entropy bound}.}\label{sec: relative entropy proof}

This section gives the structure of the proof of Theorem \ref{tm: relative entropy bound}. We first derive the entropy production identity and rewrite the production term as an average of two-body fluctuation operators $X_{ij}$. We then prove a cancellation rule for mixed moments of these fluctuations, combine it with a combinatorial counting estimate, and establish the propagation of the commutator regularity needed to keep $X(t)$ bounded along the Hartree flow. These ingredients are finally inserted into the entropy inequality and closed by a Gronwall argument.

\subsection{Evolution of Quantum Relative Entropy.}
Define
\begin{equation*}
    \delta H_N(t)=\frac{1}{N-1}\sum_{1\leq j<k\leq N}V(x_j-x_k)- \sum_{j=1}^{N}(V\ast\rho_{\gamma_t})_j.
\end{equation*}
We have the following result.

\begin{proposition}\label{prop: evolution entropy}
    Let $\Gamma^N_t$, $\gamma_t$, and $V$ be as in Theorem \ref{tm: relative entropy bound}. We have 
    \begin{equation*}
        \frac{\dd}{\dd t}S(\Gamma_t^N,\gamma_t^{\otimes N})=\tr_{\gH_N}\left( \Gamma_t^N A_t^N \right) \,,
    \end{equation*}
    where 
    \begin{equation*}
        A_t^N=-\ii \left[\delta H_N(t),\log\gamma_t^{\otimes N} \right] .
    \end{equation*}
\end{proposition}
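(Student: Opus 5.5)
We compute the time derivative of the relative entropy by splitting $S(\Gamma_t^N,\gamma_t^{\otimes N})=\tr(\Gamma_t^N\log\Gamma_t^N)-\tr(\Gamma_t^N\log\gamma_t^{\otimes N})$ and differentiating each piece separately.

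\emph{The von Neumann entropy term.} Since $\Gamma_t^N=e^{-\ii tH_N}\Gamma_0^Ne^{\ii tH_N}$ is a unitary conjugation, its spectrum is constant in time. Hence $\tr(\Gamma_t^N\log\Gamma_t^N)$ does not depend on $t$, and this piece contributes nothing. This avoids any differentiation of $\log\Gamma_t^N$.

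\emph{The cross term.} The Hartree flow is also unitary: $\gamma_t=U_t\gamma_0U_t^*$, where $U_t$ is generated by the time-dependent one-body Hamiltonian $h_t:=h+V\ast\rho_{\gamma_t}$. Consequently $\log\gamma_t=U_t(\log\gamma_0)U_t^*$, so $\log\gamma_t$ satisfies the same linear equation $\ii\partial_t\log\gamma_t=[h_t,\log\gamma_t]$. Since $\log\gamma_t^{\otimes N}=\sum_j(\log\gamma_t)_j$, we get
\begin{equation*}
\ii\partial_t\log\gamma_t^{\otimes N}=\Big[\sum_{j}(h_t)_j,\log\gamma_t^{\otimes N}\Big].
\end{equation*}
Applying the product rule and cyclicity of the trace gives
\begin{equation*}
\frac{\dd}{\dd t}\tr(\Gamma_t^N\log\gamma_t^{\otimes N})=-\ii\,\tr\Big(\Gamma_t^N\big[H_N-\textstyle\sum_j (h_t)_j,\log\gamma_t^{\otimes N}\big]\Big).
\end{equation*}
Here $H_N-\sum_j(h_t)_j=\delta H_N(t)$, because the kinetic parts $h_j$ cancel. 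With the minus sign from the definition of $S$, this yields $\frac{\dd}{\dd t}S=\tr(\Gamma_t^NA_t^N)$ with $A_t^N=-\ii[\delta H_N(t),\log\gamma_t^{\otimes N}]$, as claimed.

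\emph{Main obstacle: justification.} The operators $h$ and $\log\gamma_t$ are unbounded, so the trace manipulations need care. The first step is to show that $A_t^N$ is bounded. By Assumption~\ref{asump of gamma}(3), propagated in time, each commutator $[V(x_j-x_k),(\log\gamma_t)_j]$ is bounded by $C\norm{\nabla V}_{L^\infty}$. The same assumption bounds $[(V\ast\rho_{\gamma_t})_j,(\log\gamma_t)_j]$, since $\norm{\nabla(V\ast\rho)}_{L^\infty}\le\norm{\nabla V}_{L^\infty}$. Thus the final expression makes sense.

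The difficult step is making the derivative rigorous. Rather than differentiate directly, we would write the cross term in the interaction picture as
\begin{equation*}
\tr\big(\Gamma_0^N\, e^{\ii tH_N}\mathcal U_t^{\otimes N}(\log\gamma_0)^{\otimes\Sigma}\mathcal U_t^{*\otimes N}e^{-\ii tH_N}\big),
\end{equation*}
and establish the identity in integrated form:
\begin{equation*}
S(\Gamma_t^N,\gamma_t^{\otimes N})-S(\Gamma_0^N,\gamma_0^{\otimes N})=\int_0^t\tr(\Gamma_s^NA_s^N)\dd s.
\end{equation*}
To do this, we first replace $\log\gamma_0$ by bounded truncations, for instance $\log(\gamma_0+\eps)$, or by spectral cutoffs. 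For the truncated operators the computation above is legitimate, via Duhamel's formula for bounded perturbations of unitary groups. We then pass to the limit using the finite-entropy hypothesis, Assumption~\ref{asump of gamma}(2), together with monotone convergence, assuming $S(\Gamma_0^N,\gamma_0^{\otimes N})<\infty$; otherwise the statement is trivial. Assumption~\ref{asump of gamma}(4) ensures that $[h,\log\gamma_t]$ stays bounded, so the commutators with the kinetic term are well defined during this regularization.
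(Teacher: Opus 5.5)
Your argument has the same skeleton as the paper's. You split $S$ into $\tr(\Gamma_t^N\log\Gamma_t^N)$ and the cross term, show the first does not contribute, use $\ii\partial_t\log\gamma_t^{\otimes N}=[\sum_j(h+V^{\gamma_t})_j,\log\gamma_t^{\otimes N}]$, and combine by cyclicity so that the kinetic parts cancel. The difference lies in how the two key sub-steps are obtained.

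\begin{itemize}
\item The paper goes through Lemma~\ref{lem:RE-derivative}. That lemma differentiates $\log\Gamma_t$ via the Fr\'echet derivative and uses $\tr(\Gamma\,\partial_t\log\Gamma)=\tr(\partial_t\Gamma)=0$. The paper then removes $\tr(-\ii[H_N,\Gamma]\log\Gamma)$ by commutation, and derives the equation for $\log\gamma_t$ from Lemmas~\ref{lem:FrechetLog} and~\ref{lem:comm-log}.
\item You use the exact unitarity of both flows instead. Spectral invariance makes $\tr(\Gamma_t^N\log\Gamma_t^N)$ constant. The identity $\log\gamma_t=U_t(\log\gamma_0)U_t^*$, which the paper itself uses in Section~\ref{sec: propagation}, gives the Heisenberg equation directly.
\end{itemize}

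Your route is shorter and slightly more robust. You never need $\Gamma_t^N$ to be faithful, which Lemma~\ref{lem:RE-derivative} assumes but the theorem does not provide. You also avoid applying Lemma~\ref{lem:comm-log}, which is stated for bounded $B$, to $B=h+V^{\gamma}$ with $h$ unbounded. The paper's derivative formula has the advantage of not relying on unitarity. Your interaction-picture formulation, in which only the bounded defect $\delta H_N(t)$ drives the remaining dynamics, is a sensible route to a rigorous integrated identity; the paper does not attempt this.

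\textbf{Sign error.} Your displayed formula for the cross term has the wrong sign. With $L=\log\gamma_t^{\otimes N}$ one has $\tr(-\ii[H_N,\Gamma_t^N]L)=\ii\tr(\Gamma_t^N[H_N,L])$ and $\tr(\Gamma_t^N\partial_tL)=-\ii\tr(\Gamma_t^N[\sum_j(h_t)_j,L])$. Hence $\frac{\dd}{\dd t}\tr(\Gamma_t^NL)=+\ii\tr(\Gamma_t^N[\delta H_N(t),L])$. Your $-\ii$, combined with the minus sign in $S$, would give $-\tr(\Gamma_t^NA_t^N)$. Your final formula is the correct one, so only the intermediate line needs fixing.

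\textbf{Justification sketch.} Several points need adjusting.
\begin{itemize}
\item Assumption~\ref{asump of gamma}(4) is not needed for commutators with the kinetic term, since these never appear in the interaction picture. Its actual role is to propagate Assumption~\ref{asump of gamma}(3) along the Hartree flow (Proposition~\ref{prop: propagation of regularity}). That propagated bound also carries an extra term linear in $t$.
\item The $\eps\to0$ limit on the right-hand side is more delicate than you indicate. Assumption~\ref{asump of gamma}(3) controls $[f,\log\gamma_t]$, but not $[f,\log(\gamma_t+\eps)]$ uniformly in $\eps$. The functions $y\mapsto\log(e^y+\eps)$ are not operator Lipschitz uniformly in $\eps$ on $(-\infty,0]$. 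You would need a regularization that preserves the commutator bound, or a Duhamel identity for the unregularized $\log\gamma_0$ in the sense of quadratic forms.
\item Splitting $S$ into two finite pieces requires $\tr(\Gamma_0^N\log\Gamma_0^N)>-\infty$. This does not follow from $S(\Gamma_0^N,\gamma_0^{\otimes N})<\infty$ in infinite dimensions. Lemma~\ref{lem:RE-derivative} also assumes finite entropy, so this gap is shared with the paper rather than specific to your route.
\end{itemize}
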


The proof of Proposition \ref{prop: evolution entropy} is given in Section \ref{sec: evolution entropy}. We then decompose $A_t^N$ into a sum of two-body fluctuations. For simplicity, we omit the subscript $t$ and use the shorthand $V^{\gamma}=V\ast\rho_{\gamma}$, $V_{ij}=V(x_i-x_j)$.

Using $\log\gamma^{\otimes N}=\sum_{k=1}^N \log\gamma_k$ and $\delta H = \frac{1}{N-1}\sum_{i<j}V_{ij}-\sum_i(V^\gamma)_i$,
we expand
\[
A=-\ii\comm{\delta H}{\log\gamma^{\otimes N}}
=
-\ii\comm{\frac{1}{N-1}\sum_{i<j}V_{ij}}{\sum_k (\log\gamma)_k}
+\ii\comm{\sum_i (V^\gamma)_i}{\sum_k (\log\gamma)_k}.
\]
Since $V_{ij}$ only acts on legs $i,j$, one has $[V_{ij},(\log\gamma)_k]=0$ unless $k\in\{i,j\}$, hence
\[
-\ii\comm{\frac{1}{N-1}\sum_{i<j}V_{ij}}{\sum_k (\log\gamma)_k}
=
-\frac{\ii}{N-1}\sum_{i<j}\Big(\comm{V_{ij}}{(\log\gamma)_i}+\comm{V_{ij}}{(\log\gamma)_j}\Big).
\]
Also, since $(V^\gamma)_i$ commutes with $(\log\gamma)_k$ unless $k=i$,
\[
\ii\comm{\sum_i (V^\gamma)_i}{\sum_k (\log\gamma)_k}
=
\sum_{i=1}^N \ii\comm{(V^\gamma)_i}{(\log\gamma)_i}.
\]
Therefore,
\[
A
=
-\frac{\ii}{N-1}\sum_{i<j}\Big(\comm{V_{ij}}{(\log\gamma)_i}+\comm{V_{ij}}{(\log\gamma)_j}\Big)
+\sum_{i=1}^N \ii\comm{(V^\gamma)_i}{(\log\gamma)_i}.
\]
Now use the identity $$\frac{1}{N-1}\sum_{i<j}\left(\ii\comm{(V^\gamma)_i}{(\log\gamma)_i}+\ii\comm{(V^\gamma)_j}{(\log\gamma)_j}\right)=\sum_{i=1}^N \ii\comm{(V^\gamma)_i}{(\log\gamma)_i}$$
and $V_{ij}=V_{ji}$. Grouping terms yields 
\begin{equation*}
    A=\frac{1}{N-1}\sum_{1\leq i\neq j\leq N}X_{ij}
\end{equation*}
where $X$ is a two-body fluctuation operator defined by 
\begin{equation}\label{eq: def of two-body fluctuation}
    X=-\ii \left[ V-(V\ast\rho_{\gamma})\otimes\1,(\log\gamma)\otimes\1 \right].
\end{equation}
Here we view the first $V$ in the bracket as a two-body multiplication operator defined by $V(x-y)$.

\begin{remark}
    The operator $X_{ij}$ should be compared with the centered two-particle fluctuation $\phi(x_i,x_j)$ appearing in the classical relative entropy method of Jabin and the third-named author \cite{jabin2016mean,jabin2018quantitative}. In the classical setting, subtracting the averaged interaction produces cancellations under the tensorized reference measure. Here $X_{ij}$ is the corresponding non-commutative object: the mean-field contribution is subtracted at the operator level, and the resulting centered quantity is then commuted with $\log\gamma$.
\end{remark}

\bigskip

\subsection{General Cancellation Rule for the Fluctuation Operator.}
Let $\gamma\in\cD(\gH)$ be faithful with finite entropy, and let $V$ be bounded. We also assume that the fluctuation operator $X$ defined as in \eqref{eq: def of two-body fluctuation} is bounded, so that \eqref{eq:cancellation} below is well-defined. 

Since the diagonal part of $V$ is subtracted, one can expect a certain cancellation rule for $X$. We have the following result.

\begin{proposition}[General cancellation rule]
Let $(I_m,J_m)$ be any pair of multi-indices, where $I_m=(i_1,\dots,i_m)$ and $J_m=(j_1,\dots,j_m)$ are two sequences of integers in $\{1,\dots,N\}$. We assume that $i_\nu\neq j_\nu$ for all $1\leq \nu\leq m$. Then we have
\begin{equation}\label{eq:cancellation}
    \tr_{\gH_N} \left( \gamma^{\otimes N}X_{i_{1}j_{1}}X_{i_{2}j_{2}}\cdots X_{i_{m}j_m} \right)=0 
\end{equation}
provided that one of the following statements is satisfied:
\begin{enumerate}
    \item There exists an index $i_{\nu}\notin \{i_1,\cdots,i_{\nu-1},i_{\nu+1},\cdots,i_m\}\cup \{j_1,\cdots,j_m\} $; 
    \item There exists an index $j_{\nu}\notin \{j_1,\cdots,j_{\nu-1},j_{\nu+1},\cdots,j_m\}\cup \{i_1,\cdots,i_m\} $.
\end{enumerate}
\label{prop:cancellation}
\end{proposition}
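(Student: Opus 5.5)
Both cases rest on two one-leg identities for the fluctuation operator $X$ of \eqref{eq: def of two-body fluctuation}: one for tracing out the first leg, one for tracing out the second, each against $\gamma$.

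\emph{Identity for the first leg.} For case (1), an index $i_\nu$ appears only once in total, as the first leg of $X_{i_\nu j_\nu}$. So leg $k:=i_\nu$ is acted on by exactly one factor. Every other factor acts trivially on leg $k$ and commutes with $\gamma_k$. The claim is that
\[
\tr_1\big((\gamma\otimes \1)X\big)=0 \quad\text{as an operator on the second leg.}
\]
Write $X=-\ii[U,(\log\gamma)\otimes\1]$ with $U:=V-(V\ast\rho_\gamma)\otimes\1$. Since $\gamma$ commutes with $\log\gamma$, cyclicity of the partial trace in the first leg gives
\[
\tr_1\big((\gamma\otimes\1)[U,\log\gamma\otimes\1]\big)=\tr_1\big([\gamma\log\gamma\otimes\1,U]\big)=0 .
\]
Here we use that partial-trace cyclicity holds for operators of the form $B\otimes\1$, with $\gamma\log\gamma$ trace class by Assumption~\ref{asump of gamma}. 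Notice that this step does not even need the centering.

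\emph{Reducing case (1) to the identity.} We write $\gamma^{\otimes N}=\gamma_k\otimes\gamma^{\otimes(N-1)}_{\hat k}$ and take the partial trace over leg $k$ first. All factors other than $X_{kj_\nu}$ are of the form $\1_k\otimes(\cdot)$. We can therefore move $\gamma_k$ next to $X_{kj_\nu}$, using that $\gamma_k$ commutes with those factors together with cyclicity of the full trace, which requires boundedness of $X$. The product becomes
\[
\tr\Big(\gamma^{\otimes(N-1)}_{\hat k}\,P\,\tr_k\big(\gamma_k X_{kj_\nu}\big)\,Q\Big)=0,
\]
where $P$ and $Q$ are the products of the factors to the left and right of $X_{kj_\nu}$. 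A cleaner way to justify this is to note that $\tr_k(\gamma_k\, P X_{kj_\nu} Q)=P\,\tr_k(\gamma_k X_{kj_\nu})\,Q$ because $P$ and $Q$ do not touch leg $k$.

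\emph{Identity for the second leg (case (2)).} Now $k:=j_\nu$ appears only as the second leg of $X_{i_\nu k}$, and we need
\[
\tr_2\big((\1\otimes\gamma)X\big)=0 .
\]
Here $(\log\gamma)\otimes\1$ does not act on leg 2, so it factors out of the partial trace:
\[
\tr_2\big((\1\otimes\gamma)X\big)=-\ii\Big[\tr_2\big((\1\otimes\gamma)U\big),\log\gamma\Big].
\]
Next compute the partial trace of $U$. The operator $V$ is multiplication by $V(x-y)$, so
\[
\tr_2\big((\1\otimes\gamma)V\big)(x)=\int V(x-y)\gamma(y,y)\,\d y=(V\ast\rho_\gamma)(x),
\]
as a multiplication operator. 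Since $\tr\gamma=1$, we also have $\tr_2\big((\1\otimes\gamma)((V\ast\rho_\gamma)\otimes\1)\big)=V\ast\rho_\gamma$. Hence $\tr_2\big((\1\otimes\gamma)U\big)=0$, and this is exactly where the centering is used. The reduction from the $N$-body trace to this identity is the same as in case (1).

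\emph{Main obstacle.} The delicate step is rigor in infinite dimensions. One must justify the partial-trace cyclicity and the factoring of $\log\gamma$, which is unbounded, out of $\tr_2$. My plan is to first expand $X$ as $-\ii(U\log\gamma_1-\log\gamma_1 U)$ and treat $\gamma\log\gamma$ as trace class. Alternatively, I would use a spectral cutoff $\1_{\gamma>\varepsilon}$ and pass to the limit using boundedness of $X$ and dominated convergence on traces. For the main argument I will do the computation formally and then add this approximation remark.
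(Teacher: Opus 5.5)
Your proof is correct and is essentially the paper's argument. Both reduce to a one-leg partial trace against $\gamma$, pull out the factors that do not touch that leg, and then use two identities at the same formal level of rigor: $\tr_1\big((\gamma\otimes\1)X\big)=0$, which follows from $[\gamma,\log\gamma]=0$, and $\tr_2\big((\1\otimes\gamma)X\big)=0$, which follows from the centering $\tr_2\big((\1\otimes\gamma)V\big)=V\ast\rho_\gamma$. Your bookkeeping is actually cleaner than the paper's, which says ``without loss of generality (1)'' and then splits on $i_\nu<j_\nu$ versus $i_\nu>j_\nu$; since $X$ is not symmetric under exchange of its legs, pairing the first identity with condition (1) and the centering identity with condition (2), as you do, is the correct assignment.
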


The proof of Proposition \ref{prop:cancellation} is given in Section \ref{sec: cancellation}. This result indicates that numerous terms vanish when computing the mixed moment of $X$ in the state $\gamma^{\otimes N}$. Moreover, we have the following combinatorial result.

\begin{proposition}\label{prop: combinatorics}
Given positive integers $m, N$, let $\mathcal{I}_{m,N}$ denote the set of pairs of multi-indices $(I_m, J_m)$ where $I_m = (i_1, \dots, i_m)$ and $J_m = (j_1, \dots, j_m)$ with $i_\nu, j_\nu \in \{1, \dots, N\}$ and such that the following conditions hold: 
\begin{enumerate}
    \item $i_\nu \neq j_\nu$ for all $\nu \in \{1, \dots, m\}$.
    \item For every $\nu$, either there exists $\nu'$ such that $i_\nu = j_{\nu'}$, or there exists $\nu'' \neq \nu$ such that $i_\nu = i_{\nu''}$.
    \item For every $\nu$, either there exists $\nu'$ such that $j_\nu = i_{\nu'}$, or there exists $\nu'' \neq \nu$ such that $j_\nu = j_{\nu''}$.
\end{enumerate}
Then there exists a universal constant $C_0$ such that
\begin{equation*}
    |\cI_{m,N}| \leq (C_0N)^m m!.
\end{equation*}
\end{proposition}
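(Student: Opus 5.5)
We count by the set of distinct values used. Fix $(I_m,J_m)\in\cI_{m,N}$ and let $U=\{i_1,\dots,i_m,j_1,\dots,j_m\}$ be the set of distinct integers appearing, with $k=|U|$. The $2m$ entries fill $2m$ slots. Conditions (2) and (3) say that every value in $U$ is hit at least twice among those slots. Indeed, a value occupying slot $i_\nu$ must reappear either as some $j_{\nu'}$ or as some $i_{\nu''}$ with $\nu''\neq\nu$, and in both cases this is a different slot. The same holds for values in $J$-slots. Hence $k\le m$.

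The count then splits into two factors. The first is the choice of the set of values, at most $\binom{N}{k}\le N^k/k!$. The second is a surjection from the $2m$ slots onto $U$ in which every fiber has size at least $2$ (I drop condition (1), since we only need an upper bound). The number of such surjections onto a labelled $k$-set is at most $k!\,S(2m,k)$, where $S$ is the Stirling number of the second kind; we only need a crude bound. I plan to bound it as follows. A set partition of $\{1,\dots,2m\}$ into $k$ blocks, each of size $\ge 2$, is determined by listing the slots in order and recording, for each slot, whether it opens a new block or joins an existing one, and if it joins, which one. This gives at most $2^{2m}k^{2m-k}$ partitions. Labelling the blocks by the elements of $U$ costs $k!$, and this cancels against the $1/k!$ coming from $\binom Nk$. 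So
\[
|\cI_{m,N}|\le \sum_{k=1}^{m} N^k\, 4^{m} k^{2m-k}.
\]
Using $N^k\le N^m$ for $N\ge1$, it remains to show $k^{2m-k}\le C^m m!$ for $k\le m$.

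This last inequality is the main obstacle: the crude estimate $k^{2m-k}$ at $k=m$ gives $m^m\approx e^m m!$, which is fine, but for smaller $k$ the exponent $2m-k$ is larger. Taking logarithms, the function $(2m-k)\log k$ is maximized near $k\approx 2m/\log m$, where it can reach about $2m\log m-2m\log\log m$. That is larger than $\log m!\approx m\log m$, so the naive bound fails. I would therefore refine the count and use the block-size constraint more sharply. The number of partitions of $2m$ slots into $k$ blocks of size $\ge2$ is at most $\binom{2m}{k}k^{2m-2k}\cdot(\text{ordering factor})$: choose the $k$ "first-occurrence" slots, then assign each of the remaining $2m-k$ slots to a block. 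The better approach, though, is to keep the $1/k!$ from $\binom Nk$ rather than cancel it, and to use $k!\,S(2m,k)$ directly. The total is then $\sum_k \binom Nk k!S(2m,k)\le \sum_k N^k S(2m,k)$. For the associated Stirling numbers (all blocks of size $\ge2$) with $k\le m$, one has $S_{\ge2}(2m,k)\le \binom{2m}{2k}\,(2k)!/(2^k k!)\cdot k^{2m-2k}$. Here $(2k)!/(2^kk!)$ counts the perfect matchings that fix one pair per block, and $k^{2m-2k}$ places the remaining slots. Combined with $N^k\le N^m$, the goal reduces to $\binom{2m}{2k}\frac{(2k)!}{2^kk!}k^{2m-2k}\le C^m m!$. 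With $j=m-k$, this becomes roughly $\frac{(2m)!}{(2j)!\,2^k k!}k^{2j}$. Since $(2m)!\le 4^m (m!)^2$, this is at most $4^m m!\cdot\frac{m!}{k!}\cdot\frac{k^{2j}}{(2j)!}$, and using $m!/k!\le m^{j}$ it is at most $4^m m!\,\frac{(mk)^{j}}{(2j)!}\le 4^m m!\,\frac{m^{2j}}{(2j)!}\le 4^m e^{m} m!$, since $m^{2j}/(2j)!\le e^m$. This closes the bound with $C_0=4e\cdot 2$, summing over $k$ via $m\le 2^m$. I expect the delicate point to be exactly this refined counting. I would double-check the overcounting in the "one pair per block plus free placement" encoding, since any overcount only helps.
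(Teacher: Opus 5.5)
There is a genuine gap, and it is in the step ``combined with $N^k\le N^m$''. After that step you are trying to prove an $N$-independent inequality, $\binom{2m}{2k}\frac{(2k)!}{2^k k!}k^{2m-2k}\le C^m m!$ for all $1\le k\le m$, and this inequality is false. Your verification of it contains a slip: one has $\frac{m!}{k!}\cdot k^{2j}\le m^j k^{2j}$, not $(mk)^j$. With the correct factor, at $k=j=m/2$ the quantity $\frac{m!}{k!}\cdot\frac{k^{2j}}{(2j)!}$ grows like $m^{m/2}$ times an exponential.

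No sharper partition count can repair this. Already $S_{\ge 2}(2m,k)\ge k^{2m-2k}$: put slots $\nu$ and $k+\nu$ in the same block for each $\nu\le k$, and distribute the remaining $2m-2k$ slots freely. For $k\approx m/\log m$ this lower bound is $\exp\bigl(2m\log m-2m\log\log m+O(m)\bigr)$, far larger than $C^m m!$. So your diagnosis of the ``main obstacle'' is off. The crude partition count was never the problem; discarding the factor $N^{k-m}$ was.

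The missing idea is that the terms with small $k$ must be paid for by the surplus power $N^{m-k}$, and this forces a case split as in the paper.
\begin{itemize}
\item If $m\le N$, keep $N^k$ and use $N^k m^{m-k}\le N^m$. Then even your first encoding closes:
\[
N^k\,4^m k^{2m-k}\le 4^m N^k m^{m-k}m^m\le 4^m e^m N^m m!,
\]
and summing over $k\le m\le 2^m$ gives $|\cI_{m,N}|\le (8eN)^m m!$.
\item If $m>N$, the power of $N$ cannot absorb $m^{m-k}$. Use instead the trivial bound $|\cI_{m,N}|\le N^{2m}\le N^m m^m\le (eN)^m m!$.
\end{itemize}
The paper does exactly this in generating-function language. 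There $N$ enters through the saddle-point choice $x=\sqrt{m/N}$ in the bound $(2m)!\,(e^x-x)^N x^{-2m}$, which is used only for $m\le N$, and its Case 2 is the same trivial bound. With this correction, your direct counting of set partitions is a legitimate elementary substitute for the exponential generating function of $S_2(2m,k)$.
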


The proof of Proposition \ref{prop: combinatorics} is given in Section \ref{sec: cancellation}.

\bigskip

\subsection{Propagation of Regularity.}
Note that we assume a priori in Proposition \ref{prop:cancellation} that $X$ is bounded. Since $X$ depends on $t$ via $\gamma_t$, where $\gamma_t$ satisfies the Hartree equation, we need to show that $X$ remains a bounded operator under the Hartree flow. Using the assumptions on the initial data $\gamma_0$ in Assumption \ref{asump of gamma}, one can check that $X(0)$ is a bounded operator.  

\begin{proposition}[Propagation of regularity]\label{prop: propagation of regularity}
    Assume the hypotheses of Theorem \ref{tm: relative entropy bound}. There exists an increasing function $C(t)$ such that
    \begin{equation*}
        \norm{X(t)}_{\mathrm{op}}\leq C(t).
    \end{equation*}
More precisely, we can choose $C(t)=2C_1\norm{\nabla V}_{L^{\infty}}+4 \left( C_1\norm{\nabla V}_{L^{\infty}}+C_2 \right)\norm{V}_{L^{\infty}}t$, where $C_1$ and $C_2$ are the constants in Assumption \ref{asump of gamma}. 
\end{proposition}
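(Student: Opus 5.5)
We need to bound $\norm{X(t)}_{\rm op}$, where $X=-\ii[V-(V\ast\rho_\gamma)\otimes\1,(\log\gamma)\otimes\1]$. The plan has two parts. First, split $X$ into a two-body piece and a mean-field piece, and bound each through a single quantity: a Lipschitz commutator constant for $\log\gamma_t$. Second, show that this constant grows at most linearly in $t$ along the Hartree flow.

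\textbf{Step 1: reduction to a commutator constant.} Write
\[
X=-\ii[V(x-y),(\log\gamma)\otimes\1]+\ii[(V\ast\rho_\gamma)\otimes\1,(\log\gamma)\otimes\1].
\]
For the first term, fix $y$. The multiplication operator $V(\cdot-y)$ acts on the first factor, and $(\log\gamma)\otimes\1$ acts fiberwise in $y$. So the operator norm is $\sup_y\norm{[\log\gamma,V(\cdot-y)]}_{\rm op}$. For the second term, $V\ast\rho_\gamma$ is Lipschitz with $\norm{\nabla(V\ast\rho_\gamma)}_\infty\le\norm{\nabla V}_\infty$, since $\rho_\gamma\ge0$ and $\int\rho_\gamma=1$.

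Define
\[
K(t):=\sup\Big\{\norm{[\log\gamma_t,f]}_{\rm op}:\ \norm{\nabla f}_{L^\infty}\le1\Big\}.
\]
Then $\norm{X(t)}_{\rm op}\le 2K(t)\norm{\nabla V}_\infty$. It therefore suffices to prove
\[
K(t)\le C_1+2(C_1\norm{\nabla V}_\infty+C_2)\norm{V}_\infty t/\norm{\nabla V}_\infty.
\]
Equivalently, we need $\norm{\nabla V}_\infty K(t)\le C_1\norm{\nabla V}_\infty+2(\dots)\norm{V}_\infty t$, which matches the stated $C(t)$.

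\textbf{Step 2: evolution of $\log\gamma_t$.} Let $U_t$ be the unitary propagator of the linear equation $\ii\partial_t U_t=(h+V\ast\rho_{\gamma_t})U_t$, with $\rho_{\gamma_t}$ treated as given. Then $\gamma_t=U_t\gamma_0U_t^*$ and $\log\gamma_t=U_t\log\gamma_0U_t^*$, so
\[
[\log\gamma_t,f]=U_t[\log\gamma_0,U_t^*fU_t]U_t^*.
\]
Passing to the Heisenberg picture for the observable $f$ is awkward because $h$ does not preserve multiplication operators. I would instead differentiate the commutator directly. Set $L_t=\log\gamma_t$. It satisfies
\[
\ii\partial_tL_t=[h+W_t,L_t],\qquad W_t=V\ast\rho_{\gamma_t}.
\]
Then
\[
\ii\partial_t[L_t,f]=[[h+W_t,L_t],f].
\]
By the Jacobi identity this equals
\[
[h+W_t,[L_t,f]]+[L_t,[f,h+W_t]]\quad\text{(up to sign).}
\]
Since $[f,W_t]=0$, and the first term is a unitary-conjugation generator that preserves the operator norm, Duhamel gives
\[
\norm{[L_t,f]}\le\norm{[L_0,f]}+\int_0^t\norm{[L_s,[f,h]]}\,ds.
\]

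\textbf{Step 3: the main obstacle.} The remaining term $[L_s,[f,h]]$ is the hard part, since $[f,h]$ is unbounded (for example, it is a derivative when $h=-\Delta$). I would apply Jacobi again in the other order and use $[L_s,h]$ instead. We have $[L,[f,h]]=[[L,f],h]+[f,[L,h]]$, which still does not close directly.

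A cleaner route is to swap the roles: $[L_t,h]$ obeys
\[
\ii\partial_t[L_t,h]=[h+W_t,[L_t,h]]+[L_t,[h,W_t]]\ldots
\]
which again produces unbounded terms.

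The better option is to differentiate $[L_t,h+W_t]$. Its equation involves $[L_t,\partial_tW_t]$, and $\partial_tW_t=V\ast\partial_t\rho_{\gamma_t}$ is bounded by $2\norm{V}_\infty\norm{[h+W_t,\gamma_t]}_1$-type quantities. This is delicate, so I expect it is why the paper's $C(t)$ is \emph{linear} in $t$ with coefficient $(C_1\norm{\nabla V}+C_2)\norm{V}_\infty$.

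So the concrete plan is as follows. Prove
\[
\norm{[L_t,h+W_t]}\le\norm{[L_0,h]}+\norm{[L_0,W_0]}+\dots\le C_2+C_1\norm{\nabla V}_\infty+\cdots,
\]
noting that $h+W_t$ is the generator, so the leading term is conserved. Then feed this into the $f$-commutator estimate, writing $[L_s,[f,h]]$ via the generator $h+W_s$ and bounding it by $2\norm{V}_\infty$ times the above constant. This yields the claimed linear bound, though this last reduction is the step I am least sure closes without extra smoothing.
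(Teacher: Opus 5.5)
Your proposal has a genuine gap: the propagation estimate is never established, and the route you take cannot produce the stated $C(t)$.

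\textbf{Step~1: the reduction to $K(t)$.} The propagation estimate that can actually be proved (see the next paragraph) has a growth term proportional to $\norm{f}_{L^\infty}$. Multiplication by $f$, or any unitary conjugate of it, has operator norm $\norm{f}_{L^\infty}$, and $\norm{f}_{L^\infty}$ is not controlled by $\norm{\nabla f}_{L^\infty}$; on $\R^d$ it is unbounded over the Lipschitz unit ball. So $K(t)$ is the wrong quantity, and your target inequality for it, with $\norm{V}_{L^\infty}/\norm{\nabla V}_{L^\infty}$ in the growth term, cannot be reached this way. In the paper, the factor $\norm{V}_{L^\infty}$ is simply the sup norm of the specific function $W_{y,t}=V(\cdot-y)-V\ast\rho_{\gamma_t}$. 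It satisfies $\norm{W_{y,t}}_{L^\infty}\le 2\norm{V}_{L^\infty}$ and $\norm{\nabla W_{y,t}}_{L^\infty}\le 2\norm{\nabla V}_{L^\infty}$. The paper proves, for each fixed bounded Lipschitz $f$, that $\norm{[\log\gamma_t,f]}_{\rm op}\le\norm{[\log\gamma_0,f]}_{\rm op}+2(C_1\norm{\nabla V}_{L^\infty}+C_2)\norm{f}_{L^\infty}t$. It then applies this to $f=W_{y,t}$ and uses Assumption~\ref{asump of gamma}(3) only at time $0$. Your split into $V(\cdot-y)$ and $V\ast\rho_\gamma$ gives the same constant if you track each function's sup norm instead of passing to $K(t)$.

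\textbf{Steps~2--3: the Jacobi expansion.} Expanding into $[L_s,[f,h]]$ is what creates the unbounded term. Your closing suggestion, bounding $[L_s,[f,h]]$ by $2\norm{V}_{L^\infty}$ times a generator-commutator constant, is not a valid inequality. The missing idea is not to expand at all. From $\ii\partial_tL_t=[H(t),L_t]$ with $H(t)=h+V\ast\rho_{\gamma_t}$, one has $\ii\partial_t[L_t,f]=[[H(t),L_t],f]$, hence
\[
\norm{[L_t,f]}_{\rm op}\le\norm{[L_0,f]}_{\rm op}+2\norm{f}_{L^\infty}\int_0^t\norm{[H(s),L_s]}_{\rm op}\dd s .
\]
Thus $h$ only ever enters through its commutator with $\log\gamma$. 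The paper does this in the Heisenberg picture. It writes $[\log\gamma_t,f]=U_{t,0}[\log\gamma_0,f_t]U_{t,0}^\dagger$ and bounds the Duhamel source $[[\log\gamma_0,H(s)],f_s]$ by $2(C_1\norm{\nabla V}_{L^\infty}+C_2)\norm{f}_{L^\infty}$. For this it uses Assumption~\ref{asump of gamma}(3),(4) and $\mathrm{Lip}(V\ast\rho_{\gamma_s})\le\norm{\nabla V}_{L^\infty}$.

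\textbf{Your time-dependence concern.} Your worry about the time dependence of $V\ast\rho_{\gamma_t}$ is nonetheless legitimate, and it bears on that computation too. For a time-dependent generator, $f_t=U_{t,0}^\dagger fU_{t,0}$ satisfies $-\ii\partial_tf_t=[U_{t,0}^\dagger H(t)U_{t,0},f_t]$, not $[H(t),f_t]$. So the commutator in the source is really $[\log\gamma_0,U_{s,0}^\dagger H(s)U_{s,0}]=U_{s,0}^\dagger[\log\gamma_s,H(s)]U_{s,0}$, i.e.\ the quantity $\norm{[H(s),L_s]}_{\rm op}$ in the display above. Its norm is not conserved: besides a norm-preserving commutator, its derivative contains $[\log\gamma_s,V\ast\partial_s\rho_{\gamma_s}]$. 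Closing the estimate therefore needs an extra Gronwall-type step, for instance using $\norm{\partial_s\rho_{\gamma_s}}_{\rm TV}\le\norm{[H(s),\gamma_s]}_1\le\norm{[H(s),\log\gamma_s]}_{\rm op}$. Your proposal does not supply this step, the paper's written argument does not contain it, and it does not obviously give a bound linear in $t$.
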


The proof of Proposition \ref{prop: propagation of regularity} is given in Section \ref{sec: propagation}.

\bigskip

\subsection{Proof of Theorem \ref{tm: relative entropy bound}.}

With Propositions \ref{prop: evolution entropy}--\ref{prop: propagation of regularity} in place, we are ready to prove the main result, Theorem \ref{tm: relative entropy bound}. The argument below combines the entropy production identity, the cancellation and counting estimates for the fluctuation operator, and the propagated operator-norm bound for $X(t)$.

\begin{proof}[Proof of Theorem \ref{tm: relative entropy bound}]
    By Proposition \ref{prop: evolution entropy}, we have 
    \begin{equation*}
        \frac{\dd}{\dd t}S(\Gamma_t^N,\gamma_t^{\otimes N})=\tr_{\gH_N}\left( \Gamma_t^N A_t^N \right) \,,
    \end{equation*}
    \begin{equation*}
        A_t^N=\frac{1}{N-1}\sum_{1\leq i\neq j\leq N}X_{ij}(t),
    \end{equation*}
    where $X(t)$ is defined in \eqref{eq: def of two-body fluctuation}. By Proposition \ref{prop: propagation of regularity}, $X(t)$ is a bounded operator for any $t\ge 0$, whose operator norm is bounded by a continuous function $C(t)$. This allows us to use the following entropy inequality.

    \begin{lemma}[Entropy inequality]\label{lem:entropy ineq}
    Let $\rho\in \cD(\gH)$ be a density operator on a Hilbert space, $\sigma\in\cD(\gH)$ be faithful, and $A$ be a bounded self-adjoint operator on $\gH$. Then, for any $\lambda>0$, we have
    \begin{equation*}
        \tr_{\gH}(\rho A)\leq\frac{1}{\lambda}S(\rho,\sigma)+\frac{1}{\lambda}\log\tr_{\gH}(\sigma e^{\lambda A}).
    \end{equation*}
    \end{lemma}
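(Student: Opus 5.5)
The plan is to obtain the inequality from the Gibbs variational principle, equivalently from the nonnegativity of relative entropy. Fix $\lambda>0$. Since $A$ is bounded and self-adjoint, $e^{\lambda A}$ is bounded, positive and invertible. Hence $\sigma^{1/2}e^{\lambda A}\sigma^{1/2}$, or more conveniently an operator $\tau$ built from $\log\sigma+\lambda A$, is a candidate reference state. Formally I would set
\[
\tau:=\frac{\exp(\log\sigma+\lambda A)}{\tr_{\gH}\exp(\log\sigma+\lambda A)} .
\]
Then $S(\rho,\tau)\ge 0$ unpacks to
\[
\tr(\rho\log\rho)-\tr(\rho\log\sigma)-\lambda\tr(\rho A)+\log\tr e^{\log\sigma+\lambda A}\ge 0,
\]
which gives
\[
\tr(\rho A)\le \frac1\lambda S(\rho,\sigma)+\frac1\lambda\log\tr e^{\log\sigma+\lambda A}.
\]
The final step uses the Golden--Thompson inequality $\tr e^{B+C}\le\tr(e^Be^C)$ with $B=\log\sigma$ and $C=\lambda A$. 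This yields $\tr e^{\log\sigma+\lambda A}\le\tr(\sigma e^{\lambda A})$, and since $\log$ is monotone the stated bound follows.

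The main obstacle is making this rigorous in infinite dimensions, where $\log\sigma$ is unbounded: $\sigma$ is faithful and trace class, so $\log\sigma$ is unbounded below. The cleanest route avoids defining $\exp(\log\sigma+\lambda A)$ directly. First I would reduce to the case $S(\rho,\sigma)<\infty$, since otherwise there is nothing to prove. Next I would note that $\log\sigma+\lambda A$ is self-adjoint on $D(\log\sigma)$ by Kato--Rellich, because $\lambda A$ is bounded, and is bounded above by $\lambda\|A\|$. Its exponential is trace class because, by a Golden--Thompson or Peierls--Bogoliubov argument, $\tr e^{\log\sigma+\lambda A}\le e^{\lambda\|A\|}\tr\sigma<\infty$; alternatively one can use the operator monotonicity $e^{\log\sigma+\lambda A}\le$ a quantity controlled by $e^{\lambda\|A\|}\sigma$ at the level of traces. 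Golse--Thompson also holds for self-adjoint operators bounded above with trace-class exponential, which is the standard infinite-dimensional version.

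If one prefers to avoid these technicalities, I would instead approximate. Let $P_n$ be the spectral projections of $\sigma$ onto its $n$ largest eigenvalues, and work with $\sigma_n=P_n\sigma P_n/\tr(P_n\sigma)$ and $\rho_n=P_n\rho P_n/\tr(P_n\rho)$ on the finite-dimensional range, where the argument above is elementary. Then let $n\to\infty$, using lower semicontinuity of relative entropy together with convergence of $\tr(\rho_nA_n)$ and of $\tr(\sigma_ne^{\lambda A_n})$, where $A_n=P_nAP_n$. The delicate point in this route is that $S(\rho_n,\sigma_n)\to S(\rho,\sigma)$ is needed from above, not only by lower semicontinuity. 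For that I would invoke the monotonicity of relative entropy under the pinching/compression map, as recalled in Section \ref{sec: QRE}, which gives $\limsup S(\rho_n,\sigma_n)\le S(\rho,\sigma)$ up to normalization factors tending to $1$.
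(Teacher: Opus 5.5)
Your proposal is correct and follows the paper's proof: you introduce the Gibbs-type state $e^{\log\sigma+\lambda A}/Z$, use $S(\rho,\cdot)\ge 0$ to obtain $\tr(\rho A)\le \lambda^{-1}S(\rho,\sigma)+\lambda^{-1}\log Z$, and then bound $Z\le\tr(\sigma e^{\lambda A})$ by Golden--Thompson. Your extra infinite-dimensional justification goes beyond the paper, which does this computation only formally; that justification is either the trace-class bound on $e^{\log\sigma+\lambda A}$ via min-max, since $\exp$ is not operator monotone so the comparison with $e^{\lambda\|A\|}\sigma$ holds only at the level of traces, or the finite-rank spectral truncation with pinching monotonicity giving $\limsup S(\rho_n,\sigma_n)\le S(\rho,\sigma)$.
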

    Using Lemma \ref{lem:entropy ineq}, we have 
    \begin{align*}
        \frac{\dd}{\dd t}S(\Gamma_t^N,\gamma_t^{\otimes N})&=\tr_{\gH_N}\left( \Gamma_t^N A_t^N \right)\\
        &\leq \frac{1}{\lambda}S(\Gamma_t^N,\gamma_t^{\otimes N})+\frac{1}{\lambda}\log\tr_{\gH_N}\left( \gamma_t^{\otimes N}e^{\lambda A_t^N} \right).  
    \end{align*}
    Here $\lambda$ is a $t$-dependent parameter to be determined later. Since $A_t^N$ is bounded, we use Taylor expansion to obtain
    \begin{equation*}
        \tr_{\gH_N}\left( \gamma_t^{\otimes N}e^{\lambda A_t^N} \right)=\sum_{m=0}^{\infty}\frac{\lambda^m}{m!}\tr_{\gH_N}\left( \gamma_t ^{\otimes N}(A_t^N)^{m}\right).  
    \end{equation*}
    We omit the subscript $t$ for simplicity. For a fixed $m$, each term in $\tr(\gamma^{\otimes N}A^m)$ has the following form:
    \begin{equation}\label{eq:expansion of m-th mixed moment}
        \tr \left( \gamma^{\otimes N}A^m \right)=\frac{1}{(N-1)^m}\sum_{(I_m,J_m)}\tr_{\gH_N} \left( \gamma^{\otimes N}X_{i_{1}j_{1}}X_{i_{2}j_{2}}\cdots X_{i_{m}j_m} \right) ,
    \end{equation}
    where the multi-indices $(I_m,J_m)$ run over $\{1,\cdots,N\}^{2m}$ such that $i_{\nu}\neq j_{\nu}$. We know from the general cancellation rule in Proposition \ref{prop:cancellation} that numerous terms in \eqref{eq:expansion of m-th mixed moment} vanish. This, combined with the combinatorial result in Proposition \ref{prop: combinatorics}, yields
    \begin{equation*}
        \left|\tr(\gamma^{\otimes N}A^m)\right\vert\leq \frac{1}{(N-1)^m}|\cI_{m,N}|\norm{X}_{\rm op}^m\leq \left( 2C_0 C(t) \right)^m m!. 
    \end{equation*}
    Plugging this into the Taylor expansion, we have (for $\lambda$ small enough)
    \begin{equation*}
        \tr\left(\gamma^{\otimes N}e^{\lambda A}\right)\leq\sum_{m=0}^{\infty}\left( 2C_0C(t)\lambda \right)^m =\frac{1}{1-2C_0C(t)\lambda}.
    \end{equation*}
    Choosing $\lambda=(4C_0C(t))^{-1}$, we arrive at 
    \begin{equation*}
        \frac{\dd}{\dd t}S(\Gamma_t^N,\gamma_t^{\otimes N})\leq 4C_0C(t)S(\Gamma_t^N,\gamma_t^{\otimes N})+4(\log2)C_0C(t),
    \end{equation*}
    and \eqref{eq:relative entropy bound} follows from a Gronwall argument. Using $\sup_{[0,T]}C(t)=C(T)$ and Gronwall's inequality, we have for all $t\in[0,T]$
    \begin{equation}\label{eq:gronwall}
        S(\Gamma_t^N,\gamma_t^{\otimes N})\leq e^{4C_0C(T)t}S(\Gamma_0^N,\gamma_0^{\otimes N})+(\log2)e^{4C_0C(T)t}.
    \end{equation}
    This completes the proof of Theorem \ref{tm: relative entropy bound}.

\end{proof}

It remains to prove Lemma \ref{lem:entropy ineq}, which is equivalent to the Gibbs variational principle. Here we give a direct proof, using the nonnegativity of quantum relative entropy. 

\begin{proof}[Proof of Lemma \ref{lem:entropy ineq}]
\textbf{Step 1.}
Define the density operator
\[
\omega := \frac{e^{\log\sigma+\lambda A}}{\tr\big(e^{\log\sigma+\lambda A}\big)}\equiv \frac{e^{\log\sigma+\lambda A}}{Z}.
\]
Since $\sigma$ is faithful and $A$ is bounded self-adjoint, $\omega$ is well-defined. 
By nonnegativity of quantum relative entropy,
\[
0\le S(\rho,\omega)=\tr\big(\rho(\log\rho-\log\omega)\big).
\]
Using $\log\omega=(\log\sigma+\lambda A)-\log Z$, we obtain
\[
0\le \tr(\rho(\log\rho-\log\sigma-\lambda A+\log Z))
= S(\rho,\sigma) - \lambda \tr(\rho A) + \log Z.
\]
Rearranging gives
\begin{equation}\label{eq:entropy-ineq-Z}
\tr(\rho A)\le \frac{1}{\lambda}S(\rho,\sigma)+\frac{1}{\lambda}\log Z.
\end{equation}
\textbf{Step 2.}
By the Golden--Thompson inequality,
\[
\tr(e^{X+Y})\le \tr(e^{X}e^{Y}),
\]
with $X=\log\sigma$ and $Y=\lambda A$, one deduces
\[
Z=\tr\big(e^{\log\sigma+\lambda A}\big)\le \tr\big(e^{\log\sigma}e^{\lambda A}\big)=\tr(\sigma e^{\lambda A}).
\]
Hence, $\log Z\le \log\tr(\sigma e^{\lambda A})$. Plugging this into \eqref{eq:entropy-ineq-Z} yields the claimed inequality.
\end{proof}

\bigskip
\section{Quantum Relative Entropy Preliminaries.}\label{sec: QRE}

Throughout this section, all integrals are understood as Bochner integrals in the strong operator topology. The results in this section hold for any Hilbert space $\gH$.

For two density operators $\Gamma,\Gamma'$ on $\gH$, the quantum relative entropy is defined as
\begin{equation*}
    S(\Gamma,\Gamma')=\left\{\begin{array}{ll}
    \tr_{\gH}\left(\Gamma(\log \Gamma -\log \Gamma')\right),&\text{if }\Ker(\Gamma')\subset \Ker(\Gamma),\\
    +\infty,&\text{otherwise}.
    \end{array}\right.
\end{equation*}
The quantum relative entropy is always non-negative and equals zero if and only if $\Gamma=\Gamma'$. Moreover, it is monotone under the action of completely positive trace-preserving maps.

We first recall some useful properties of the logarithm.

\begin{lemma}[Fr\'echet derivative of $\log$]\label{lem:FrechetLog}
Let $X\in\cD(\gH)$ be a faithful density operator. Then, for any $B\in\cL(\gH)$, the Fr\'echet derivative $\mathrm D(\log)_X[B]$ is a densely-defined operator on $\gH$, which can be represented as
\[
\mathrm D(\log)_X[B]
=
\int_0^\infty (X+s\1)^{-1}\,B\,(X+s\1)^{-1}\,\dd s.
\]
In particular, if $t\mapsto X_t$ is differentiable,
\[
\partial_t(\log X_t) = \mathrm D(\log)_{X_t}[\partial_t X_t].
\]
\end{lemma}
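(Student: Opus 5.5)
The plan is to derive the integral representation from the scalar identity
\[
\log x=\int_0^\infty\Big(\frac{1}{1+s}-\frac{1}{x+s}\Big)\dd s,\qquad x>0,
\]
lifted to operators by the spectral theorem. For a faithful density operator $X$ (so $0<X\le \1$ with spectrum accumulating only at $0$) this gives
\[
\log X=\int_0^\infty\Big((1+s)^{-1}\1-(X+s\1)^{-1}\Big)\dd s .
\]
The integral is understood strongly, on vectors in the domain of $\log X$: for $\psi$ in the spectral subspace where $X\ge\epsilon$ the integrand is norm-integrable, and on general $\psi\in D(\log X)$ one argues by monotone convergence in the spectral measure. The derivative then formally comes from the resolvent identity
\[
(X+B+s)^{-1}-(X+s)^{-1}=-(X+B+s)^{-1}B(X+s)^{-1}.
\]
To first order this gives $(X+s)^{-1}B(X+s)^{-1}$, and integrating in $s$ (after the sign flip from the minus in front of the resolvent) yields the claimed formula for $\mathrm D(\log)_X[B]$.

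The key steps, in order, are as follows.

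\begin{enumerate}
\item Prove the operator integral representation of $\log X$ on its natural domain.
\item Define $T_B:=\int_0^\infty (X+s)^{-1}B(X+s)^{-1}\dd s$ as a densely defined operator. The natural dense domain is $\bigcup_{\epsilon>0}\mathrm{Ran}\,\1_{[\epsilon,1]}(X)$, i.e.\ finite spectral subspaces. For $\phi,\psi$ there, $|\langle\phi,(X+s)^{-1}B(X+s)^{-1}\psi\rangle|\le\norm{B}\,\norm{(X+s)^{-1}\phi}\,\norm{(X+s)^{-1}\psi}\lesssim(\epsilon+s)^{-2}$, which is integrable.
\item Show the remainder is second order. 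This is the term $\int_0^\infty (X+B+s)^{-1}B(X+s)^{-1}B(X+s)^{-1}\dd s$, evaluated on such vectors, whenever $X+B$ is still positive and faithful (e.g.\ $B$ small relative to $X$ on the relevant subspace). It should be $o(\norm{B})$, giving Fr\'echet differentiability in the sense of the paper.
\item For a differentiable path $X_t$ of faithful density operators, apply the same resolvent expansion with $B=X_{t+h}-X_t$. Divide by $h$ and pass to the limit inside the $s$-integral by dominated convergence, obtaining $\partial_t\log X_t=\mathrm D(\log)_{X_t}[\partial_tX_t]$ as a weak/strong identity on the dense domain.
\end{enumerate}

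The main obstacle is the unboundedness of $\log X$, since $X$ has spectrum accumulating at $0$ in infinite dimensions. Near $s=0$ the factors $(X+s)^{-1}$ are unbounded, so norm convergence fails and everything must be done on the spectral cutoff domain with uniform (in $h$) control of the integrand near $s\to0$. The first thing to try is to split the $s$-integral into $[0,\delta]$ and $[\delta,\infty)$. On $[\delta,\infty)$ all resolvents are bounded by $\delta^{-1}$, so norm estimates and dominated convergence apply directly. On $[0,\delta]$ the plan is to use the spectral cutoff of the test vectors to absorb one resolvent factor on each side. The remaining contribution is then bounded by $\norm{B}\,\epsilon^{-2}\delta$, which is made small uniformly before sending $h\to0$.
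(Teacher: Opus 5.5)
Your outline is the paper's own argument. Both use the scalar integral formula for $\log$, spectral calculus, the resolvent identity for each fixed $s>0$, integration in $s$, and the chain rule. The paper does this only formally: it differentiates $(X+s\1)^{-1}$ for fixed $s$, with a smallness of $t$ that depends on $s$, and then integrates ``by linearity''. So the obstacle you single out is exactly the step the paper skips.

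\textbf{Step 2 defines a form, not an operator.} Your bound $|\langle\phi,(X+s\1)^{-1}B(X+s\1)^{-1}\psi\rangle|\lesssim\norm{B}(\eps+s)^{-2}$ requires \emph{both} vectors to be spectrally cut off. It therefore defines a sesquilinear form on $\mathcal{V}:=\bigcup_{\eps>0}\mathrm{Ran}\,\1_{[\eps,1]}(X)$, not an operator on $\mathcal{V}$. The vector $B(X+s\1)^{-1}\psi$ leaves the cut-off subspace, so $\norm{(X+s\1)^{-1}B(X+s\1)^{-1}\psi}$ is only $O(s^{-1}(\eps+s)^{-1})$. The failure is genuine, not just a weak estimate:
\begin{itemize}
\item In an eigenbasis $Xe_i=\lambda_ie_i$, the integral has entries $B_{ij}\frac{\log\lambda_i-\log\lambda_j}{\lambda_i-\lambda_j}$.
\item Take $B=\ket{v}\bra{e_1}$ with $v\notin D(\log X)$; such $v$ exists because $\log X$ is unbounded.
\item Then $\int_0^\infty(X+s\1)^{-1}B(X+s\1)^{-1}\psi\,\dd s$ lies in $\gH$ only if $\langle e_1,\psi\rangle=0$, so the domain is $e_1^\perp$.
\end{itemize}
Hence ``densely defined operator for every $B\in\cL(\gH)$'' is false when $\dim\gH=\infty$. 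Only a form statement on $\mathcal{V}$, or a statement for restricted $B$, can be proved.

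\textbf{Steps 3--4 lack domination near $s=0$.} Your bound $\norm{B}\eps^{-2}\delta$ on $[0,\delta]$ holds for the \emph{limiting} integrand, where both resolvents are those of $X$ (resp.\ $X_t$). The remainder and the difference quotient always contain a resolvent that no cut-off absorbs:
\begin{itemize}
\item In your Step 3 expression $(X+B+s\1)^{-1}B(X+s\1)^{-1}B(X+s\1)^{-1}$, both the left and the middle factor act on vectors outside the cut-off subspace.
\item In Step 4 the corresponding factor is $(X_{t+h}+s\1)^{-1}$, and the $X_t$-cut-off of the test vectors gives no control on it.
\end{itemize}
These operators are compact with $0$ in their spectrum, so the only available bound is $\norm{(X_{t+h}+s\1)^{-1}}_{\rm op}=s^{-1}$. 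This leads to the divergent integral $\int_0 s^{-1}(\eps+s)^{-1}\dd s$. Consequently you get neither an $o(\norm{B})$ remainder nor $h$-uniform domination near $s=0$.

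Relatedly, every operator-norm ball around $X$ contains non-positive operators, for example $X-\delta\ket{e_i}\bra{e_i}$ with $\lambda_i<\delta$. So Fr\'echet differentiability on $\cL(\gH)$ is not meaningful, and your ``$B$ small relative to $X$'' is the right instinct but is never quantified.

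\textbf{What would close the gap.} One sufficient condition is a relative bound such as $\pm(X_{t+h}-X_t)\le C|h|\,X_t$. Then:
\begin{itemize}
\item $(X_{t+h}+s\1)^{-1}\le(1-C|h|)^{-1}(X_t+s\1)^{-1}$.
\item One can write $X_{t+h}-X_t=X_t^{1/2}K_hX_t^{1/2}$ with $\norm{K_h}_{\rm op}\le C|h|$.
\item The second-order remainder $(X_t+s\1)^{-1}(X_{t+h}-X_t)(X_{t+h}+s\1)^{-1}(X_{t+h}-X_t)(X_t+s\1)^{-1}$, tested on unit vectors of $\mathrm{Ran}\,\1_{[\eps,1]}(X_t)$, is bounded by $C^2h^2(1-C|h|)^{-1}\min(\eps^{-1},s^{-2})$, which is integrable.
\end{itemize}
Dominated convergence then gives the chain rule as an identity of forms on $\mathcal{V}$. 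Alternatively, in finite dimensions with $X\ge m_0\1$, as in the Lindblad setting, all integrals converge in norm and the paper's formal computation is already rigorous.
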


\begin{proof}
For a scalar $x>0$, one has the identity
\begin{equation*}
\log x = \int_0^\infty \left(\frac{1}{1+s}-\frac{1}{x+s}\right)\,\dd s.
\end{equation*}
For every density operator $X>0$, the spectral theorem gives
\begin{equation}\label{eq:matrix-log-rep}
\log X = \int_0^\infty\left(\frac{1}{1+s}\1-(X+s\1)^{-1}\right)\dd s.
\end{equation}
Fix $s> 0$ and consider $F_s(X):=(X+s\1)^{-1}$.
For small $t$ (so that $X+tB+s\1$ remains invertible),
the resolvent identity gives
\[
(X+tB+s\1)^{-1}-(X+s\1)^{-1}
= -(X+s\1)^{-1}\,tB\,(X+tB+s\1)^{-1}.
\]
Dividing by $t$ and letting $t\to 0$ yields the Fr\'echet derivative
\begin{equation}\label{eq:Frechet-resolvent}
\mathrm D F_s(X)[B]
= - (X+s\1)^{-1}\,B\,(X+s\1)^{-1}.
\end{equation}
Let $G(X)$ denote the right-hand side of \eqref{eq:matrix-log-rep}.
Using \eqref{eq:Frechet-resolvent} and linearity,
\[
\mathrm D G(X)[B]
= \int_0^\infty (X+s\1)^{-1}\,B\,(X+s\1)^{-1}\,\dd s.
\]

If $t\mapsto X_t$ is differentiable, then $t\mapsto \log X_t$ is differentiable and
\[
\partial_t(\log X_t)=\mathrm D(\log)_{X_t}[\partial_t X_t],
\]
by the chain rule for Fr\'echet differentiable maps.
\end{proof}

\begin{lemma}[Commutator--log identity]\label{lem:comm-log}
Let $X$ be a faithful density operator and $B=B^\ast$ be bounded. Then
\[
\ii\comm{B}{\log X}
=
\int_0^\infty (X+s\1)^{-1}\,\big(\ii\comm{B}{X}\big)\,(X+s\1)^{-1}\,\dd s.
\]
\end{lemma}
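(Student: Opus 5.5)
The plan is to reduce the commutator with $\log X$ to a commutator with the resolvents, using the integral representation \eqref{eq:matrix-log-rep} from the proof of Lemma~\ref{lem:FrechetLog}:
\[
\log X=\int_0^\infty\Big(\frac{1}{1+s}\1-(X+s\1)^{-1}\Big)\dd s .
\]
The scalar part $\frac{1}{1+s}\1$ commutes with $B$. Formally we therefore obtain
\[
\ii\comm{B}{\log X}=-\int_0^\infty \ii\comm{B}{(X+s\1)^{-1}}\dd s .
\]
For each fixed $s>0$ the resolvent $R_s:=(X+s\1)^{-1}$ is bounded with $\norm{R_s}_{\rm op}\le s^{-1}$. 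The elementary identity $\comm{B}{R_s}=R_s\comm{X}{B}R_s$ then gives
\[
-\ii\comm{B}{R_s}=R_s\,\ii\comm{B}{X}\,R_s ,
\]
which is exactly the integrand in the claim. To verify this identity, multiply $R_s^{-1}B-BR_s^{-1}=XB-BX$ on both sides by $R_s$. Since $s\1$ commutes with $B$, we indeed have $\comm{X+s\1}{B}=\comm{X}{B}$.

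The real work is justifying the interchange of the commutator with the improper integral, because $\log X$ is unbounded when $X$ is a faithful density operator on an infinite-dimensional space. I would make the identity precise in the sense of quadratic forms, or on a suitable dense domain, by truncating the integral to $s\in[\eps,M]$. On a truncation the integrals are Bochner integrals of bounded operators, so the algebra above is rigorous. I would then pass to the limits $\eps\to0$ and $M\to\infty$, testing against vectors $\phi,\psi$ in a common core of $\log X$. A natural choice is finite linear combinations of eigenvectors of $X$, which has discrete spectrum since it is compact and faithful.

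On an eigenbasis $X e_k=x_ke_k$ everything becomes explicit:
\[
\bra{e_k}R_s\,\ii\comm{B}{X}\,R_s\ket{e_l}=\frac{\ii (x_l-x_k)B_{kl}}{(x_k+s)(x_l+s)} .
\]
Using
\[
\int_0^\infty\frac{x_l-x_k}{(x_k+s)(x_l+s)}\dd s=\log x_l-\log x_k ,
\]
with the convention that the value is $0$ when $x_k=x_l$, this equals $\ii B_{kl}(\log x_l-\log x_k)=\bra{e_k}\ii\comm{B}{\log X}\ket{e_l}$. The matrix-element identity thus holds exactly, and the integral converges absolutely for each fixed pair $(k,l)$. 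This matrix-element computation is in fact the cleanest route, and I would present it as the proof, with the resolvent identity as motivation.

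The main obstacle is the interpretation of both sides as operators, as opposed to forms on finite combinations of eigenvectors. I would state the identity as equality of sesquilinear forms on this core. Whenever one side extends to a bounded operator, as is assumed for $X(t)$ in Proposition~\ref{prop:cancellation}, the other side does too and the two extensions coincide, consistent with the paper's convention that the integrals are understood in the strong/weak operator sense.
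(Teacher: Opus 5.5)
Your proof is correct, but it takes a different route from the paper.

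\textbf{The paper's route.} The paper never uses the resolvent identity or an eigenbasis. It conjugates $X$ by the unitary group $U_\theta=e^{\ii\theta B}$ and differentiates $\log X_\theta=U_\theta(\log X)U_\theta^*$ at $\theta=0$ in two ways. Directly, this gives $\ii\comm{B}{\log X}$. Via the chain rule and the Fr\'echet-derivative formula of Lemma~\ref{lem:FrechetLog}, applied to $\partial_\theta X_\theta|_{\theta=0}=\ii\comm{B}{X}$, it gives the integral. Equating the two yields the identity. This is short and reuses Lemma~\ref{lem:FrechetLog}. It is rigorous in finite dimensions, but in infinite dimensions it is only formal. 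Since $\log X$ is unbounded, $\log$ is not a Fr\'echet differentiable map between spaces of bounded operators near $X$, so the chain rule cannot be invoked as stated. Differentiating $\theta\mapsto U_\theta(\log X)U_\theta^*$ likewise needs domain control that is not supplied.

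\textbf{Your route.} You reduce everything, matrix element by matrix element, to the scalar identity $\int_0^\infty\big(\frac{1}{x_k+s}-\frac{1}{x_l+s}\big)\dd s=\log x_l-\log x_k$. This needs no chain rule and no appeal to Lemma~\ref{lem:FrechetLog}. It also states exactly in what sense the identity holds: as equality of sesquilinear forms on the core of finite linear combinations of eigenvectors, with entrywise absolutely convergent integrals. That is the natural setting when the commutator is only assumed a priori to be bounded, as in Proposition~\ref{prop:cancellation}. The price is that you get a form identity rather than an operator identity. In finite dimensions, where the lemma is used to extract a norm bound (Proposition~\ref{prop: finite X bound}), the two notions coincide.

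\textbf{A small wording point.} What you actually use is that $X$ is compact, self-adjoint and injective, so it has an orthonormal eigenbasis with strictly positive eigenvalues. In infinite dimensions $0$ still lies in its spectrum, so ``discrete spectrum'' should be read in that sense.
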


\begin{proof}
\textbf{Step 1.}
Define the unitary family $U_\theta:=e^{\ii\theta B}$ and
\[
X_\theta := U_\theta X U_\theta^\ast.
\]
Then $X_\theta>0$ for all $\theta$, and differentiating gives
\[
\left.\frac{\dd}{\dd\theta}X_\theta\right|_{\theta=0}
=\ii B X - \ii X B
=\ii\comm{B}{X}.
\]
\textbf{Step 2.}
Since functional calculus is equivariant under unitary conjugation,
\[
\log X_\theta = U_\theta (\log X) U_\theta^\ast.
\]
Therefore,
\[
\left.\frac{\dd}{\dd\theta}\log X_\theta\right|_{\theta=0}
=\ii\comm{B}{\log X}.
\]
\textbf{Step 3.}
By Lemma~\ref{lem:FrechetLog} and the chain rule,
\[
\left.\frac{\dd}{\dd\theta}\log X_\theta\right|_{\theta=0}
=
\mathrm D(\log)_X\!\left[\left.\frac{\dd}{\dd\theta}X_\theta\right|_{\theta=0}\right]
=
\mathrm D(\log)_X\big[\ii\comm{B}{X}\big].
\]
Using the explicit Fr\'echet derivative formula from Lemma~\ref{lem:FrechetLog} yields
\[
\ii\comm{B}{\log X}
=
\int_0^\infty (X+s\1)^{-1}\,\big(\ii\comm{B}{X}\big)\,(X+s\1)^{-1}\,\dd s,
\]
as desired.
\end{proof}

\begin{lemma}[Derivative formula]\label{lem:RE-derivative}
Let $\Gamma_t,\sigma_t$ be differentiable families of density operators which are faithful with finite entropy. Then
\[
\frac{\dd}{\dd t}S(\Gamma_t,\sigma_t)
=
\tr\Big((\partial_t\Gamma_t)(\log\Gamma_t-\log\sigma_t)\Big)
-\tr\Big(\Gamma_t\,\partial_t\log\sigma_t\Big).
\]
\end{lemma}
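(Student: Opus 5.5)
The plan is to differentiate the two terms of
\[
S(\Gamma_t,\sigma_t)=\tr(\Gamma_t\log\Gamma_t)-\tr(\Gamma_t\log\sigma_t)
\]
separately, by the product rule, and then use Lemma~\ref{lem:FrechetLog} to identify $\partial_t\log\Gamma_t$ and $\partial_t\log\sigma_t$. The formula then follows once one term drops out.

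\textbf{First term.} For $\tr(\Gamma_t\log\Gamma_t)$, the product rule gives
\[
\tr\big((\partial_t\Gamma_t)\log\Gamma_t\big)+\tr\big(\Gamma_t\,\partial_t\log\Gamma_t\big).
\]
By Lemma~\ref{lem:FrechetLog},
\[
\tr\big(\Gamma_t\,\mathrm D(\log)_{\Gamma_t}[\partial_t\Gamma_t]\big)=\int_0^\infty\tr\big(\Gamma_t(\Gamma_t+s)^{-1}\partial_t\Gamma_t(\Gamma_t+s)^{-1}\big)\dd s .
\]
Cyclicity of the trace is legitimate here, since $\partial_t\Gamma_t$ is trace class and the resolvents are bounded. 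Because $\Gamma_t$ commutes with its resolvent, the integrand becomes $\tr\big(\Gamma_t(\Gamma_t+s)^{-2}\partial_t\Gamma_t\big)$. Using the scalar identity $\int_0^\infty x(x+s)^{-2}\dd s=1$ for $x>0$ and spectral calculus, the integral equals $\tr(\partial_t\Gamma_t)=\frac{\dd}{\dd t}\tr\Gamma_t=0$, since every $\Gamma_t$ has trace one. So only $\tr((\partial_t\Gamma_t)\log\Gamma_t)$ survives. This is the familiar statement that $\frac{\dd}{\dd t}\tr f(\Gamma_t)=\tr(f'(\Gamma_t)\partial_t\Gamma_t)$ with $f(x)=x\log x$, giving $\tr((\log\Gamma_t+1)\partial_t\Gamma_t)$ with the constant part vanishing.

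\textbf{Second term.} Differentiating $-\tr(\Gamma_t\log\sigma_t)$ gives
\[
-\tr\big((\partial_t\Gamma_t)\log\sigma_t\big)-\tr\big(\Gamma_t\,\partial_t\log\sigma_t\big),
\]
where $\partial_t\log\sigma_t=\mathrm D(\log)_{\sigma_t}[\partial_t\sigma_t]$ by Lemma~\ref{lem:FrechetLog}. No cancellation is claimed here, and combining with the first term yields exactly the stated formula.

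\textbf{Main obstacle.} The difficulty is justifying these manipulations in infinite dimensions, where $\log\Gamma_t$ and $\log\sigma_t$ are unbounded. I would handle it by regularization. Replace $\log$ by the bounded truncation
\[
\log_\epsilon(X)=\int_\epsilon^{1/\epsilon}\Big(\tfrac{1}{1+s}-(X+s)^{-1}\Big)\dd s,
\]
for which the product rule and the resolvent computation above are rigorous, with all operators bounded and $\partial_t\Gamma_t$ trace class. One then passes to the limit $\epsilon\to0$ using the finite-entropy assumptions and monotone convergence in the spectral representation. It also suffices to prove the integrated identity
\[
S(\Gamma_t,\sigma_t)-S(\Gamma_{t_0},\sigma_{t_0})=\int_{t_0}^t(\cdots),
\]
since that form is stable under these limits. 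In the applications, $\partial_t\Gamma_t$ is a commutator with the Hamiltonian, so the pairings $\tr((\partial_t\Gamma_t)\log\Gamma_t)$ should be interpreted in this regularized sense.
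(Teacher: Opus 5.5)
Your proposal is correct and follows the paper's proof: you split $S$ into $\tr(\Gamma_t\log\Gamma_t)-\tr(\Gamma_t\log\sigma_t)$ and apply the product rule to each term. You then eliminate $\tr(\Gamma_t\,\partial_t\log\Gamma_t)$ using the resolvent formula of Lemma~\ref{lem:FrechetLog}, cyclicity of the trace, and the identity $\int_0^\infty x(x+s)^{-2}\,\dd s=1$, which reduces it to $\tr(\partial_t\Gamma_t)=0$. Your regularization remark goes beyond the paper, which carries out these manipulations formally; with the truncated $\log_\epsilon$ the cancellation only holds in the limit, since $\int_\epsilon^{1/\epsilon}x(x+s)^{-2}\,\dd s\to1$ as $\epsilon\to0$.
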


\begin{proof}
Recall
\[
S(\Gamma_t,\sigma_t)=\tr(\Gamma_t\log\Gamma_t)-\tr(\Gamma_t\log\sigma_t).
\]
We treat the two terms separately.

\vspace{2mm}

\noindent \textbf{Step 1.}
Using the product rule,
\[
\frac{\dd}{\dd t}\tr(\Gamma_t\log\Gamma_t)
=\tr\big((\partial_t\Gamma_t)\log\Gamma_t\big)+\tr\big(\Gamma_t\,\partial_t\log\Gamma_t\big).
\]
We show that $\tr(\Gamma_t\,\partial_t\log\Gamma_t)=\tr(\partial_t\Gamma_t)$, which vanishes since
$\tr(\Gamma_t)=1$ for all $t$. By Lemma~\ref{lem:FrechetLog},
\[
\partial_t\log\Gamma_t = \mathrm D(\log)_{\Gamma_t}[\partial_t\Gamma_t]
= \int_0^\infty (\Gamma_t+s\1)^{-1}\,(\partial_t\Gamma_t)\,(\Gamma_t+s\1)^{-1}\,\dd s.
\]
Hence, using cyclicity of the trace,
\begin{align*}
\tr(\Gamma_t\,\partial_t\log\Gamma_t)
&=\int_0^\infty \tr\!\Big(\Gamma_t(\Gamma_t+s\1)^{-1}\,(\partial_t\Gamma_t)\,(\Gamma_t+s\1)^{-1}\Big)\,\dd s\\
&=\int_0^\infty \tr\!\Big((\Gamma_t+s\1)^{-1}\Gamma_t(\Gamma_t+s\1)^{-1}\,(\partial_t\Gamma_t)\Big)\,\dd s.
\end{align*}
Note that $(\Gamma_t+s\1)^{-1}\Gamma_t(\Gamma_t+s\1)^{-1}$ is a function of $\Gamma_t$ and thus diagonalizable in the same basis as $\Gamma_t$. It follows that 
\[
\int_0^\infty (\Gamma_t+s\1)^{-1}\Gamma_t(\Gamma_t+s\1)^{-1}\,\dd s = \1
\]
because for each scalar $x>0$,
\[
\int_0^\infty \frac{x}{(x+s)^2}\,\dd s
=1.
\]
Therefore,
\[
\tr(\Gamma_t\,\partial_t\log\Gamma_t)=\tr(\partial_t\Gamma_t).
\]
Since $\tr(\Gamma_t)=1$, we have $\tr(\partial_t\Gamma_t)=0$, and consequently
\[
\frac{\dd}{\dd t}\tr(\Gamma_t\log\Gamma_t)=\tr\big((\partial_t\Gamma_t)\log\Gamma_t\big).
\]
\textbf{Step 2.} Using the product rule again, one has
\[
\frac{\dd}{\dd t}\big(-\tr(\Gamma_t\log\sigma_t)\big)
=-\tr\big((\partial_t\Gamma_t)\log\sigma_t\big)-\tr\big(\Gamma_t\,\partial_t\log\sigma_t\big)
\]
and thus the desired identity.
\end{proof}

The following two properties of quantum relative entropy are essential for this paper.
\begin{lemma}[Quantum Pinsker]\label{lem:pinsker}
For density operators $\eta,\zeta$,
\[
\norm{\eta-\zeta}_1^2 \le 2 S(\eta,\zeta).
\]
\end{lemma}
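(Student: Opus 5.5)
The plan is to derive Pinsker's inequality from the classical one by monotonicity of relative entropy under a measurement channel. Monotonicity under completely positive trace-preserving maps was recalled at the beginning of this section.

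Let $P_+$ be the spectral projection of the self-adjoint trace-class operator $\eta-\zeta$ onto $[0,\infty)$, and set $P_-:=\1-P_+$. Then
\[
\norm{\eta-\zeta}_1=\tr\big((P_+-P_-)(\eta-\zeta)\big)=2\,\tr\big(P_+(\eta-\zeta)\big),
\]
where the second equality uses $\tr(\eta-\zeta)=0$. Consider the two-outcome measurement channel
\[
\Phi(\rho):=\tr(P_+\rho)\,\ket{0}\bra{0}+\tr(P_-\rho)\,\ket{1}\bra{1},
\]
which maps $\cD(\gH)$ into diagonal states on $\CC^2$. It is completely positive and trace preserving. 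Write $p=\tr(P_+\eta)$ and $q=\tr(P_+\zeta)$. Then $\Phi(\eta)$ and $\Phi(\zeta)$ are the Bernoulli laws $(p,1-p)$ and $(q,1-q)$, and
\[
\norm{\Phi(\eta)-\Phi(\zeta)}_1=2|p-q|=\norm{\eta-\zeta}_1 .
\]
If $S(\eta,\zeta)=+\infty$ there is nothing to prove. Otherwise, monotonicity gives
\[
S(\eta,\zeta)\ge S(\Phi(\eta),\Phi(\zeta))=p\log\frac{p}{q}+(1-p)\log\frac{1-p}{1-q}.
\]

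It then remains to prove the binary classical Pinsker inequality
\[
p\log\frac pq+(1-p)\log\frac{1-p}{1-q}\ge 2(p-q)^2 .
\]
Since $\norm{\eta-\zeta}_1^2=4(p-q)^2$, this gives exactly $\norm{\eta-\zeta}_1^2\le 2S(\eta,\zeta)$. For the binary inequality, fix $p$ and define
\[
g(q):=p\log\frac pq+(1-p)\log\frac{1-p}{1-q}-2(p-q)^2
\]
for $q\in(0,1)$. Then $g(p)=0$ and
\[
g'(q)=-\frac pq+\frac{1-p}{1-q}+4(p-q)=(q-p)\left(\frac{1}{q(1-q)}-4\right).
\]
Because $q(1-q)\le 1/4$, this derivative is nonpositive for $q<p$ and nonnegative for $q>p$. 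So $g\ge 0$. The degenerate cases $p\in\{0,1\}$ or $q\in\{0,1\}$ follow by continuity or are trivial, since $q\in\{0,1\}$ with $p\neq q$ gives infinite entropy.

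The main obstacle is justifying monotonicity in infinite dimensions with the trace-based definition of $S$. I will take it as the standard result (Lindblad/Uhlmann) already cited in the text. If a self-contained route is preferred, one can first reduce to finite-rank truncations by lower semicontinuity of $S$ and continuity of the trace norm, then apply the finite-dimensional data processing inequality.
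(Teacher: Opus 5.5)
Your argument is correct. The paper gives no proof of this lemma and only quotes it from the literature (Carlen--Lieb, Hayashi). Your reduction is the standard proof: data processing under the two-outcome measurement $\{P_+,P_-\}$ built from the positive part of $\eta-\zeta$, followed by the binary inequality $p\log\frac pq+(1-p)\log\frac{1-p}{1-q}\ge 2(p-q)^2$. Its only external input is monotonicity of $S$ under CPTP maps, which the paper also asserts without proof.

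One correction to your closing remark about a self-contained route. Lower semicontinuity of $S$ together with continuity of the trace norm does not suffice for the finite-rank reduction. Lower semicontinuity gives $S(\eta,\zeta)\le\liminf_n S(\eta_n,\zeta_n)$. Passing either $\norm{\eta_n-\zeta_n}_1^2\le 2S(\eta_n,\zeta_n)$ or the finite-dimensional data-processing inequality to the limit needs the opposite input-side bound, $\limsup_n S(\eta_n,\zeta_n)\le S(\eta,\zeta)$. That bound is a genuinely stronger fact, such as Lindblad's monotone approximation $S(P_n\eta P_n,P_n\zeta P_n)\uparrow S(\eta,\zeta)$ for finite-rank projections $P_n\uparrow\1$, stated with the relative entropy extended to unnormalized operators.

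For your particular channel there is a shorter route inside the paper, assuming $\zeta$ faithful; the other cases are trivial or reduce to this one on $(\Ker\zeta)^\perp$. Apply Lemma \ref{lem:entropy ineq} with $\lambda=1$ and $A=aP_++bP_-$. This gives $S(\eta,\zeta)\ge ap+b(1-p)-\log\big(e^aq+e^b(1-q)\big)$ for all real $a,b$. The choice $a=\log\frac pq$, $b=\log\frac{1-p}{1-q}$ (degenerate $p,q$ by letting $a,b\to\pm\infty$) yields exactly the binary relative entropy. This avoids invoking the general data-processing theorem.
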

This is the quantum analogue of Pinsker's inequality. See \cite{carlen2014remainder} and \cite{hayashi2006quantum}.

\begin{lemma}[Block subadditivity bound]\label{lem:block}
Assume $\Gamma\in\cD_s(\gH_N)$ and $\gamma\in\cD(\gH)$.
Then for each fixed $k\ge 1$,
\[
S(\Gamma^{N:k},\gamma^{\otimes k})
\le
\frac{k}{N}\,S(\Gamma,\gamma^{\otimes N}).
\]
\end{lemma}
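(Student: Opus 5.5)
The plan is to first reduce to the case where $N$ is a multiple of $k$. I will then use superadditivity of relative entropy with respect to a tensor product reference state, together with permutation symmetry.

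\textbf{Step 1 (superadditivity).} For a density operator $\rho$ on $\gH_A\otimes\gH_B$ and reference states $\sigma_A,\sigma_B$, we have
\[
S(\rho,\sigma_A\otimes\sigma_B)=S(\rho_A,\sigma_A)+S(\rho_B,\sigma_B)+S(\rho,\rho_A\otimes\rho_B)\ \ge\ S(\rho_A,\sigma_A)+S(\rho_B,\sigma_B).
\]
Here $\rho_A,\rho_B$ are the partial traces. The identity follows by expanding $\log(\sigma_A\otimes\sigma_B)=\log\sigma_A\otimes\1+\1\otimes\log\sigma_B$ and $\tr(\rho\log\rho_A\otimes\1)=\tr(\rho_A\log\rho_A)$. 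The last term is the mutual information, which is nonnegative. In infinite dimensions, where the entropies may be infinite, I would instead justify the inequality through monotonicity and a standard approximation argument: one can use Lindblad's formula, or the lower semicontinuity of $S$ together with finite-rank truncations. This technical point is the main obstacle. If the right-hand side is infinite there is nothing to prove, so the finite case is the relevant one.

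\textbf{Step 2 (blocks).} Write $N=qk+r$ with $0\le r<k$. Split the $N$ particles into $q$ blocks of size $k$ and one remainder block of size $r$. Iterating Step 1 and discarding the nonnegative remainder term gives
\[
S(\Gamma,\gamma^{\otimes N})\ge \sum_{\ell=1}^q S(\Gamma_{B_\ell},\gamma^{\otimes k}).
\]
Here $\Gamma_{B_\ell}$ is the marginal of $\Gamma$ on block $B_\ell$. By the symmetry $U_\pi\Gamma U_\pi^*=\Gamma$, each $\Gamma_{B_\ell}$ is unitarily equivalent to $\Gamma^{N:k}$, through the permutation unitary that maps $B_\ell$ to $\{1,\dots,k\}$. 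The reference state $\gamma^{\otimes k}$ is invariant under this map, so every summand equals $S(\Gamma^{N:k},\gamma^{\otimes k})$. This yields
\[
S(\Gamma^{N:k},\gamma^{\otimes k})\le \frac{1}{q}S(\Gamma,\gamma^{\otimes N}).
\]

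\textbf{Step 3 (sharp constant).} The bound in Step 2 is $1/q=1/\lfloor N/k\rfloor$, which is the claimed $k/N$ exactly when $k\mid N$. For the general case I would average over all ways of choosing blocks, rather than using a single partition. The idea is to prove the stronger statement
\[
S(\Gamma,\gamma^{\otimes N})\ge \frac{N}{k}\,S(\Gamma^{N:k},\gamma^{\otimes k}),
\]
via a Han/Shearer-type inequality for relative entropy with product references. Concretely, the chain-rule decomposition $S(\Gamma,\gamma^{\otimes N})=\sum_j[\text{conditional terms}]$ combined with strong subadditivity shows that $a_n:=S(\Gamma^{N:n},\gamma^{\otimes n})$ is superadditive and that the increments $a_n-a_{n-1}$ are nondecreasing in $n$. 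The monotone increments follow from strong subadditivity of von Neumann entropy, together with the fact that $\tr(\Gamma^{N:n}\log\gamma^{\otimes n})$ is linear in $n$ by symmetry. Nondecreasing increments make $n\mapsto a_n$ convex with $a_0=0$, so $a_n/n$ is nondecreasing, and in particular $a_k/k\le a_N/N$, which is exactly the claim.

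I would present Step 3 as the actual proof and keep Steps 1–2 as motivation. Strong subadditivity in infinite dimensions, applied to finite-entropy states, is again the main technical point; it can be handled by the finite-rank approximation mentioned in Step 1.
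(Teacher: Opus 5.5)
Your Step 3 is a correct proof, and it differs from the paper's in how strong subadditivity is used.

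\textbf{Common reduction.} Both arguments use $\log\gamma^{\otimes n}=\sum_{j\le n}(\log\gamma)_j$ and permutation symmetry to write $S(\Gamma^{N:n},\gamma^{\otimes n})=-S(\Gamma^{N:n})-n\,\tr(\Gamma^{N:1}\log\gamma)$. This reduces the lemma to the entropy inequality $S(\Gamma)/N\le S(\Gamma^{N:k})/k$.

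\textbf{The paper's route.} It obtains this inequality from a Shearer-type inequality for the cover of $\{1,\dots,N\}$ by all $k$-subsets, each index being covered $\binom{N-1}{k-1}$ times (citing Carlen--Cordero-Erausquin). Only afterwards does it use symmetry, to identify every $S(\Gamma^I)$ with $S(\Gamma^{N:k})$.

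\textbf{Your route.} You use symmetry first. Set $S_n:=S(\Gamma^{N:n})$. The marginals on $\{1,\dots,n\}$ and on $\{2,\dots,n+1\}$ both equal $\Gamma^{N:n}$. So one application of strong subadditivity with $A=\{1\}$, $B=\{2,\dots,n\}$, $C=\{n+1\}$ gives $S_{n+1}+S_{n-1}\le 2S_n$. Hence $n\mapsto S_n$ is concave with $S_0=0$, and $S_n/n$ is nonincreasing.

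\textbf{What each buys.} Your argument is more self-contained, since it needs only SSA and no fractional-cover theorem. It also yields the stronger fact that $k\mapsto \frac1k S(\Gamma^{N:k},\gamma^{\otimes k})$ is nondecreasing on $1\le k\le N$. The paper's Shearer route uses no exchangeability inside the entropy inequality itself. It therefore also gives, for non-symmetric $\Gamma$, the bound $\binom{N}{k}^{-1}\sum_{|I|=k}S(\Gamma^I,\gamma^{\otimes k})\le\frac kN S(\Gamma,\gamma^{\otimes N})$.

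Your Steps 1--2 can be dropped entirely.

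\textbf{Infinite-dimensional caveat.} The real issue is not SSA itself. It is that the decomposition above needs $S_n<\infty$ and $\tr(\Gamma^{N:1}\log\gamma)>-\infty$. With only $\gamma\in\cD(\gH)$ assumed, these can fail even when the relative entropies are finite. The paper passes over this point in the same way. Your remark about approximation is therefore an addition to its argument, not a gap in yours.
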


\begin{proof}
    Since $$\log(\gamma^{\otimes k})=\sum_{j=1}^k (\log\gamma)_j,$$ and $\Gamma\in\cD_{s}(\gH_N)$ is permutation-invariant, we have 
    \begin{equation*}
        \tr \left( \Gamma^{N:k}(\log\gamma^{\otimes k}) \right) =k\tr \left( \Gamma^{N:1}\log\gamma \right). 
    \end{equation*}
    Therefore 
    \begin{equation*}
        \frac{1}{k}S(\Gamma^{N:k},\gamma^{\otimes k})
        =-\frac{1}{k}S \left( \Gamma^{N:k} \right)  -\tr \left( \Gamma^{N:1}\log\gamma \right)\,,
    \end{equation*}
    where $S(A)=-\tr(A\log A)$ is the von Neumann entropy of $A$. It suffices to show that 
    \begin{equation*}
        S(\Gamma)\leq\frac{N}{k}S \left( \Gamma^{N:k} \right). 
    \end{equation*}
    We will use the strong subadditivity of von Neumann entropy. Let $[N]$ be the set $\{1,2,\ldots,N\}$; we denote by $[N]_k$ the family of subsets of $[N]$ such that each subset has $k$ elements. For each $I\in [N]_k$, we denote by $\Gamma^I$ the reduced density operator of $\Gamma$ on the legs in $I$. Each index $i\in[N]$ belongs to exactly $\binom{N-1}{k-1}$ subsets in $[N]_k$. Using the strong subadditivity of von Neumann entropy, we have
    \begin{equation*}
        S(\Gamma)\leq\frac{1}{\binom{N-1}{k-1}}\sum_{I\in [N]_k}S(\Gamma^I).
    \end{equation*}
    For the strong subadditivity of von Neumann entropy and the above inequality, we refer to Theorem 3.1 in \cite{carlen2008brascamp} for example. Since $\Gamma$ is permutation-invariant, we have $S(\Gamma^I)=S(\Gamma^{N:k})$ for all $I\in [N]_k$. Hence,
    \begin{equation*}
        S(\Gamma)\leq\frac{1}{\binom{N-1}{k-1}}\sum_{I\in [N]_k}S(\Gamma^I)=\frac{\binom{N}{k}}{\binom{N-1}{k-1}}S(\Gamma^{N:k})=\frac{N}{k}S(\Gamma^{N:k}),
    \end{equation*}
    which completes the proof.
\end{proof}

\bigskip
\section{Evolution of Quantum Relative Entropy: Proof of Proposition \ref{prop: evolution entropy}.}\label{sec: evolution entropy}

The purpose of this section is to compute the time derivative of $S(\Gamma_t^N,\gamma_t^{\otimes N})$. The calculation shows that the entropy production is governed exactly by the Hamiltonian defect $\delta H_N(t)$ through its commutator with $\log\gamma_t^{\otimes N}$.

\begin{proof}
\medskip
\noindent\textbf{Step 1.}
We use the shorthand $\gamma^{\otimes N}=\sigma, V^{\gamma}=V\ast\rho_{\gamma}$ and omit the subscript $t$. By Lemma~\ref{lem:RE-derivative}, 
\begin{equation}\label{eq:thm-master-step1}
\frac{\dd}{\dd t}S(\Gamma,\sigma)
=
\tr\Big((\partial_t\Gamma)(\log\Gamma-\log\sigma)\Big)
-\tr\Big(\Gamma\,\partial_t\log\sigma\Big).
\end{equation}

\medskip
\noindent\textbf{Step 2.}
From the von Neumann equation,
\[
\partial_t\Gamma = -\ii\comm{H_N}{\Gamma}.
\]
Insert into \eqref{eq:thm-master-step1}:
\begin{align}
\frac{\dd}{\dd t}S(\Gamma,\sigma)
&=
\tr\Big(-\ii\comm{H_N}{\Gamma}(\log\Gamma-\log\sigma)\Big)
-\tr\Big(\Gamma\,\partial_t\log\sigma\Big). \label{eq:thm-master-step2}
\end{align}
We claim that
\begin{equation*}
\tr\Big(-\ii\comm{H_N}{\Gamma}\,\log\Gamma\Big)=0.
\end{equation*}
Indeed, using cyclicity of the trace,
\begin{align*}
\tr\Big(-\ii\comm{H_N}{\Gamma}\,\log\Gamma\Big)
&= -\ii\tr(H_N\Gamma\log\Gamma-\Gamma H_N\log\Gamma)
= -\ii\tr(H_N\Gamma\log\Gamma-H_N\log\Gamma\,\Gamma)\\
&= -\ii\tr\big(H_N(\Gamma\log\Gamma-\log\Gamma\,\Gamma)\big)
= -\ii\tr\big(H_N\comm{\Gamma}{\log\Gamma}\big)=0,
\end{align*}
since $\Gamma$ commutes with $\log\Gamma$.
Therefore,
\begin{align}
\tr\Big(-\ii\comm{H_N}{\Gamma}(\log\Gamma-\log\sigma)\Big)
= \tr\Big(\ii\comm{H_N}{\Gamma}\,\log\sigma\Big) \notag
= -\tr\Big(\Gamma\,\ii\comm{H_N}{\log\sigma}\Big),
\end{align}
where the last equality is again cyclicity. Then, \eqref{eq:thm-master-step2} becomes
\begin{equation}\label{eq:thm-master-step2b}
\frac{\dd}{\dd t}S(\Gamma,\sigma)
=
-\tr\Big(\Gamma\,\ii\comm{H_N}{\log\sigma}\Big)
-\tr\Big(\Gamma\,\partial_t\log\sigma\Big).
\end{equation}

\medskip
\noindent\textbf{Step 3.}
Since $\sigma=\gamma^{\otimes N}$, we have the exact identity
\begin{equation}\label{eq:log-tensor-sum}
\log\sigma = \sum_{j=1}^N (\log\gamma)_j.
\end{equation}
Hence,
\begin{equation}\label{eq:dlogsigma-sum}
\partial_t\log\sigma = \sum_{j=1}^N (\partial_t\log\gamma)_j.
\end{equation}
We now compute $\partial_t\log\gamma$. By Lemma~\ref{lem:FrechetLog} and the chain rule,
\[
\partial_t\log\gamma
=\mathrm D(\log)_\gamma[\partial_t\gamma]
=
\mathrm D(\log)_\gamma\!\Big[-\ii\comm{h+V^\gamma}{\gamma}\Big].
\]
Split into Hamiltonian and dissipative contributions:
\begin{equation}\label{eq:dloggamma-split}
\partial_t\log\gamma
=
\mathrm D(\log)_\gamma\!\Big[-\ii\comm{h+V^\gamma}{\gamma}\Big]
.
\end{equation}
For the Hamiltonian part we use Lemma~\ref{lem:comm-log} in the equivalent form
\[
\mathrm D(\log)_X[\ii\comm{B}{X}] = \ii\comm{B}{\log X}.
\]
Indeed, Lemma~\ref{lem:comm-log} states exactly that $\ii\comm{B}{\log X}$ equals the integral formula for
$\mathrm D(\log)_X[\ii\comm{B}{X}]$ given by Lemma~\ref{lem:FrechetLog}.
Applying this with $X=\gamma$ and $B=h+V^\gamma$ gives
\[
\mathrm D(\log)_\gamma\!\Big[-\ii\comm{h+V^\gamma}{\gamma}\Big]
=
-\ii\comm{h+V^\gamma}{\log\gamma}.
\]
Therefore, from \eqref{eq:dloggamma-split},
\begin{equation}\label{eq:dloggamma-final}
\partial_t\log\gamma
=
-\ii\comm{h+V^\gamma}{\log\gamma}.
\end{equation}
Plugging \eqref{eq:dloggamma-final} into \eqref{eq:dlogsigma-sum} yields
\begin{align}
\partial_t\log\sigma
&=
-\ii\sum_{j=1}^N \comm{h_j+(V^\gamma)_j}{(\log\gamma)_j}\notag\\
&= -\ii\comm{H^{\mathrm{mf}}_{N}(t)}{\log\sigma},
\label{eq:dlogsigma-final}
\end{align}
where we used \eqref{eq:log-tensor-sum} and the definition
$H^{\mathrm{mf}}_{N}(t)=\sum_{j=1}^N (h_j+(V^\gamma)_j)$.

\medskip
\noindent\textbf{Step 4.}
Insert \eqref{eq:dlogsigma-final} into \eqref{eq:thm-master-step2b}:
\begin{align}
\frac{\dd}{\dd t}S(\Gamma,\sigma)
&=
-\tr\Big(\Gamma\,\ii\comm{H_N}{\log\sigma}\Big)
-\tr\Big(\Gamma\Big[-\ii\comm{H^{\mathrm{mf}}_{N}(t)}{\log\sigma}\Big]\Big)\notag\\
&=
-\tr\Big(\Gamma\,\ii\comm{H_N}{\log\sigma}\Big)
+\tr\Big(\Gamma\,\ii\comm{H^{\mathrm{mf}}_{N}(t)}{\log\sigma}\Big).\label{eq:thm-master-step4}
\end{align}
Using linearity of the commutator,
\[
-\ii\comm{H_N}{\log\sigma}+\ii\comm{H^{\mathrm{mf}}_{N}(t)}{\log\sigma}
=
-\ii\comm{H_N-H^{\mathrm{mf}}_{N}(t)}{\log\sigma}
=
-\ii\comm{\delta H_N(t)}{\log\sigma},
\]
so \eqref{eq:thm-master-step4} becomes
\begin{equation*}
\frac{\dd}{\dd t}S(\Gamma,\sigma)
=
-\tr\Big(\Gamma\,\ii\comm{\delta H_N(t)}{\log\sigma}\Big),
\end{equation*}
which is the claimed identity. 
\end{proof}

\bigskip
\section{Cancellation Rule: Proof of Propositions \ref{prop:cancellation} and \ref{prop: combinatorics}.}\label{sec: cancellation}

This section proves the two structural estimates behind the partition-function bound used in the entropy argument. First, we show that any mixed moment containing an index that appears only once vanishes because of the centering built into $X_{ij}$. Second, we count the remaining index configurations and prove that their number is small enough to conclude the uniform-in-$N$ bound for partition functions as in Lemma \ref{lem:entropy ineq}.

\begin{proof}[Proof of Proposition \ref{prop:cancellation}]
    Without loss of generality, we assume that condition (1) is satisfied. We can write
\begin{equation*}
    \tr \left( \gamma^{\otimes N}X_{i_{1}j_{1}}X_{i_{2}j_{2}}\cdots X_{i_{m}j_m} \right)=\tr \left( \gamma^{\otimes N}ZX_{i_{\nu}j_{\nu}}Y \right),
\end{equation*}
where $Y,Z$ are products of operators which do not act on the $i_{\nu}$-th variable. Taking the partial trace with respect to the $i_{\nu}$-th variable, we have
\begin{equation*}
    \tr \left( \gamma^{\otimes N}ZX_{i_{\nu}j_{\nu}}Y \right)=\tr \left( \gamma^{\otimes (N-1)}Z \tr_{i_{\nu}}\left( \gamma X_{i_{\nu}j_{\nu}} \right) Y \right).
\end{equation*}
Note that
\begin{align*}
    \tr_{i_{\nu}}\left( \gamma X_{i_{\nu}j_{\nu}} \right)=
    \begin{cases}
    -\mathrm{i}\tr_1\left( \gamma\otimes\1 \left[ V-V^{\gamma}\otimes\1,(\log\gamma)\otimes\1 \right] \right),\ {\rm if}\ {i_{\nu}<j_{\nu}};\\
    -\mathrm{i}\tr_1\left( \gamma\otimes\1 \left[ V-\1\otimes V^{\gamma},\1\otimes\log\gamma \right] \right),\ {\rm if}\ {i_{\nu}>j_{\nu}}.
    \end{cases}
\end{align*}
Under the assumption that $\gamma\log\gamma=(\log\gamma)\gamma $ is trace-class and $V$ is bounded, we have
\begin{align*}
-\mathrm{i}\tr_1\left( \gamma\otimes\1 \left[ V-V^{\gamma}\otimes\1,\log\gamma\otimes\1 \right] \right)=\mathrm{i}\tr_1 \left( [\gamma\otimes\1,\log\gamma\otimes\1](V-V^{\gamma}\otimes\1) \right) =0
\end{align*}
for the first case. For the second case, we use $V^{\gamma}=\tr_1((\gamma\otimes\1)V)$ to arrive at 
\begin{align*}
    &\tr_1\left( \gamma\otimes\1 \left[ V-\1\otimes V^{\gamma},\1\otimes\log\gamma \right] \right)\\
    &=\left[ \tr_1 \left( (\gamma\otimes\1) V \right),\log\gamma  \right]- \left[ V^{\gamma},\log\gamma \right]\\
    &=\left[ V^{\gamma},\log\gamma \right]-\left[ V^{\gamma},\log\gamma \right]=0.   
\end{align*}
Both cases yield $\tr_{i_{\nu}}\left( \gamma X_{i_{\nu}j_{\nu}} \right)=0$. This concludes the proof.
\end{proof}

\bigskip

Next, we prove the combinatorial result in Proposition \ref{prop: combinatorics}.

\begin{proof}[Proof of Proposition \ref{prop: combinatorics}.]
    Let $[2m] = \{1, \dots, 2m\}$ be the set of positions in $(I_m, J_m)$. We represent $(I_m, J_m)$ as a mapping $\phi: [2m] \to \{1, \dots, N\}$. Conditions (2) and (3) imply that for any value $x \in \text{Im}(\phi)$, $\phi^{-1}(x)$ satisfies $|\phi^{-1}(x)| \geq 2$. 

Let $k = |\text{Im}(\phi)|$ be the number of distinct index values. The total number of ways to partition $[2m]$ into $k$ blocks such that each block has size at least 2 is given by the 2-associated Stirling number of the second kind, denoted by $S_2(2m, k)$. Accounting for the choice of $k$ values and their permutations, but ignoring condition (1) for an upper bound, we have
\begin{equation*}
    |\cI_{m,N}| \leq \sum_{k=1}^m \binom{N}{k} k! S_2(2m, k) = \sum_{k=1}^m \frac{N!}{(N-k)!} S_2(2m, k).
\end{equation*}
The exponential generating function for $S_2(n, k)$ is $\frac{(e^x - 1 - x)^k}{k!}$, i.e.
\begin{equation*}
   \sum_{n=0}^{\infty} S_2(n, k) \frac{x^n}{n!} = \frac{(e^x - 1 - x)^k}{k!}.
\end{equation*}

\textbf{Case 1:} $m\leq N$.

Using the linearity of the coefficient extraction operator $[x^n]$ and the binomial theorem, we have
\begin{align*}
   |\cI_{m,N}| &\leq \sum_{k=0}^N \frac{N!}{k!(N-k)!} (2m)! [x^{2m}] (e^x - 1 - x)^k \nonumber \\
    &= (2m)! [x^{2m}] \sum_{k=0}^N \binom{N}{k} (e^x - 1 - x)^k \nonumber \\
    &= (2m)! [x^{2m}] \left( 1 + (e^x - 1 - x) \right)^N \nonumber \\
    &= (2m)! [x^{2m}] (e^x - x)^N .
\end{align*}
Let $f(x) = (e^x - x)^N$. Since the Taylor coefficients of $e^x - x$ are non-negative, all coefficients of $f(x)$ are non-negative. For any $x > 0$, we have
\begin{equation*}
    [x^{2m}] f(x) \leq \frac{(e^x - x)^N}{x^{2m}}.
\end{equation*}
Thus, for any $x > 0$, $|\cI_{m,N}|\leq (2m)! \frac{(e^x - x)^N}{x^{2m}}$.

We choose $x = \sqrt{m/N}$. Using the elementary inequality $e^x - x \leq e^{x^2}$ for $x > 0$, we have
\begin{equation*}
    (e^x - x)^N \leq e^{Nx^2}=e^{m}.
\end{equation*}
Using the Stirling bound 
$$ \sqrt{2\pi n} \left(\frac{n}{e}\right)^{n}\leq n! \leq\frac{11}{10} \sqrt{2\pi n} \left(\frac{n}{e}\right)^{n},$$
we obtain
\begin{align*}
    |\cI_{m,N}| &\leq \frac{11}{10}\sqrt{4\pi m} \left( \frac{2m}{e} \right)^{2m} \frac{e^m}{m^m}N^m \nonumber \\
    &= \frac{11}{10}\sqrt{4\pi m} \left( \frac{4m}{e} \right)^m   N^m\\ 
    &\leq \frac{11\sqrt{2}}{10} \left( 4N \right)^m m! \\
    &\leq (C_0 N)^m m!,
\end{align*}
for instance, with the admissible choice $C_0=8$. 

\textbf{Case 2:} $m > N$.

In this case, we use a rough bound. Since each index $i_\nu$ and $j_\nu$ can take $N$ possible values, we have
\begin{equation*}
    |\cI_{m,N}| \leq N^{2m} \leq \left(\frac{N^2 e}{m}\right)^m m! \leq (e N)^m m!.
\end{equation*}

This completes the proof.
\end{proof}

\bigskip
\section{Propagation of Regularity: Proof of Proposition \ref{prop: propagation of regularity}.}\label{sec: propagation}

The goal of this section is to verify that the fluctuation operator $X(t)$ remains bounded throughout the Hartree evolution. We reduce this to a uniform commutator estimate for $[\log\gamma_t,W_{y,t}]$, where $W_{y,t}$ is the centered interaction seen from a fixed particle, and propagate the estimate by using the unitary Hartree flow together with the commutator assumptions imposed on $\gamma_0$.

\begin{proof}
    Fix $y\in\Omega$, and define $W_{y,t}(x)=V(x-y)-V^{\gamma_t}(x)$. 
    It suffices to show that
    \begin{equation*}
        \sup_{y\in\Omega}\norm{[\log\gamma_t,W_{y,t}]}_{\mathrm{op}}\leq C(t).
    \end{equation*}
    We claim that there exists a constant $\tilde{C}=C_1\norm{|\nabla V|}_{L^{\infty}}+C_2$ such that for any $f\in W^{1,\infty}(\Omega)$ and $t\geq 0$, we have
    \begin{equation}\label{eq:commutator_bound}
        \norm{[\log\gamma_t,f]}_{\mathrm{op}}\leq \norm{[\log\gamma_0,f]}_{\mathrm{op}}+2\tilde{C} \norm{f}_{L^{\infty}}t.
    \end{equation}
    This, together with the assumptions on $\gamma_0$, implies the desired result. Indeed, we have 
    \begin{align*}
        \norm{[\log\gamma_t,W_{y,t}]}_{\mathrm{op}}&\leq \norm{[\log\gamma_0,W_{y,t}]}_{\mathrm{op}}+2\tilde{C}\norm{W_{y,t}}_{L^{\infty}}t\\
        &\leq C_1\norm{\nabla W_{y,t}}_{L^{\infty}}+2\tilde{C}\norm{W_{y,t}}_{L^{\infty}}t\\
        &\leq 2C_1\norm{\nabla V}_{L^{\infty}}+4\tilde{C} \norm{V}_{L^{\infty}}t.
    \end{align*}
    Here, we used Assumption \ref{asump of gamma} in the second line and Young's inequality in the last line. It remains to show the claim \eqref{eq:commutator_bound}.
    
    The quantity $\log\gamma_t$ satisfies the corresponding commutator equation, as follows from the functional calculus. Let $U_{t,s}$ be the unitary operator solving
    \begin{align*}
        \ii\partial_t U_{t,s}&= (h+\rho_{\gamma_t}*V) U_{t,s},\\
        U_{s,s}&=\1,\quad\quad U_{t,s}^\dagger=U_{s,t}.
    \end{align*}

    We have $\gamma_t=U_{t,0}\gamma_0 U_{t,0}^\dagger$ and thus $\log\gamma_t=U_{t,0}(\log\gamma_0) U_{t,0}^\dagger$. Therefore,
    \begin{equation*}
        [\log\gamma_t,f]=U_{t,0}[\log\gamma_0,f_t]U_{t,0}^\dagger,
    \end{equation*}
    where $f_t=U_{t,0}^\dagger f U_{t,0}$, which satisfies the Heisenberg equation
    \begin{equation*}
        -\ii\partial_t f_t=[h+\rho_{\gamma_t}*V,f_t].
    \end{equation*}
    We compute that 
    \begin{align*}
        \partial_t [\log\gamma_0, f_t]&=\mathrm{i} \left[ \log\gamma_0, \left[ H(t), f_t \right]  \right] \\
        &=\mathrm{i} \left[ \left[ \log\gamma_0, H(t) \right],f_t  \right] +\mathrm{i}\left[ H(t), \left[ \log\gamma_0, f_t \right]  \right] .
    \end{align*}
    Recall that $H(t)=h+\rho_{\gamma_t}*V$. Using the Duhamel formula, we obtain
    \begin{equation*}
        [\log\gamma_0,f_t]=U_{t,0}^{\dagger}[\log\gamma_0,f]U_{t,0}+\mathrm{i}\int_0^t U_{t,s}^\dagger  \left[ \left[ \log\gamma_0, h+\rho_{\gamma_s}*V \right],f_s  \right]U_{t,s} \dd s .
    \end{equation*}
    Taking the operator norm and using the assumptions on $\gamma_0$, we conclude that
    \begin{align*}
        \norm{[\log\gamma_0,f_t]}_{\mathrm{op}}&\leq \norm{[\log\gamma_0,f]}_{\mathrm{op}}\\
        &+2C_1\int_0^t \norm{\nabla V}_{L^{\infty}}\norm{\rho_{\gamma_s}}_{L^1}\norm{f_s}_{\rm op}\dd s+2C_2\int_0^t \norm{f_s}_{\rm op} \dd s\\
        &=\norm{[\log\gamma_0,f]}_{\mathrm{op}}+2 \left( C_1\norm{|\nabla V|}_{L^{\infty}}+C_2 \right)\norm{f}_{L^{\infty}}t .
    \end{align*}
    This finishes the proof of the claim \eqref{eq:commutator_bound}.
\end{proof}

\bigskip

\section{Uniformity in the Planck constant: Proof of Theorem \ref{tm: uniform in h}.}\label{sec: uniformity}

This section proves the uniform-in-$\hbar$ estimate stated in Theorem \ref{tm: uniform in h}. The argument applies the entropy bound of Theorem \ref{tm: relative entropy bound} to the rescaled Hamiltonian with $V=\hbar^{-1}\Phi$, combines it with the quantum Wasserstein/Husimi estimate from \cite{golse2016mean}, and then optimizes the two bounds with respect to $\hbar$ to obtain a convergence rate depending only on $N$, $k$, $T$, and the potential.

\begin{proof}
   Let us treat $\hbar$ as a positive parameter. Dividing both sides of the $\hbar$-dependent von Neumann equation \eqref{eq:von-Neumann eqn-2} by $\hbar$, we can absorb $\hbar$ into the Hamiltonian and write
\begin{equation*}
    \ii\partial_t\Gamma = [H_N^\hbar,\Gamma],
\end{equation*}
where
\begin{equation*}
H_N^\hbar = \sum_{j=1}^N -\hbar \Delta_j + \frac{1}{N-1} \sum_{1\leq i<j\leq N} \frac{1}{\hbar}\Phi(x_i-x_j).
\end{equation*}
The corresponding $\hbar$-dependent Hartree equation becomes
\begin{equation*}
\ii\partial_t\gamma = \left[-\hbar \Delta + \frac{1}{\hbar}\Phi\ast\rho_{\gamma},\gamma\right].
\end{equation*}
Since Theorem \ref{tm: relative entropy bound} is quite general, we can apply it to the $\hbar$-dependent equations by choosing $h=-\hbar\Delta$ and $V=\frac{1}{\hbar}\Phi$. For $t\in[0,T]$, using the explicit bound in \eqref{eq:gronwall}, we have 
\begin{equation}\label{eq:entropy-bound-hbar}
    S(\Gamma_t^N,\gamma_t^{\otimes N})\leq e^{M_{\hbar}t}S(\Gamma_0^N,\gamma_0^{\otimes N})+(\log2)e^{M_{\hbar}t}=(\log2)e^{M_{\hbar}t},
\end{equation}
where
\begin{equation*}
    M_{\hbar} = 16C_0C_2\norm{\Phi}_{L^{\infty}}T
    +\hbar^{-1}8C_0C_1\norm{\nabla\Phi}_{L^{\infty}}
    +\hbar^{-2}16C_0C_1\norm{\nabla\Phi}_{L^{\infty}}\norm{\Phi}_{L^{\infty}}T.
\end{equation*}
Here we used the fact that $\Gamma_0^N=\gamma_0^{\otimes N}$, so the first term vanishes. 

We will use a similar proof strategy as in Section 8 of \cite{golse2018derivation}. Under the assumption that $\gamma_0$ is a Toeplitz operator, we have from \cite{golse2016mean} (using formula (17) of Theorem 2.4 and point (2) of Theorem 2.3) that
\begin{equation}\label{eq:wasserstein-bound-hbar}
    \mathrm{dist}_{\mathrm{MK},2}\left( \mathscr{H}_{\hbar}[\Gamma_t^{N:k}],\mathscr{H}_{\hbar}[\gamma_t^{\otimes k}] \right)^2\leq k \left( 2d\hbar+\frac{C}{N} \right)e^{\Lambda t} +2kd\hbar, 
\end{equation}
where 
\begin{equation*}
    \Lambda:=3+4\mathrm{Lip}(\nabla\Phi)^2\quad \text{and}\quad C:=\frac{8\norm{\nabla\Phi}_{L^{\infty}}}{\Lambda}.
\end{equation*}
The distance $\mathrm{dist}_{\mathrm{MK},2}$ is the Wasserstein distance of order 2, and $\mathscr{H}_{\hbar}[\cdot]$ is the Husimi transform defined in \eqref{eq:Husimitrans}. We recall that, for each pair of Borel probability measures on $T^*(\Omega^k)$ with finite second moments, the Wasserstein distance of order 2 is defined by
\begin{equation*}
    \mathrm{dist}_{\mathrm{MK},2}(\mu,\nu) = \left( \inf_{\pi\in\Pi(\mu,\nu)} \int_{(T^*(\Omega^k))^2} |z-z'|^2 \pi(\dd z,\dd z') \right)^{1/2}.
\end{equation*}
For fixed $\hbar>0$, \eqref{eq:entropy-bound-hbar} is a powerful bound, while the right-hand side blows up as $\hbar\to 0$. On the other hand, the bound on the Wasserstein distance \eqref{eq:wasserstein-bound-hbar} never vanishes for fixed $\hbar$, but it becomes small as $\hbar\to 0$. We will combine these two bounds.

For $t\in[0,T]$, set
\begin{equation*}
    E(N,\hbar,k,t):=\min\left\{S(\Gamma_t^{N:k},\gamma_t^{\otimes k})\quad,    \mathrm{dist}_{\mathrm{MK},2}\left( \mathscr{H}_{\hbar}[\Gamma_t^{N:k}],\mathscr{H}_{\hbar}[\gamma_t^{\otimes k}] \right)^2\right\}.
\end{equation*}
Using the block subadditivity bound in Lemma \ref{lem:block}, we have
\begin{equation*}
    E(N,\hbar,k,t)\leq\min\left\{\frac{k\log 2}{N}e^{M_{\hbar}t}\quad,2kd(e^{\Lambda t}+1)\hbar+\frac{kCe^{\Lambda t}}{N}\right\}.
\end{equation*}
For simplicity, we write $M_{\hbar}=M^{(0)}+\hbar^{-1}M^{(1)}+\hbar^{-2}M^{(2)}$, where 
\begin{equation*}
    M^{(0)}=16C_0C_2\norm{\Phi}_{L^{\infty}}T,
    \quad M^{(1)}= 8C_0C_1\norm{\nabla\Phi}_{L^{\infty}},
    \quad M^{(2)}=16C_0C_1\norm{\nabla\Phi}_{L^{\infty}}\norm{\Phi}_{L^{\infty}}T,
\end{equation*}
and define two functions:
\begin{equation*}
    f(\hbar,t):=\frac{k\log 2}{N}\exp\left(M^{(0)}t+\frac{M^{(1)}t}{\hbar}+\frac{M^{(2)}t}{\hbar^2}\right),\quad g(\hbar,t):=2kd(e^{\Lambda t}+1)\hbar+\frac{kCe^{\Lambda t}}{N}.
\end{equation*}
We have
\begin{equation*}
   \sup_{\hbar>0}\sup_{t\in[0,T]} E(N,\hbar,k,t)\leq \sup_{\hbar>0}\sup_{t\in(0,T]} \min\{f(\hbar,t),g(\hbar,t)\},
\end{equation*}
since $E(N,\hbar,k,0)=0$. 

For fixed $\hbar>0$, both $f(\hbar,t)$ and $g(\hbar,t)$ are increasing functions of $t$. On the other hand, for fixed $t\in(0,T]$, $f(\hbar,t)$ is a decreasing function of $\hbar$ while $g(\hbar,t)$ is an increasing function of $\hbar$. Therefore, for each fixed $t\in(0,T]$, there exists a unique $\hbar_t>0$ such that $f(\hbar_t,t)=g(\hbar_t,t)$. More specifically, we have
\begin{equation*}
    \sup_{\hbar>0}\sup_{t\in(0,T]}\min\{f(\hbar,t),g(\hbar,t)\}\leq f(\hbar_T,T)=g(\hbar_T,T),
\end{equation*}
where $\hbar_T$ is the unique solution to the equation $f(\hbar_T,T)=g(\hbar_T,T)$, i.e.
\begin{equation*}
    \frac{k\log 2}{N}\exp\left(M^{(0)}T+\frac{M^{(1)}T}{\hbar_T}+\frac{M^{(2)}T}{\hbar_T^2}\right)=2kd(e^{\Lambda T}+1)\hbar_T+\frac{kCe^{\Lambda T}}{N}.
\end{equation*}
This equation implies that $\hbar_T\sim(\log N)^{-1/2}$. To obtain an explicit bound, we claim that $\hbar_T<\sqrt{\frac{2M^{(2)}T}{\log N}}$ for $N$ large enough. Indeed, we can check that 
\begin{equation*}
    f\left(\sqrt{\frac{2M^{(2)}T}{\log N}},T\right)=\frac{k\log2}{\sqrt{N}}\exp \left( M^{(0)}T+M^{(1)}\sqrt{\frac{T\log N}{2M^{(2)}}} \right) ,
\end{equation*}
which is smaller than 
\begin{equation*}
    g\left(\sqrt{\frac{2M^{(2)}T}{\log N}},T\right)=2kd(e^{\Lambda T}+1)\sqrt{\frac{2M^{(2)}T}{\log N}}+\frac{kCe^{\Lambda T}}{N}
\end{equation*}
for $N$ large enough. This, together with the monotonicity of $f$ and $g$, implies that $\hbar_T<\sqrt{\frac{2M^{(2)}T}{\log N}}$. Therefore, we have
\begin{align*}
    \sup_{\hbar>0}\sup_{t\in[0,T]} E(N,\hbar,k,t)&\leq g\left(\sqrt{\frac{2M^{(2)}T}{\log N}},T\right)=2kd(e^{\Lambda T}+1)\sqrt{\frac{2M^{(2)}T}{\log N}}+\frac{kCe^{\Lambda T}}{N}\\
    &\lesssim \frac{kdT\left(\norm{\nabla\Phi}_{L^{\infty}}\norm{\Phi}_{L^{\infty}}\right)^{\frac{1}{2}}\left( e^{\Lambda T} +1\right) }{\left( \log N \right)^{\frac{1}{2}} }.
\end{align*}
Using the fact that $\mathrm{dist}_1(\mu,\nu)$ is bounded above by both $\mathrm{dist}_{\mathrm{MK},2}(\mu,\nu)$ and $\norm{\mu-\nu}_{\mathrm{TV}}$ for any pair of probability measures $\mu,\nu$ on $T^*(\Omega^k)$ (see Lemma 8.2 in \cite{golse2018derivation}), we conclude that
\begin{align*}
    &\sup_{\hbar>0}\sup_{t\in[0,T]} \mathrm{dist}_1\left( \mathscr{H}_{\hbar}[\Gamma_t^{N:k}],\mathscr{H}_{\hbar}[\gamma_t^{\otimes k}] \right)^2\\
    &\leq \sup_{\hbar>0}\sup_{t\in[0,T]}\min\left\{ \norm{\mathscr{H}_{\hbar}[\Gamma_t^{N:k}]-\mathscr{H}_{\hbar}[\gamma_t^{\otimes k}]}_{\mathrm{TV}}^2,  \mathrm{dist}_{\mathrm{MK},2}\left( \mathscr{H}_{\hbar}[\Gamma_t^{N:k}],\mathscr{H}_{\hbar}[\gamma_t^{\otimes k}] \right)^2\right\}.
\end{align*}
Moreover, the resolution of the identity \eqref{eq:resolution of identity} implies that $\norm{\mathscr{H}_{\hbar}[\Gamma_t^{N:k}]-\mathscr{H}_{\hbar}[\gamma_t^{\otimes k}]}_{\mathrm{TV}}$ is bounded by $\norm{\Gamma_t^{N:k}-\gamma_t^{\otimes k}}_{\mathrm{tr}}$, which is in turn bounded by $\sqrt{2S(\Gamma_t^{N:k},\gamma_t^{\otimes k})}$ by quantum Pinsker's inequality in Lemma \ref{lem:pinsker}. Therefore, we have
\begin{align*}
    &\sup_{\hbar>0}\sup_{t\in[0,T]} \mathrm{dist}_1\left( \mathscr{H}_{\hbar}[\Gamma_t^{N:k}],\mathscr{H}_{\hbar}[\gamma_t^{\otimes k}] \right)^2\lesssim \sup_{\hbar>0}\sup_{t\in[0,T]}E(N,\hbar,k,t)\\
    &\lesssim \frac{kdT\left(\norm{\nabla\Phi}_{L^{\infty}}\norm{\Phi}_{L^{\infty}}\right)^{\frac{1}{2}}\left( e^{\Lambda T} +1\right) }{\left( \log N \right)^{\frac{1}{2}} }\lesssim \frac{kdT\norm{\Phi}_{W^{1,\infty}}\left( e^{\Lambda T} +1\right) }{\left( \log N \right)^{\frac{1}{2}} }.
\end{align*}
This completes the proof.
\end{proof}

\bigskip

\section{Open Systems and Lindblad Framework: Proof of Theorem \ref{thm: finite main}.}\label{sec: Lindblad}

We follow the same strategy as in the proof of Theorem \ref{tm: relative entropy bound} given in Section \ref{sec: relative entropy proof}. The main difference is that we need to handle the dissipative part of the Lindblad equation. We need only modify Propositions \ref{prop: evolution entropy} and \ref{prop: propagation of regularity}.

Let $\delta H_N(t)$ be the Hamiltonian defect defined by
\begin{equation*}
    \delta H_N(t):=\frac{1}{N-1}\sum_{1\leq i<j\leq N} W_{ij}-\sum_{j=1}^N V^{\gamma_t}_j.
\end{equation*}
We have the following result, which is similar to Proposition \ref{prop: evolution entropy}.

\begin{proposition}\label{prop: finite entropy production}
Let $\Gamma_t^N$ solve \eqref{eq:finite-Nbody-Lindblad}, and let $\gamma_t$ solve \eqref{eq:finite-Hartree-Lindblad}. Then
\begin{equation}\label{eq:finite-entropy-production}
    \frac{\dd}{\dd t}S(\Gamma_t^N,\gamma_t^{\otimes N})
    \leq
    \tr_{\gH_N}\big(\Gamma_t^N A_t^N\big),
\end{equation}
where $\sigma_t:=\gamma_t^{\otimes N}$ and
\begin{equation*}
    A_t^N:=-\ii[\delta H_N(t),\log\sigma_t].
\end{equation*}
Moreover, we can rewrite $A_t^N$ as
\begin{equation*}
    A_t^N=\frac{1}{N-1}\sum_{1\leq i\neq j\leq N}X_{ij}(t),
\end{equation*}
where the two-body fluctuation operator $X(t)$ on $\gH\otimes\gH$ is defined by
\begin{equation}\label{eq:finite-def-X}
    X(t):=-\ii \left[ W-V^{\gamma_t}\otimes\1, \ (\log\gamma_t)\otimes\1 \right] .
\end{equation}
\end{proposition}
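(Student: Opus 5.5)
We follow the proof of Proposition \ref{prop: evolution entropy}, writing $\Gamma=\Gamma_t^N$, $\sigma=\sigma_t$ and $\gamma=\gamma_t$. Since $\gH$ is finite-dimensional and $\gamma_t\geq m_0\1$ by Remark \ref{rmk:finite-uniform-lower-bound-gammaT}, all operators are matrices and $\log\sigma$ is well defined. To avoid issues with $\Gamma$ possibly not being faithful, we may first replace $\Gamma_0^N$ by $(1-\eps)\Gamma_0^N+\eps\,\1/\dim\gH_N$ and let $\eps\to0$ at the end. Alternatively, we can simply use $\tr(\Gamma\,\partial_t\log\Gamma)=\tr(\partial_t\Gamma)=0$ restricted to the support of $\Gamma$.

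By Lemma \ref{lem:RE-derivative},
\[
\frac{\dd}{\dd t}S(\Gamma,\sigma)=\tr\big((\partial_t\Gamma)(\log\Gamma-\log\sigma)\big)-\tr(\Gamma\,\partial_t\log\sigma).
\]
We split $\partial_t\Gamma=-\ii[H_N,\Gamma]+\mathcal{D}_N(\Gamma)$, where $\mathcal{D}_N=\sum_j\cL_{L_j}$. We split $\partial_t\gamma$ in the same way into $-\ii[h+V^\gamma,\gamma]$ and $\cL_L(\gamma)$. By Lemma \ref{lem:FrechetLog}, Lemma \ref{lem:comm-log} and $\log\sigma=\sum_j(\log\gamma)_j$, we get
\[
\partial_t\log\sigma=-\ii[H^{\rm mf}_N,\log\sigma]+\mathrm D(\log)_\sigma[\mathcal D_N(\sigma)],
\]
using $\sum_j \mathrm D(\log)_\gamma[\cL_L\gamma]_j=\mathrm D(\log)_\sigma[\mathcal D_N\sigma]$, which holds because $\mathrm D(\log)$ of a tensor product acts legwise. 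The Hamiltonian contributions combine exactly as in Steps 2--4 of Proposition \ref{prop: evolution entropy} and give $\tr(\Gamma A_t^N)$. What remains is the dissipative term
\[
R:=\tr\big(\mathcal D_N(\Gamma)(\log\Gamma-\log\sigma)\big)-\tr\big(\Gamma\,\mathrm D(\log)_\sigma[\mathcal D_N(\sigma)]\big),
\]
and the claim reduces to showing $R\leq 0$.

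The key observation is that $R$ is the derivative at $s=0$ of $S(e^{s\mathcal D_N}\Gamma,e^{s\mathcal D_N}\sigma)$. Indeed, the derivative formula of Lemma \ref{lem:RE-derivative} applied to the pair $(e^{s\mathcal D_N}\Gamma,e^{s\mathcal D_N}\sigma)$ at $s=0$ gives exactly $R$. Since $\mathcal D_N$ is a Lindblad generator, $e^{s\mathcal D_N}$ is CPTP for $s\geq0$. By the monotonicity of relative entropy under CPTP maps, recalled in Section \ref{sec: QRE}, the function $s\mapsto S(e^{s\mathcal D_N}\Gamma,e^{s\mathcal D_N}\sigma)$ is nonincreasing, so $R\leq0$. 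Differentiability in $s$ follows from finite dimensionality and the faithfulness of $e^{s\mathcal D_N}\sigma$ for small $s$. This is the step I expect to require the most care: we must make sure the one-sided derivative equals the formula, in particular on the support of $\Gamma$ when $\Gamma$ is not faithful, which is where the regularization above is used. The hypothesis $LL^*=L^*L$ is not needed for this inequality. It is presumably needed later, when propagating regularity in $C_W^{(m_0)}$.

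Finally, the rewriting $A_t^N=\frac{1}{N-1}\sum_{i\neq j}X_{ij}(t)$ is the same algebra as in Section \ref{sec: relative entropy proof}, with $V(x_i-x_j)$ replaced by $W_{ij}$. Only legs $i$ and $j$ of $[W_{ij},\log\sigma]$ survive, which gives
\[
[W_{ij},(\log\gamma)_i]+[W_{ij},(\log\gamma)_j].
\]
We distribute $\sum_i\ii[V^\gamma_i,(\log\gamma)_i]=\frac{1}{N-1}\sum_{i<j}(\cdots)_i+(\cdots)_j$ over pairs. The exchange symmetry $\mathsf SW\mathsf S=W$ then identifies the $(\log\gamma)_j$ term with the ordered pair $(j,i)$ in the form \eqref{eq:finite-def-X}.
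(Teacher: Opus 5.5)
Your proof is correct, but it is organized differently from the paper's. The paper never differentiates $\log\sigma_t$ in the open setting. It merges the mean-field Hamiltonian and the dissipator into one common generator $\widetilde{\cL}_{N,t}$, so that $\Gamma_t^N$ obeys the same equation as $\sigma_t$ up to the perturbation $-\ii[\delta H_N(t),\cdot\,]$. It then proves an abstract comparison lemma. With $\Phi_{t+h,t}$ the CPTP propagator of $\widetilde{\cL}_{N,t}$, one writes $\Gamma_{t+h}=\Phi_{t+h,t}(\Gamma_t-\ii h[\delta H_N,\Gamma_t])+o(h)$. Monotonicity gives $S(\Gamma_{t+h},\sigma_{t+h})\le S(\Gamma_t-\ii h[\delta H_N,\Gamma_t],\sigma_t)+o(h)$, and only the first argument is then differentiated.

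You instead differentiate both arguments exactly with Lemma \ref{lem:RE-derivative}. You reuse Steps 2--4 of Proposition \ref{prop: evolution entropy} to obtain $\tr(\Gamma A_t^N)$ as an identity. The only loss is then the dissipative remainder $R=\frac{\dd}{\dd s}\big|_{s=0}S(e^{s\mathcal D_N}\Gamma,e^{s\mathcal D_N}\sigma)$, which is $\le 0$ by monotonicity.

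Both arguments rest on the data-processing inequality, and each buys something different:
\begin{itemize}
\item Your route applies monotonicity to a time-independent semigroup. No time-ordered propagator or $o(h)$ error inside the relative entropy has to be controlled. It also shows plainly that the Hamiltonian part is exact and the shared dissipator can only lower the entropy.
\item The paper's lemma is more abstract: it works for any common, possibly time-dependent, CPTP generator, and it never needs $\partial_t\log\sigma_t$.
\end{itemize}

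Your $\eps$-regularization handles a point the paper simply assumes away, namely faithfulness of $\Gamma_t^N$. Two small additions complete it. First, Lindblad evolution preserves faithfulness for finite times: $\Gamma_t\ge e^{tK}\Gamma_0e^{tK^*}$ with $K=-\ii H_N-\frac12\sum_jL_j^*L_j$ and $e^{tK}$ invertible. Second, the limit $\eps\to0$ should be taken in the time-integrated inequality, not in the derivative.

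Your comment on $LL^*=L^*L$ is accurate. It is not used here; it enters only through Remark \ref{rmk:finite-uniform-lower-bound-gammaT}, since $\cL_L(\1)=LL^*-L^*L$.
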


\begin{proof}
    We first prove the entropy production estimate. The decomposition of $A_t^N$ follows by the same algebraic computation as in the discussion following Proposition \ref{prop: evolution entropy}, with $V(x_i-x_j)$ replaced by $W_{ij}$ and $V\ast\rho_{\gamma_t}$ replaced by $V^{\gamma_t}$.

    \medskip
    \noindent\textbf{Step 1.} Define $\sigma_t=\gamma_t^{\otimes N}$. We claim that
    \begin{equation*}
    \partial_t \sigma_t
    =
    -\ii[H_N^{\rm mf}(t),\sigma_t]
    +
    \sum_{j=1}^N \cL_{L_j}(\sigma_t),
\end{equation*}
where
\begin{equation*}
    H_N^{\rm mf}(t):=\sum_{j=1}^N \big(h_j+(V^{\gamma_t})_j\big).
\end{equation*}
Indeed, by the product rule,
\[
\partial_t \sigma_t
=
\sum_{j=1}^N
\gamma_t^{\otimes(j-1)}\otimes(\partial_t\gamma_t)\otimes\gamma_t^{\otimes(N-j)}.
\]
Using \eqref{eq:finite-Hartree-Lindblad},
\[
\partial_t \sigma_t
=
\sum_{j=1}^N
\gamma_t^{\otimes(j-1)}
\otimes
\Big(-\ii[h+V^{\gamma_t},\gamma_t]+\cL_L(\gamma_t)\Big)
\otimes
\gamma_t^{\otimes(N-j)}.
\]
The Hamiltonian part is
\[
-\ii[H_N^{\rm mf}(t),\sigma_t],
\]
while the dissipative part is
\[
\sum_{j=1}^N \cL_{L_j}(\sigma_t).
\]
This proves the claim.

\medskip
\noindent\textbf{Step 2.} Define the common generator
\begin{equation*}
    \widetilde{\cL}_{N,t}(A)
    :=
    -\ii[H_N^{\rm mf}(t),A]+\sum_{j=1}^N \cL_{L_j}(A).
\end{equation*}
Then
\[
\partial_t \sigma_t=\widetilde{\cL}_{N,t}(\sigma_t),
\]
while
\begin{equation*}
    \partial_t\Gamma_t^N
    =
    \widetilde{\cL}_{N,t}(\Gamma_t^N)-\ii[\delta H_N(t),\Gamma_t^N].
\end{equation*}

\begin{lemma}\label{lem: finite entropy comparison}
Let $t\mapsto \eta_t$ and $t\mapsto \zeta_t$ be sufficiently regular families of faithful density operators, and assume
\begin{align*}
    \partial_t\eta_t &= \widetilde{\cL}_t(\eta_t)+K_t(\eta_t), \\
    \partial_t\zeta_t &= \widetilde{\cL}_t(\zeta_t), 
\end{align*}
where $\widetilde{\cL}_t$ generates a completely positive trace-preserving flow, and where
\[
K_t(A)=-\ii[B_t,A]
\]
for some bounded self-adjoint operator $B_t$. Then
\begin{equation*}
    \frac{\dd}{\dd t}S(\eta_t,\zeta_t)
    \leq
    \tr\big(\eta_t(-\ii[B_t,\log\zeta_t])\big).
\end{equation*}
\end{lemma}
Apply Lemma \ref{lem: finite entropy comparison} with
\[
\eta_t=\Gamma_t^N,\qquad \zeta_t=\sigma_t,\qquad B_t=\delta H_N(t),
\]
and with common generator $\widetilde{\cL}_{N,t}$. This yields \eqref{eq:finite-entropy-production}.
\end{proof}

\medskip
\begin{proof}[Proof of Lemma \ref{lem: finite entropy comparison}.]
    Fix $t$ and let $\Phi_{t+h,t}$ be the completely positive trace-preserving propagator associated with
\[
\partial_s X_s=\widetilde{\cL}_s(X_s),\qquad s\in [t,t+h].
\]
Then
\[
\zeta_{t+h}=\Phi_{t+h,t}(\zeta_t),
\]
and
\[
\eta_{t+h}
=
\Phi_{t+h,t}\big(\eta_t+hK_t(\eta_t)\big)+o(h)
\]
in trace norm.
By monotonicity of relative entropy under completely positive trace-preserving maps,
\[
S(\eta_{t+h},\zeta_{t+h})
\leq
S(\eta_t+hK_t(\eta_t),\zeta_t)+o(h).
\]
Subtracting $S(\eta_t,\zeta_t)$, dividing by $h$, and letting $h\to 0^+$, Lemma \ref{lem:RE-derivative} gives
\[
\frac{\dd}{\dd t}S(\eta_t,\zeta_t)
\leq
\tr\big(K_t(\eta_t)(\log\eta_t-\log\zeta_t)\big).
\]
Now $K_t(\eta_t)=-\ii[B_t,\eta_t]$, and
\[
\tr\big(-\ii[B_t,\eta_t]\log\eta_t\big)=0
\]
because $\eta_t$ commutes with $\log\eta_t$. Therefore
\[
\frac{\dd}{\dd t}S(\eta_t,\zeta_t)
\leq
-\tr\big(\ii[B_t,\eta_t]\log\zeta_t\big)
=
\tr\big(\eta_t(-\ii[B_t,\log\zeta_t])\big).
\]
\end{proof}

Next, we propagate the regularity of the two-body fluctuation operator $X(t)$. The following result is analogous to Proposition \ref{prop: propagation of regularity}.

\begin{proposition}\label{prop: finite X bound}
Let $X(t)$ be defined by \eqref{eq:finite-def-X}. Then
\begin{equation*}
     \sup_{t\geq0}\norm{X(t)}_{\rm op}
    \leq
    C_W^{(m_0)}
    \leq
    \frac{4\norm{W}_{\rm op}}{m_0},
\end{equation*}
where $C_W^{(m_0)}$ is defined by \eqref{eq:finite-def-CW-general}.
\end{proposition}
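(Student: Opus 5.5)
The first inequality $\sup_{t\ge0}\norm{X(t)}_{\rm op}\le C_W^{(m_0)}$ is a direct consequence of the lower bound on $\gamma_t$. By Remark \ref{rmk:finite-uniform-lower-bound-gammaT}, the assumption $\gamma_0\geq m_0\1$ propagates, so $\gamma_t\geq m_0\1$ for all $t\geq0$. Since $\gamma_t\in\cD(\gH)$ for every $t$ (the Hartree--Lindblad flow preserves trace and positivity), $\gamma_t$ belongs to the admissible set in the supremum \eqref{eq:finite-def-CW-general}. Comparing with \eqref{eq:finite-def-X}, $X(t)$ is exactly the operator appearing inside that supremum with $\gamma=\gamma_t$. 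Hence $\norm{X(t)}_{\rm op}\le C_W^{(m_0)}$ for every $t$. No propagation-of-regularity argument in the style of Proposition \ref{prop: propagation of regularity} is needed: in finite dimension the lower bound alone gives a uniform control.

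For the explicit bound, fix $\gamma\in\cD(\gH)$ with $\gamma\geq m_0\1$ and set $B:=W-V^\gamma\otimes\1$. First, $\norm{V^\gamma}_{\rm op}=\norm{\tr_2((\1\otimes\gamma)W)}_{\rm op}\le\norm{W}_{\rm op}$. This holds because the partial trace against a density operator is a contraction in operator norm: for unit vectors $u,v$, $\langle u,V^\gamma v\rangle=\tr((\ket{v}\bra{u}\otimes\gamma)W)$, and $\norm{\ket{v}\bra{u}\otimes\gamma}_1=1$. Hence $\norm{B}_{\rm op}\le2\norm{W}_{\rm op}$, and $B$ is self-adjoint because $V^\gamma$ is, which follows from $W=W^*$ and $\gamma=\gamma^*$. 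Next, $(\log\gamma)\otimes\1=\log(\gamma\otimes\1)$, and $\gamma\otimes\1\geq m_0\1$ on the finite-dimensional space $\gH\otimes\gH$. This lets us apply Lemma \ref{lem:comm-log} with $X=\gamma\otimes\1$; faithfulness and the trace normalization play no role in that integral identity, which is pure functional calculus. We get
\[
\ii[B,(\log\gamma)\otimes\1]=\int_0^\infty(\gamma\otimes\1+s)^{-1}\,\ii[B,\gamma\otimes\1]\,(\gamma\otimes\1+s)^{-1}\dd s .
\]

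The step needing care is estimating this integral. Bounding $\norm{[B,\gamma\otimes\1]}_{\rm op}\le2\norm{B}_{\rm op}\norm{\gamma}_{\rm op}\le2\norm{B}_{\rm op}$, using $\norm{\gamma}_{\rm op}\le\tr\gamma=1$, and $\norm{(\gamma\otimes\1+s)^{-1}}_{\rm op}\le(m_0+s)^{-1}$ gives
\[
\norm{X}_{\rm op}\le2\norm{B}_{\rm op}\int_0^\infty\frac{\dd s}{(m_0+s)^2}=\frac{2\norm{B}_{\rm op}}{m_0}\le\frac{4\norm{W}_{\rm op}}{m_0}.
\]
Taking the supremum over $\gamma$ yields $C_W^{(m_0)}\le 4\norm{W}_{\rm op}/m_0<\infty$. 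The only place this could go wrong is the claimed constant: the crude estimate used here has exactly the right numerics ($2\cdot2=4$), so I would keep it rather than a sharper spectral-gap argument. I would also check that Lemma \ref{lem:comm-log} is legitimately applied to the non-normalized operator $\gamma\otimes\1$, which is fine in finite dimension since only positivity and invertibility enter the integral representation \eqref{eq:matrix-log-rep}.
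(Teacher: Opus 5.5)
Your proposal is correct and follows the paper's proof essentially step for step. The first inequality comes from $\gamma_t\in\cD(\gH)$, $\gamma_t\geq m_0\1$ and the definition of $C_W^{(m_0)}$. The explicit bound comes from Lemma~\ref{lem:comm-log} applied to $\gamma\otimes\1$ and $B=W-V^{\gamma}\otimes\1$, which gives $\frac{1}{m_0}\cdot 2\norm{\gamma}_{\rm op}\norm{B}_{\rm op}\leq 4\norm{W}_{\rm op}/m_0$.

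Your extra remarks fill in two points the paper uses without comment: that the partial trace against a density operator is an operator-norm contraction, and that Lemma~\ref{lem:comm-log} only needs positivity and invertibility of $\gamma\otimes\1$, not trace normalization.
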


\begin{proof}
The first inequality follows directly from the definition of $C_W^{(m_0)}$ and from the uniform lower bound $\gamma_t\geq m_0\1$, as noted in Remark \ref{rmk:finite-uniform-lower-bound-gammaT}. It remains to prove the explicit estimate in terms of $\norm{W}_{\rm op}$.

Using Lemma \ref{lem:comm-log} with $X=\gamma_t\otimes\1$ and $B=W-V^{\gamma_t}\otimes\1$, we get
\begin{align*}
\norm{X(t)}_{\rm op}
&=
\norm{
-\ii\left[
W-V^{\gamma_t}\otimes\1,\,
(\log\gamma_t)\otimes\1
\right]
}_{\rm op}\\
& \leq
\int_0^\infty
\norm{(\gamma_t+s\1)^{-1}}_{\rm op}^2
\,
\norm{
[\gamma_t\otimes\1,\,
W-V^{\gamma_t}\otimes\1]
}_{\rm op}
\,\dd s \\
&\leq
\frac{1}{m_0}
\norm{
[\gamma_t\otimes\1,\,
W-V^{\gamma_t}\otimes\1]
}_{\rm op}.
\end{align*}
Since $\norm{\gamma_t}_{\rm op}\leq1$ and
$\norm{V^{\gamma_t}}_{\rm op}
=
\norm{\tr_2((\1\otimes\gamma_t)W)}_{\rm op}
\leq
\norm{W}_{\rm op}$, we have
\[
\norm{
[\gamma_t\otimes\1,\,
W-V^{\gamma_t}\otimes\1]
}_{\rm op}
\leq
2\norm{\gamma_t\otimes\1}_{\rm op}
\norm{W-V^{\gamma_t}\otimes\1}_{\rm op}
\leq
4\norm{W}_{\rm op}.
\]
Combining the two estimates gives
\[
    \norm{X(t)}_{\rm op}
    \leq
    \frac{4\norm{W}_{\rm op}}{m_0}.
\]

This proves the claim.

\end{proof}

\bigskip

Now we are ready to prove Theorem \ref{thm: finite main}.

\begin{proof}[Proof of Theorem \ref{thm: finite main}.]
By Proposition \ref{prop: finite entropy production}, we have
\[
    \frac{\dd}{\dd t}S(\Gamma_t^N,\gamma_t^{\otimes N})
    \leq
    \tr_{\gH_N}(\Gamma_t^N A_t^N),
    \qquad
    A_t^N=\frac{1}{N-1}\sum_{1\leq i\neq j\leq N}X_{ij}(t).
\]
The cancellation rule used in Proposition \ref{prop:cancellation} remains valid in the present finite-dimensional setting. Indeed, if the first tensor factor of one copy of $X$ appears only once, then
\[
\tr_1\left(
(\gamma\otimes\1)
\left[
W-V^\gamma\otimes\1,\,
(\log\gamma)\otimes\1
\right]
\right)
=
\tr_1\left(
[\gamma\otimes\1,(\log\gamma)\otimes\1]
(W-V^\gamma\otimes\1)
\right)
=0.
\]
If the second tensor factor appears only once, then using the exchange symmetry of $W$ and the definition of $V^\gamma$ gives
\[
\tr_2\left(
(\1\otimes\gamma)
\left[
W-\1\otimes V^\gamma,\,
\1\otimes\log\gamma
\right]
\right)
=
[V^\gamma,\log\gamma]-[V^\gamma,\log\gamma]
=0.
\]
Thus the same mixed-moment cancellation and the same combinatorial estimate as in Proposition \ref{prop: combinatorics} apply. Consequently, for every $m\geq0$,
\[
    \left|
    \tr\left(\gamma_t^{\otimes N}(A_t^N)^m\right)
    \right|
    \leq
    \left(2C_0\norm{X(t)}_{\rm op}\right)^m m!.
\]
The entropy inequality Lemma \ref{lem:entropy ineq}, with
$\lambda=(4C_0\norm{X(t)}_{\rm op})^{-1}$, then gives
\[
    \frac{\dd}{\dd t}S(\Gamma_t^N,\gamma_t^{\otimes N})
    \leq
    4C_0\norm{X(t)}_{\rm op}
    S(\Gamma_t^N,\gamma_t^{\otimes N})
    +
    4C_0(\log2)\norm{X(t)}_{\rm op}.
\]
Using Proposition \ref{prop: finite X bound} and the admissible choice $C_0=8$ from Proposition \ref{prop: combinatorics}, we obtain 
\[
    \frac{\dd}{\dd t}S(\Gamma_t^N,\gamma_t^{\otimes N})
    \leq
    32 C_W^{(m_0)}
    S(\Gamma_t^N,\gamma_t^{\otimes N})
    +
    32(\log2)C_W^{(m_0)}.
\]
A Gronwall argument yields
\[
    S(\Gamma_t^N,\gamma_t^{\otimes N})
    \leq
    e^{32 C_W^{(m_0)}t}
    \Big(S(\Gamma_0^N,\gamma_0^{\otimes N})+\log2\Big).
\]

Finally, using $C_W^{(m_0)}\leq 4\norm{W}_{\rm op}/m_0$ gives the explicit bound stated in the theorem.
\end{proof}


\end{document}